\newtcolorbox[auto counter]{mybox}[2]{
enhanced,
breakable,
label=#1,
colback=blue!5!white,
colframe=blue!75!black,
fonttitle=\bfseries,
title=Box \thetcbcounter: #2
}
\newtheorem*{rep@theorem}{\rep@title}
\newcommand{\newreptheorem}[2]{%
\newenvironment{rep#1}[1]{%
\def\rep@title{#2 \ref{##1}}%
\begin{rep@theorem}}%
{\end{rep@theorem}}}
\newtheorem{theorem}{Theorem}
\newtheorem{lemma}{Lemma}
\newtheorem{question}{Question}
\newtheorem{corollary}{Corollary}
\newtheorem{definition}{Definition}
\newtheorem{proposition}{Proposition}
\newtheorem{example}{Example}
\newcommand{\lket}[1]{\vert #1 \rangle\!\rangle}
\newcommand{\lbra}[1]{\langle\!\langle #1 \vert}
\newcommand{\lbraket}[2]{\langle\!\langle #1 \vert #2 \rangle\!\rangle}
\newcommand{\lketbra}[2]{\vert #1 \rangle\!\rangle\langle\!\langle #2 \vert}
\begin{document}
\title{Group twirling and noise tailoring for multi-qubit controlled phase gates}
\author{Guoding Liu}
\author{Ziyi Xie}
\author{Zitai Xu}
\author{Xiongfeng Ma}
\email{xma@tsinghua.edu.cn}
\affiliation{Center for Quantum Information, Institute for Interdisciplinary Information Sciences, Tsinghua University, Beijing, 100084 China}

\begin{abstract}
Group twirling is crucial in quantum information processing, particularly in randomized benchmarking and random compiling. While protocols based on Pauli twirling have been effectively crafted to transform arbitrary noise channels into Pauli channels for Clifford gates --- thereby facilitating efficient benchmarking and mitigating worst-case errors --- practical twirling groups for multi-qubit non-Clifford gates are lacking. In this work, we study the issue of finding twirling groups for generic quantum gates within a widely used circuit structure in randomized benchmarking or random compiling. For multi-qubit controlled phase gates, which are essential in both the quantum Fourier transform and quantum search algorithms, we identify optimal twirling groups within the realm of classically replaceable unitary operations. In contrast to the simplicity of the Pauli twirling group for Clifford gates, the optimal groups for such gates are much larger, highlighting the overhead of tailoring noise channels in the presence of global non-Clifford gates.
\end{abstract}

\maketitle

\section{Introduction}\label{sc:intro}
There has been an increased interest in quantum information processing due to its potential revolution in both science and technology. While quantum computing holds the promise of quantum advantages, its practical implementation faces significant challenges, primarily due to the inherent noise of quantum systems. In dealing with quantum noise, the group twirling based noise tailoring is an essential step. Group twirling symmetrizes the noise channel~\cite{Emerson2005Scalable, Emerson2007Characterization}, allowing accurate and efficient extraction of noise channel parameters. Then, it enables efficient noise characterization and subsequent quantum control optimization~\cite{Vandersypen2005control, Chu2002control}. This principle underlies the randomized benchmarking (RB) methodology~\cite{Knill2008RB, Emerson2011prlRB, Emerson2012praRB, Harper2020Efficient, Harper2021Estimation}, which stands out as a major quantum benchmarking technique due to its low sample complexity and resilience against state preparation and measurement (SPAM) errors. Moreover, group twirling is essential in random compiling~\cite{Wallman2016compiling}, which turns generic noise into a specific type and reduces the worst-case error of quantum gates.

While randomized benchmarking and random compiling have seen considerable advancements, efficient noise tailoring protocols for multi-qubit gates mainly focus on the Clifford case, primarily achieved through Pauli group twirling~\cite{Erhard2019CB, Zhang2023Scalable, Wallman2016compiling}. For multi-qubit non-Clifford gates, researchers also made some progress and proposed fascinating benchmarking protocols for two kinds of quantum gate sets: CNOT dihedral group~\cite{Cross2016dihedral} and matchgate group~\cite{Helsen2022matchgate}. Nonetheless, these groups are both large and highly non-local, posing challenges to practical implementation in expansive quantum systems. Experimental undertakings have, so far, been limited to the two-qubit domain for the CNOT dihedral group~\cite{GarionControlledS2021}. This raises the question of whether more compact and easily implementable twirling groups might suffice for twirling and benchmarking tasks. Similar issues exist for random compiling, with no methodologies for tailoring multi-qubit non-Clifford gates. This hinders the application of multi-qubit non-Clifford gates, though many of which are directly implementable in quantum processors~\cite{nguyen2022programmable,Kim2022iToffoli} and possess transversal schemes~\cite{Paetznick2013Transversal,Jochym2021transversal}.


In this letter, we investigate noise tailoring strategies for generic quantum gates within the contexts of RB and random compiling. Central to noise tailoring is the selection of appropriate twirling gates. We study this question within a frequently used circuit structure. The choice of twirling gates is not arbitrary. Introducing twirling gates must maintain the logical circuit, which imposes algebraic constraints between the gate undergoing twirling and the twirling gates themselves. Furthermore, RB brings along an added stipulation: the tailored noise channel should commute with the target gate, which is helpful to get the powers of the twirled noise channel and enable fitting and extracting SPAM-error-free noise parameters. Within this framework, our findings indicate that any quantum gate tailoring demands a twirling gate set comparable in magnitude to the Pauli group, implying the optimality of existing noise tailoring schemes designed for Clifford gates.

In addition to the well-studied Clifford gates, our study emphasizes multi-qubit controlled phase gates, given by
\begin{equation}
\begin{split}
C^nZ_m = \begin{pmatrix}
\mathbb{I}_{2^{n+1}-1} & \mathbf{0} \\
\mathbf{0} & e^{i\frac{2\pi}{m}}
\end{pmatrix},
\end{split}
\end{equation}
where $n$ and $m$ are both positive integers, representing the number of control qubits and the phase, respectively. The matrix $\mathbb{I}_{2^{n+1}-1}$ denotes the identity of dimension $2^{n+1}-1$ and $\mathbf{0}$ denotes the zero matrix. Our results can be further generalized by replacing the phase from $\frac{2\pi}{m}$ to any angle $\theta \in [0, 2\pi]$. These gates are the key components in quantum Fourier transform and are locally equivalent to the operator $2\ketbra{\psi}-\mathbb{I}_{2^{n+1}}$ in Grover's search algorithm where $\ket{\psi}$ denotes the target search state. In principle, $C^{n}Z_m$ can be realized by interactions between $\ket{1}^{\otimes (n+1)}$ and high energy levels~\cite{Fedorov2012Toffoli,Li2019CCZ}. Thus, devising efficient noise tailoring schemes for such gates is crucial in both the theory and experiment. Within our framework, we introduce an optimal noise tailoring scheme for $C^{n}Z_m$ if the twirling group falls within the class of classically replaceable unitary operations (CRU)~\cite{Liu2022CRO} or incoherent unitary operations~\cite{Baumgratz2014Coherence,Eric2016DIO}. CRU comprises gates that can be moved after computational basis measurements, typically set as $Z$-basis measurements, and replaced by classical post-processing. Contrasting with the simplicity in tailoring Clifford gates, the optimal twirling gate set for $C^nZ_m$ grows exponentially with the qubit count. This shows an underlying difficulty in tailoring multi-qubit non-Clifford gates.

The structure of this paper is organized as follows. In Section~\ref{sc:group}, we mathematically describe the problem of finding the twirling group for a target gate in RB and show the main results. In Section~\ref{sc:randomcompiling}, we briefly introduce the results of random compiling. In Section~\ref{sc:simulation}, we show simulation results of the RB protocols based on the new twirling groups. Finally, we conclude and discuss in Section~\ref{sc:outlook}. More details about this work, including the full proof of the main results, are left in the Appendices.

\section{Twirling groups in randomized benchmarking}\label{sc:group}
We start with clarifying the algebraic relations between the twirled gate and the twirling gates in RB, following the ideas of~\cite{Helsen2022RBFramework, Helsen2019characterRB, Magesan2012interleavedRB, Erhard2019CB}.


The task here is estimating the fidelity of an individual gate, $U$, robust to SPAM errors. In general, one can express the noisy quantum gate $\widetilde{\mathcal{U}} = \mathcal{U} \Lambda$ as the composite of the noiseless gate $\mathcal{U}$ and its noise channel $\Lambda$. Here, $\mathcal{U}$ denotes the Pauli-Liouville representation of $U$, and $\widetilde{\cdot}$ represents the noisy version of quantum gates or observables. For a noise channel, we use the same notation $\Lambda$ to represent its Liouville representation and other representations. The key to enable RB is obtaining the powers of the $\mathsf{G}$-twirled noise channel, $\Lambda_G^m$, where $m\in \mathbb{Z}_+$ and $\Lambda_G = \mathbb{E}_{G\in \mathsf{G}} \mathcal{G}\Lambda \mathcal{G}^{\dagger}$. $\mathsf{G}$ is a prefixed twirling group, ensuring that $\Lambda_G$ is diagonal in the Pauli-Liouville representation, up to a unitary transformation independent of $\Lambda$. The diagonalizability ensures $\tr(M\Lambda_G^m(\rho))$ to be multiple exponential decay function $\sum_i A_ip_i^m$ for any state $\rho$ and observable $M$. Then, via the technique of character RB~\cite{Helsen2019characterRB}, one can obtain $A_ip_i^m$ and extract $p_i$ with accurate single-exponential fitting. The linear combination of $p_i$ gives the fidelity estimation robust to SPAM errors. If $\Lambda_G$ is not diagonal, $p_i$ will turn into a matrix, and we will meet the notorious matrix exponential fitting problem~\cite{Helsen2022RBFramework}, which would cause inaccurate fidelity estimation.

To obtain $\Lambda_G^m$, we consider the circuit in Figure~\ref{fig:RBmaintext}. One independently and randomly samples $m$ twirling gates $G_i$ from group $\mathsf{G}$ and implements them interleaved with target gate $U$. The circuit ends with the inverse gate $G_{\text{inv}} = (\prod_{i=1}^m UG_i)^{\dagger}$. This circuit structure is first devised in interleaved RB~\cite{Magesan2012interleavedRB}, but the difference here is that we do not require $U\in \mathsf{G}$. The target gate $U$ is the one acquiring noise tailoring, and its noise channel is twirled, so we call $U$ the twirled gate. The gates from $\mathsf{G}$ are named the twirling gates.

\begin{figure}[htbp!]
\centering
\includegraphics[width=.47\textwidth]{./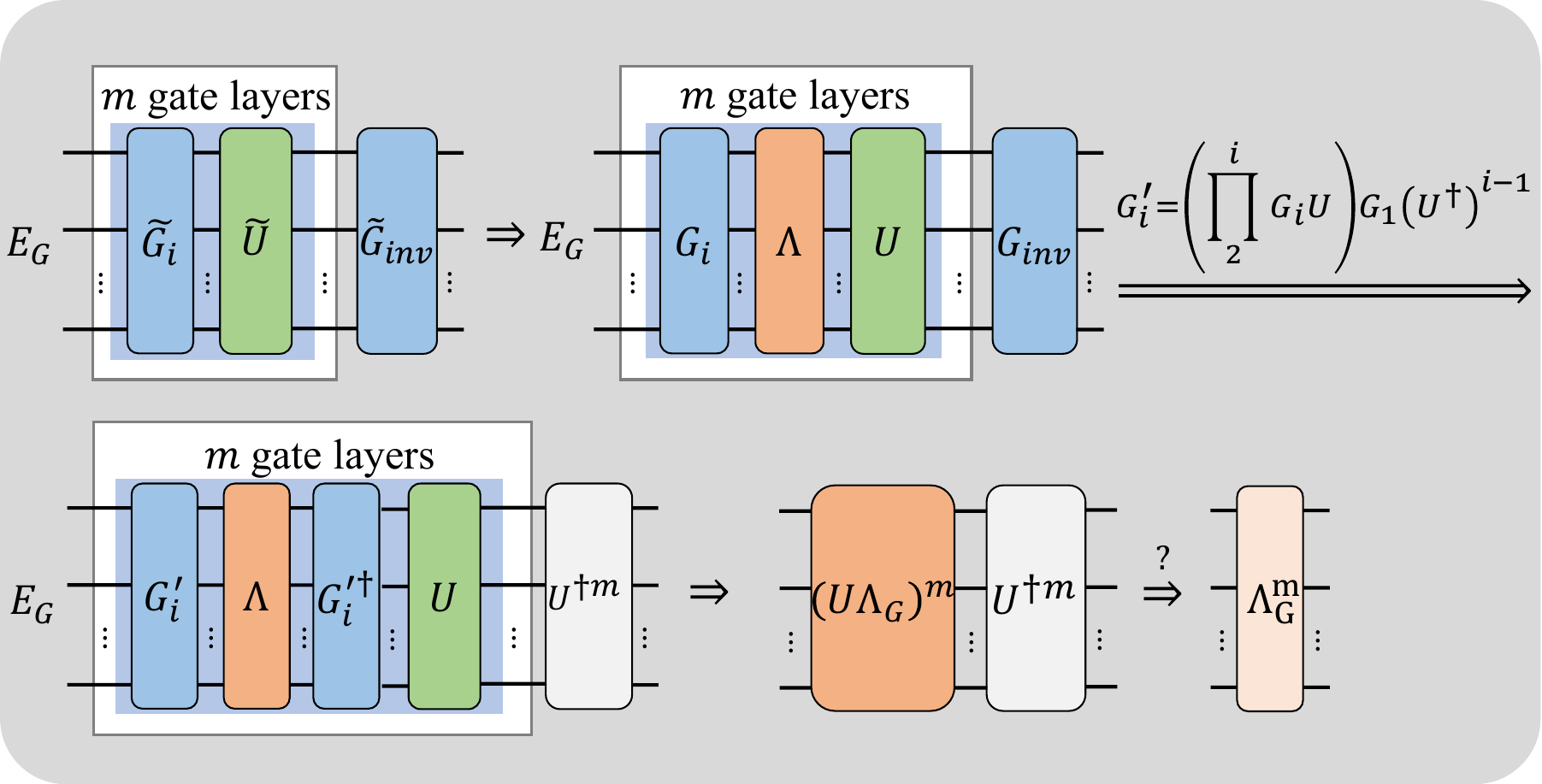}
\caption{Random twirling gates $G_1, G_2,\cdots,G_m$ interleaved with $U$ in RB. The inverse gate $G_{\text{inv}} = (\prod_{i=1}^m UG_{i})^{-1}$. $\widetilde{\cdot}$ represents the noisy version of a quantum gate. $\Lambda$ is the noise channel of $U$ and $\Lambda_G$ is the $G$-twirled noise channel.}
\label{fig:RBmaintext}
\end{figure}

Note that the interleaved circuit actually measures the noise from both $U$ and $\mathsf{G}$. One can invoke the technique of interleaved RB~\cite{Magesan2012interleavedRB} to remove the noise effect from $\mathsf{G}$. Below, we focus on the case that the twirling gates are noiseless, and the results can be extended to more general cases. Mathematically, the circuit is
\begin{equation}\label{eq:RB}
\begin{split}
\widetilde{\mathcal{S}}_m &=  \mathcal{G}_{inv}\prod_{i=1}^m  \mathcal{U}\Lambda\mathcal{G}_{i}\\
&= \mathcal{U}^{\dagger m} \prod_{i=1}^m  \mathcal{U}\mathcal{G}_i^{\prime\dagger}\Lambda\mathcal{G}'_i,
\end{split}
\end{equation}
where $G'_i = (\prod_{j=2}^i G_jU) G_1 U^{\dagger i-1}$ for $1\leq i\leq m$. Below we account for that $U\mathsf{G}U^{\dagger}=\mathsf{G}$ will make $\widetilde{\mathcal{S}}_m$ close to $\Lambda_G^m$ in expectation.

If $U\mathsf{G}U^{\dagger}=\mathsf{G}$, then $\{G'_i, 1\leq i\leq m\}$ are independent and random elements from $\mathsf{G}$. After taking expectation, Eq.~\eqref{eq:RB} would become $\mathcal{U}^{\dagger m} (\mathcal{U}\Lambda_G)^m$. For multi-qubit controlled phase gates and controlled phase gates except for CZ, we found that $U\mathsf{G}U^{\dagger}=\mathsf{G}$ ensures $\Lambda_G \mathcal{U} = \mathcal{U}\Lambda_G$. Then, Eq.~\eqref{eq:RB} would further reduce to $\Lambda_G^m$ as we want. For other quantum gates, one can at least obtain the fidelity of $(\mathcal{U}^{\dagger m}(\mathcal{U}\Lambda_G)^m)^{\frac{1}{m}}$, which we will prove to be a lower bound of the fidelity of $\Lambda$ if $U\mathsf{G}U^{\dagger}=\mathsf{G}$ and $\Lambda_G$ is diagonal. The lower bound of the fidelity is known to be fidelity witness~\cite{Eisert2020benchmarking} and is also useful in quantum benchmarking. It is worth noting that $U^m = \mathbb{I}$ and $U\mathsf{G}U^{\dagger}=\mathsf{G}$ also ensures the inverse gate belonging to $\textsf{G}$. More details are available in Appendices~\ref{appendssc:rbrequire} and~\ref{appendssc:uconjchannel}. We summarize the requirements of the twirling group to tailor $U$ in RB below.


\begin{question}\label{ques:RB}
Given a gate, $U$, find a twirling group, $\mathsf{G}$ such that,
\begin{equation}\label{eq:diag}
\begin{split}
&\text{for any quantum channel } \Lambda, \Lambda_G = \mathbb{E}_{G\in \mathsf{G}} \mathcal{G} \Lambda \mathcal{G}^{\dagger} \text{ is diagonal}\\
&\text{up to a unitary transformation independent of } \Lambda,\\
\end{split}
\end{equation}
and,
\begin{equation}\label{eq:Uconjgroup}
U\mathsf{G} U^{\dagger} = \mathsf{G}.
\end{equation}
\end{question}

Equation~\eqref{eq:diag} means that $\mathsf{G}$ should make the twirled noise channel sufficiently symmetric no matter what the original noise channel is, and Eq.~\eqref{eq:Uconjgroup} ensures that the action of $U$ does not destroy this symmetry. The two requirements lie a fundamental limitation on the size and the structure of the twirling group. A good solution of $\mathsf{G}$ should be as small and easily implementable as possible. To find the optimal group for a gate, we study the diagonalizability of $\Lambda_G$, which puts restriction on $\mathsf{G}$ as below.

\begin{lemma}\label{lemma:cardconstraint}
If a finite $n$-qubit unitary subgroup, $\mathsf{G}$, satisfies Eq.~\eqref{eq:diag}, then the Pauli-Liouville representation of $\mathsf{G}$ is multiplicity-free. As a corollary, the cardinality of the twirling group $\abs{\mathsf{G}}\geq 4^n$.
\end{lemma}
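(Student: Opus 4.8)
The plan is to analyze the representation-theoretic content of the condition in Eq.~\eqref{eq:diag}. The key object is the Pauli-Liouville (adjoint) representation $R:\mathsf{G}\to \mathrm{GL}(\mathbb{C}^{4^n})$ defined by $R(G)=\mathcal{G}$, acting on the $4^n$-dimensional space of $n$-qubit operators spanned by Pauli strings. First I would unpack what ``$\Lambda_G$ is diagonal up to a $\Lambda$-independent unitary'' means: the twirl map $\mathcal{T}(\Lambda)=\mathbb{E}_{G}\mathcal{G}\Lambda\mathcal{G}^\dagger$ is the projection onto the commutant of $R(\mathsf{G})$ inside the space of linear maps on operators. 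Diagonalizability by a fixed unitary for \emph{every} $\Lambda$ forces this commutant to be simultaneously diagonalizable, i.e. to be an abelian algebra. By Schur's lemma, the commutant of $R(\mathsf{G})$ is abelian if and only if $R$ is multiplicity-free — every irreducible constituent of $R$ appears at most once. This is the first claim of the lemma.

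Next, I would turn multiplicity-freeness into the cardinality bound $\abs{\mathsf{G}}\ge 4^n$. If $R=\bigoplus_j \rho_j$ with the $\rho_j$ pairwise inequivalent irreducibles of dimension $d_j$, then $\sum_j d_j = 4^n$ (the dimension of operator space). Orthogonality of characters gives $\abs{\mathsf{G}} = \sum_{\text{irreps }\sigma\text{ of }\mathsf{G}} \mathrm{mult}_\sigma^2 \cdot (\dim\sigma)^2 \ge \sum_j d_j^2$. Now I would invoke the Cauchy-Schwarz inequality in the reverse-useful direction is not quite it; instead the clean route is: since $R$ is a subrepresentation of the regular representation only after summing multiplicities, better to argue directly that $\abs{\mathsf{G}} \ge \sum_j d_j^2 \ge \big(\sum_j d_j\big)^2 / (\text{number of summands})$ is too weak. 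The correct elementary finish: by the formula $\abs{\mathsf{G}} = \sum_\sigma (\dim\sigma)^2$ over \emph{all} irreps, and each $d_j$ equals $\dim\sigma$ for a distinct $\sigma$, we get $\abs{\mathsf{G}} \ge \sum_j d_j^2$. Combined with $\sum_j d_j = 4^n$ and $d_j\ge 1$ we obtain $\sum_j d_j^2 \ge \sum_j d_j = 4^n$, hence $\abs{\mathsf{G}}\ge 4^n$. (If one wants the sharper structure, note equality throughout forces all $d_j=1$, i.e. $R$ is a sum of $4^n$ distinct characters, which is essentially the Pauli case.)

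The main obstacle, and the step deserving the most care, is the first reduction: showing that ``$\Lambda_G$ diagonalizable by a $\Lambda$-independent unitary for all $\Lambda$'' is equivalent to the commutant being abelian, rather than merely diagonalizable for each fixed $\Lambda$ separately. The point is that $\mathcal{T}(\Lambda)$ ranges over (a spanning set of) the entire commutant algebra as $\Lambda$ varies over all channels — indeed over all linear maps, by linearity and the fact that channels span the space of all superoperators — so a single unitary $V$ conjugating every $\mathcal{T}(\Lambda)$ to diagonal form must simultaneously diagonalize the whole algebra; a $*$-closed (the twirl of $\Lambda^\dagger$ is the adjoint of the twirl of $\Lambda$) simultaneously diagonalizable algebra is abelian. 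I would also need to check the technical point that $R$ is unitarizable (true: average an inner product over the finite group, or note $\mathcal{G}$ is already orthogonal in the Hilbert-Schmidt structure since $G$ is unitary), so that ``abelian commutant'' genuinely yields multiplicity-freeness via Schur without semisimplicity worries. The remaining character-theory arithmetic is routine.
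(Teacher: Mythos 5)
Your overall strategy is sound, and your cardinality bound (Burnside, $\abs{\mathsf{G}}=\sum_\sigma(\dim\sigma)^2\geq\sum_j d_j^2\geq\sum_j d_j=4^n$) is essentially identical to the paper's. Your reformulation of the multiplicity-free claim via the commutant algebra is a more abstract packaging of the same idea the paper executes concretely: the paper block-diagonalizes the Liouville representation, applies Schur's lemma to each block, and directly exhibits a non-diagonalizable twirled element $\begin{pmatrix}1&0\\1&1\end{pmatrix}\oplus\mathbb{I}$ when some multiplicity $n_i>1$.

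However, your key justification for ``the twirl ranges over the entire commutant'' rests on the claim that ``channels span the space of all superoperators,'' which is false. The complex linear span of CPTP maps is not the full space of superoperators; it is the space of maps $L$ with $L^\dagger(\mathbb{I})\propto\mathbb{I}$, i.e., whose first row in Pauli--Liouville form is proportional to $(1,0,\ldots,0)$ --- scalar multiples of trace-preserving maps (the paper cites~\cite{Robust2014Kimmel} for this). This is a codimension-$(4^n-1)$ proper subspace of all superoperators. Consequently $\{\Lambda_G:\Lambda\text{ a channel}\}$ need not span the full commutant: the trivial irreducible representation always appears (on $\{\mathbb{I}\}$), and if its multiplicity is $n_1>1$, the $M_{n_1}(\mathbb{C})$ block of the commutant is probed only on the affine slice where the row indexed by $\mathbb{I}$ is fixed to $(1,0,\ldots,0)$. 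Your abstract argument therefore does not close here. The conclusion is nonetheless salvageable, because even this constrained slice contains non-diagonalizable elements when $n_1>1$ (e.g.\ a Jordan block with that fixed first row), and for non-trivial irreps the block is unconstrained; but this case distinction must be made explicitly, which is precisely what the paper's hands-on computation does and your proof silently skips. Also note the paper proves a slightly stronger statement --- diagonalizability up to an \emph{invertible} (not necessarily unitary) $\Lambda$-independent transformation already forces multiplicity-freeness --- whereas your $*$-closed/abelian argument leans on unitarity; that is fine for the lemma as stated but worth being aware of.
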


The proof of Lemma~\ref{lemma:cardconstraint} is shown in Appendix~\ref{appendssc:lemmacardconstraint}. We use the arbitrariness of $\Lambda$ and the diagonalizability of $\Lambda_G$ to show $\mathsf{G}$ multiplicity-free, which means any irreducible representation of $\mathsf{G}$ appears at most once in the decomposition of the Liouville representation of $\mathsf{G}$. Then, with Burnside's theorem~\cite{curtis1966representation}, one obtains $\abs{\mathsf{G}}\geq 4^n$.

As the cardinality of the (projective) Pauli group is $4^n$, the above theorem reveals that the twirling group for noise tailoring cannot be smaller than the Pauli group. Pauli gates are also local and easily implementable. Thus, we show the superiority of Clifford gates in noise tailoring, for which the Pauli group is a solution to Question~\ref{ques:RB}. For a non-Clifford gate, the solution cannot be the Pauli group, for which we present a simple but may not be an optimal solution in Appendix~\ref{appendssc:construct}.

Lemma~\ref{lemma:cardconstraint} shows requirements for generic unitary twirling groups. Below, in Lemma~\ref{lemma:cosetcardconstraint}, we consider twirling groups belonging to CRU and show much stronger requirements other than the cardinality constraint for twirling groups. CRU comprises all gates in the product of a permutation matrix, like Pauli $X$ and Toffoli gate, and a diagonal matrix in the computational basis, like Pauli $Z$. This set is large and becomes universal after adding Hadamard gates. Using CRU subgroups for twirling also lets us replace the inverse gate with classical postprocessing when measurements are under the computational bases~\cite{Liu2022CRO}, bringing additional advantages.

\begin{lemma}[Informal Version]\label{lemma:cosetcardconstraint}
If a finite $n$-qubit CRU subgroup, $\mathsf{G}$, satisfies Eq.~\eqref{eq:diag}, then $\mathsf{G}$ can interchange any two computational basis states. That is, for any two computational basis states $\ket{\mathbf{i}}$ and $\ket{\mathbf{j}}$ where $\mathbf{i}, \mathbf{j}\in \{0,1\}^n$, there exists a gate $G\in \mathsf{G}$ such that $\ket{\mathbf{j}} = G\ket{\mathbf{i}}$.
\end{lemma}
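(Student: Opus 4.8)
The plan is to reduce the claim to transitivity of a permutation action and then invoke Lemma~\ref{lemma:cardconstraint}.

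First I would recall that, by Lemma~\ref{lemma:cardconstraint}, the hypothesis Eq.~\eqref{eq:diag} already forces the Pauli-Liouville representation of $\mathsf{G}$ --- i.e.\ the adjoint representation $G\mapsto\mathcal{G}$, $\mathcal{G}(X)=GXG^{\dagger}$, on the $4^n$-dimensional operator space --- to be multiplicity-free. The structural input from CRU is then the following: every $G\in\mathsf{G}$ factors uniquely as $G=PD$ with $P$ a permutation matrix and $D$ a diagonal unitary in the computational basis, so $G\ketbra{\mathbf{i}}{\mathbf{i}}G^{\dagger}=\ketbra{\pi_G(\mathbf{i})}{\pi_G(\mathbf{i})}$, where $\pi_G\in S_{2^n}$ is the label permutation induced by $P$ and the diagonal phases drop out because $\abs{D_{\mathbf{i}\mathbf{i}}}=1$. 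Hence the $2^n$-dimensional subspace $\mathcal{D}=\mathrm{span}\{\ketbra{\mathbf{i}}{\mathbf{i}}:\mathbf{i}\in\{0,1\}^n\}$ --- equivalently the span of the $Z$-type Paulis --- is invariant under the adjoint action, and the restriction of that action to $\mathcal{D}$ is exactly the permutation representation of $\mathsf{G}$ on $\{0,1\}^n$ through $G\mapsto\pi_G$.

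Next I would count orbits. In any permutation representation the multiplicity of the trivial representation equals the number of orbits: the vectors $v_O=\sum_{\mathbf{i}\in O}\ketbra{\mathbf{i}}{\mathbf{i}}$, one per orbit $O$, are linearly independent fixed vectors, and conversely any fixed vector has coefficients constant along orbits. Since $\mathcal{D}$ is an invariant subspace of a multiplicity-free representation, the trivial representation occurs in $\mathcal{D}$ at most once; but it occurs at least once, carried by $\mathbb{I}_{2^n}=\sum_{\mathbf{i}}\ketbra{\mathbf{i}}{\mathbf{i}}$. Therefore $\mathsf{G}$ acts on $\{0,1\}^n$ with a single orbit, i.e.\ transitively. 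Unpacking this: for any $\mathbf{i},\mathbf{j}\in\{0,1\}^n$ there is $G=PD\in\mathsf{G}$ with $\pi_G(\mathbf{i})=\mathbf{j}$, so $G\ket{\mathbf{i}}=D_{\mathbf{i}\mathbf{i}}\ket{\mathbf{j}}$, which is $\ket{\mathbf{j}}$ up to an irrelevant global phase --- exactly the asserted interchangeability.

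The one idea specific to this proof is the observation that CRU gates leave $\mathcal{D}$ invariant and act there by honest permutations; given that, together with Lemma~\ref{lemma:cardconstraint}, the rest is routine orbit/trivial-multiplicity bookkeeping. The genuinely technical point lies upstream, inside Lemma~\ref{lemma:cardconstraint}: converting the loose phrasing ``diagonal up to a $\Lambda$-independent unitary'' in Eq.~\eqref{eq:diag} into a bona fide multiplicity-free statement about the adjoint representation, which is where the arbitrariness of $\Lambda$ is spent. I also expect the formal version in the appendix to keep track of the diagonal phases $D_{\mathbf{i}\mathbf{i}}$ as well, yielding a sharper description of which CRU subgroups qualify as twirling groups, but the interchangeability claim is exactly the transitivity extracted above.
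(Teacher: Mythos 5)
Your proof is correct and takes essentially the same route as the paper: multiplicity-freeness from Lemma~\ref{lemma:cardconstraint}, invariance of the span of $Z$-type Paulis under CRU together with the observation that the adjoint action there is a permutation representation, and then transitivity forced by the trivial irrep occurring at most once in that invariant subspace. The only difference is bookkeeping — the paper passes explicitly through the quotient $\mathsf{G}/\mathsf{G}_Z$ and computes the trivial-representation multiplicity via a character inner product and Burnside's lemma, whereas you reach the same orbit count directly from the elementary fact that the fixed vectors of a permutation representation are spanned by the orbit indicators.
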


To prove Lemma~\ref{lemma:cosetcardconstraint}, we leverage the multiplicity-free condition of $\mathsf{G}$ in the Liouville representation in Lemma~\ref{lemma:cardconstraint} and that CRU transforms diagonal matrices to diagonal matrices. By examining the multiplicity of the trivial representation, we obtain the result. The complete proof and more details on optimizing the group within CRU are available in Appendix~\ref{appendssc:lemmacosetcardconstraint}.

Note that diagonal gates, CNOT gates, and Toffoli gates do not affect $\ket{\mathbf{0}}$, Lemma~\ref{lemma:cosetcardconstraint} implies that the CRU subgroup $\mathsf{G}$ must include a gate set like $\mathsf{X} = \langle X \rangle^{\otimes n}$. Combining this with Eq.~\eqref{eq:Uconjgroup}, we can deduce the necessary quantum gates that $\mathsf{G}$ must contain for tailoring a gate, $U$. Furthermore, when $U$ is a multi-qubit controlled phase gate, we derive its optimal twirling group as detailed in Theorem~\ref{thm:cnzm}, with full proof available in Appendix~\ref{appendssc:thmcnzm}.

\begin{theorem}\label{thm:cnzm}
The optimal twirling group $\mathsf{G}$ in CRU for the multi-qubit controlled phase gate, $U = C^n Z_m$, with $n\geq 1, m\geq 2$, is the smallest group containing $\mathsf{X}$ and normalized by $U$, given by
\begin{equation}\label{eq:twirlgroupCnZm}
\mathsf{G} = \{\Pi  (\prod_{i=1}^t (\Pi_i^{\dagger} U\Pi_i U^{\dagger})^{l_i} ) | \Pi\in\mathsf{X}, t\in \mathbb{N}, \forall i, l_i\in {\pm 1}, \Pi_i \in \mathsf{X} \}.
\end{equation}
\end{theorem}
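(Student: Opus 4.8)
The plan is to prove Theorem~\ref{thm:cnzm} in two movements: first show that the group $\mathsf{G}$ displayed in Eq.~\eqref{eq:twirlgroupCnZm} is \emph{necessary}, i.e.\ any CRU twirling group solving Question~\ref{ques:RB} for $U=C^nZ_m$ must contain it; and second show it is \emph{sufficient}, i.e.\ this particular $\mathsf{G}$ actually satisfies Eq.~\eqref{eq:diag} and Eq.~\eqref{eq:Uconjgroup}. The necessity direction is the one I expect to carry most of the weight and most of the difficulty.

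For necessity, I would start from Lemma~\ref{lemma:cosetcardconstraint}: a CRU twirling group satisfying Eq.~\eqref{eq:diag} must act transitively on the computational basis, and since diagonal gates, CNOTs and Toffoli-type permutations fix $\ket{\mathbf{0}}$, the only way a CRU group hits every basis string starting from $\ket{\mathbf{0}}$ is to contain a full bit-flip orbit generator, forcing $\mathsf{X}=\langle X\rangle^{\otimes n}\subseteq\mathsf{G}$. Next, I would invoke Eq.~\eqref{eq:Uconjgroup}: since $\mathsf{G}$ is a group containing $\mathsf{X}$ and closed under conjugation by $U$, it must contain $\Pi^\dagger U\Pi U^\dagger$ for every $\Pi\in\mathsf{X}$ — because $\Pi\in\mathsf{G}$ implies $U\Pi U^\dagger\in\mathsf{G}$, hence $\Pi^\dagger(U\Pi U^\dagger)\in\mathsf{G}$. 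Here it is crucial to check that each ``commutator'' $\Pi^\dagger U\Pi U^\dagger$ is itself in CRU: one computes that $U\Pi U^\dagger$ is again a permutation times a diagonal phase (conjugating the single-corner phase gate $C^nZ_m$ by a bit-flip $\Pi$ just moves which computational-basis vector picks up the phase $e^{2\pi i/m}$), so $\Pi^\dagger U\Pi U^\dagger$ is diagonal in the computational basis and hence CRU. Taking arbitrary finite products of these commutators with arbitrary $\pm1$ powers, and left-multiplying by an arbitrary $\Pi\in\mathsf{X}$, stays inside any group with these closure properties, giving ``$\mathsf{G}$ in Eq.~\eqref{eq:twirlgroupCnZm} $\subseteq$ any valid twirling group.'' That the set in Eq.~\eqref{eq:twirlgroupCnZm} \emph{is} a group (closed under multiplication and inverse) and is exactly the smallest group containing $\mathsf{X}$ and normalized by $U$ is a direct structural verification: the diagonal commutator part forms an abelian subgroup, $\mathsf{X}$ normalizes it appropriately, and conjugating a commutator by $U$ yields another product of commutators — so the displayed set is $U$-normalized, and minimality follows since every generator was forced.

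For sufficiency, I need Eq.~\eqref{eq:Uconjgroup} and Eq.~\eqref{eq:diag} for this $\mathsf{G}$. Equation~\eqref{eq:Uconjgroup} is almost automatic from the construction — the set was built to be the smallest $U$-normalized group, so $U\mathsf{G}U^\dagger=\mathsf{G}$ by design, but I would still spell out that $U(\Pi_i^\dagger U\Pi_i U^\dagger)U^\dagger$ re-expands as a product of commutators of the same form (using $U^2=\cdots$, or more precisely that $U$ commutes with diagonal gates, so conjugation only reshuffles the $\mathsf{X}$-labels). For Eq.~\eqref{eq:diag} I would show the Pauli-Liouville representation of $\mathsf{G}$ is multiplicity-free, which by the argument behind Lemma~\ref{lemma:cardconstraint} (Schur/Burnside) gives diagonalizability of $\Lambda_G$ up to a $\Lambda$-independent unitary. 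Concretely: the abelian group $\mathsf{X}$ already splits the $4^n$ Pauli operators $\{X^{\mathbf a}Z^{\mathbf b}\}$ into its $2^n$ characters, each character space being spanned by the $2^n$ Paulis sharing a fixed $X$-support $\mathbf a$; within each such block the additional diagonal commutator gates act by further distinct characters (the commutator $\Pi^\dagger U\Pi U^\dagger$ is a diagonal matrix whose Pauli-$Z$-string content separates the remaining degeneracy), and one verifies the joint characters are pairwise distinct across all $4^n$ Paulis. Hence every irrep of $\mathsf{G}$ appearing in the Liouville representation is one-dimensional and appears once — multiplicity-free — and $\abs{\mathsf{G}}$ meets the $4^n$ lower bound of Lemma~\ref{lemma:cardconstraint} with the structural constraints, certifying optimality.

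The main obstacle I anticipate is the character-separation step in the sufficiency half: one must verify that the diagonal commutators $\Pi^\dagger U\Pi U^\dagger$, together with $\mathsf{X}$, generate enough distinct characters to resolve \emph{all} $4^n$ Pauli operators into one-dimensional isotypic components — equivalently that no two Paulis are left in a common multiplicity-$\geq 2$ block. This requires understanding precisely which diagonal phase patterns $U\Pi U^\dagger (U^\dagger)$ produces as $\Pi$ ranges over $\mathsf{X}$, and checking their $\mathbb{Z}_m$-valued "Fourier content" over $\{0,1\}^n$ is rich enough; the case $m=2$ (where $C^nZ$ is Clifford-adjacent and the commutators collapse) versus $m\geq 3$ may need to be handled slightly differently, and ensuring the group stays finite (the commutators have order dividing $m$, so finiteness is fine) and genuinely minimal is where the bookkeeping is most delicate. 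A secondary subtlety is being careful that "diagonal up to a $\Lambda$-independent unitary" is exactly what multiplicity-free buys us — I would lean on the already-cited reasoning for Lemma~\ref{lemma:cardconstraint} rather than re-deriving the representation theory.
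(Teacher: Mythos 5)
Your high-level two-movement outline matches the paper's, but there are two substantive gaps. In the necessity half you claim Lemma~\ref{lemma:cosetcardconstraint} forces $\mathsf{X}\subseteq\mathsf{G}$; it does not. It forces only that the quotient $\mathsf{G}/\mathsf{G}_Z$ contains permutation cosets $\Pi^X_{\mathbf i}\Pi^C_{\mathbf i}$ with the $\Pi^X$-part ranging over all of $\mathsf{X}$, so the representatives in $\mathsf{G}$ may carry nontrivial CNOT/Toffoli factors $\Pi^C$ and nontrivial diagonal factors, and pure bit-flips need not lie in $\mathsf{G}$ at all. The paper closes this by a step you skip: since $U=C^nZ_m$ has a \emph{unique} non-unit diagonal entry and the forced permutation set $\mathsf{S}$ acts transitively, for every $\Pi\in\mathsf{X}$ there exists $\Pi'\in\mathsf{S}$ with $\Pi'^\dagger U\Pi'=\Pi^\dagger U\Pi$, hence the $\mathsf{X}$-commutator group $\mathsf{W}_X$ is forced to lie inside the commutator group coming from $\mathsf{S}$. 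It is \emph{this} observation, together with the cardinality bound $\abs{\mathsf{G}/\mathsf{G}_Z}\geq 2^n$, that justifies optimality of $\mathsf{X}\ltimes\mathsf{W}_X$ --- not literal containment of $\mathsf{X}$. Without it your necessity argument does not go through.

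The sufficiency half is where you take a genuinely different route, and it is the route you yourself flag as the main obstacle --- but you never close it. You propose to verify multiplicity-freeness directly by separating joint characters of $\mathsf{X}$ and the diagonal commutators across all $4^n$ Paulis. Apart from a sign confusion ($\mathsf{X}$-conjugation sorts Paulis $X^{\mathbf a}Z^{\mathbf b}$ by the $Z$-support $\mathbf b$, not by the $X$-support $\mathbf a$ as you state), the "Fourier richness'' check is left open, and it is precisely the non-trivial content of the theorem. The paper avoids this entirely: it shows that the diagonal group $\mathsf{W}_X$ contains the single-qubit $Z_m$ gates when $m$ is odd, and the single-qubit $Z$ gates when $m$ is even, by encoding $\mathsf{W}_X$ via the map $\phi$ into $\mathbb{Z}_m^{2^{n+1}}$ and solving an explicit integer-lattice membership problem; it then invokes Corollaries~\ref{coro:pauli} and~\ref{coro:dihedral} (local Pauli/dihedral twirling already produces a Pauli channel) together with Lemma~\ref{lemma:twirlingdiagonal} to conclude that $\Lambda_G$ is diagonal. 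That subgroup-containment argument is both shorter and constructive, and is where the odd/even-$m$ dichotomy actually enters (rather than $m=2$ versus $m\geq 3$ as you anticipate, which is relevant to Theorem~\ref{thm:ucommute}, not here). If you want to salvage the direct character approach you would need to supply the explicit joint-character table; otherwise, following the paper's subgroup reduction is the cleaner path.
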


The optimal group $\mathsf{G}$ is generated by permutation matrices in $\mathsf{X}$ and diagonal matrices $\prod_{i=1}^t U^{l_i}\Pi_i U^{l_i\dagger}\Pi_i^{\dagger}$, which we proved to satisfy Question~\ref{ques:RB}. The optimality here means that any group satisfying Question~\ref{ques:RB} must contain all diagonal matrices in $\mathsf{G}$ and a permutation matrix set no smaller than $\mathsf{X}$, making $\mathsf{G}$ to be the smallest and most easily implementable choice.

Specific forms of twirling groups for $C^nZ$ and $CZ_m$ are summarized in Table~\ref{table:compelxity}. $\mathsf{G}$ reduces to $\langle C^{n-1}Z, C^{n-2}Z, \cdots, CZ, Z, X \rangle$ and $\langle CZ_m, Z_m, X \rangle$ for $U=C^nZ$ and $U=CZ_m$ with odd $m$, respectively. For $U=CZ_m$ with even $m$, $\mathsf{G}\leq \langle CZ_{m/2}, Z_m, X \rangle$. Within group generator $\langle\cdot\rangle$, we use $X$ to denote $\{X_1, X_2, \cdots, X_n\}$, the Pauli $X$ gates on all qubits, and similarly omit subscripts associated with the acting qubits when referring to $C^{n-1}Z, C^{n-2}Z, \cdots, CZ, Z$ and $CZ_m, Z_m$.

\begin{table}[!ht]
\centering
\resizebox{.45\textwidth}{!}{


\begin{tabular}{cccc}%
& \multicolumn{3}{c}{$C^nZ$} \\ \hline
& Group & Size & Computing \\ 
This work & $\langle C^{n-1}Z, C^{n-2}Z, \cdots, CZ, Z, X \rangle$ & $O(N^n)$ & $O(N^{n})$ \\ 
CXD~\cite{Cross2016dihedral} & $\langle CX, Z_{2^{n+1}}, X \rangle$ & $O(N^{n+1})$ & $O(N^{3n+1})$  \\
& \multicolumn{3}{c}{$CZ_m$} \\ \hline
& Group & Size & Computing \\ 
This work & $\langle CZ_m, Z_m, X \rangle$ & $O(N^2\log m)$ & $O(N^2\log m)$ \\ 
CXD & $\langle CX, Z_{2m}, X \rangle$ & $O(N^{\log m})$ & $O(N^{3\log m+1})$ \\ 
\end{tabular}

}
\caption{Scaling of the group size and computational complexity of the twirling group for tailoring $C^nZ$ and $CZ_m$ in our work and~\cite{Cross2016dihedral}. CXD represents the CNOT dihedral group. The complexity is expressed with the qubit number $N$, controlled qubit number $n$, and phase angle index $m$. Unlike~\cite{Cross2016dihedral}, which specifically addresses computational complexity for $m=2^k$, our results apply to generic positive integer $m$.}
\label{table:compelxity}
\end{table}

Note that enabling noise tailoring requires sampling from the twirling group and performing group element multiplication. These tasks introduce considerations for sample complexity, directly related to group size and computational complexity, contingent on the algorithm used for group multiplication. Table~\ref{table:compelxity} provides complexity results for $C^nZ$ and $CZ_m$, with the derivation available in Appendix~\ref{appendssc:complexity}. Beyond noise tailoring, these findings may be useful in investigating the classical simulation of various twirling groups, whose classical simulability forms an interesting computational complexity hierarchy concerning both $n$ and $m$.

When benchmarking an individual multi-qubit controlled phase gate $C^nZ_m$, our method surpasses previous results in~\cite{Cross2016dihedral} regarding group size and computational complexity, as shown in Table~\ref{table:compelxity}. Nonetheless, even with this optimal approach, the twirling group becomes large and highly non-local as the controlled qubit number increases. This leads to unfavorable sample and computational complexity for large quantum gates, highlighting the inherent challenges in benchmarking multi-qubit non-Clifford gates.

\section{Random compiling for multi-qubit non-Clifford quantum gates}\label{sc:randomcompiling}
Below, we briefly illustrate the results of random compiling. The algebraic restrictions between the twirled gate and the twirling gates here are looser than those in RB. The task is turning the noisy quantum gate $\widetilde{\mathcal{U}} = \mathcal{U} \Lambda$ to $\mathcal{U} \Lambda_G$ where ensuring $\Lambda_G$ to be a Pauli channel. To achieve that, one inserts twirling gates $G$ and $G'=UG^{\dagger}U^{\dagger}$ beside $U$ as shown in Figure~\ref{fig:RC} where $G$ is randomly sampled from twirling group $\mathsf{G}$. In reality, there may be multiple sequential target gates $U$ requiring noise tailoring and the twirling gates for different target gates would merge and, in general, belong to $\mathsf{V} = \mathsf{G}U\mathsf{G}U^{\dagger} = \{G_iUG_jU^{\dagger}|G_i, G_j\in \mathsf{G}\}$. Thus, in random compiling, we should optimize the choice of $\mathsf{G}$ to simplify $\mathsf{V}$. Similar to the results in RB, $\mathsf{X}U\mathsf{X}U^{\dagger}\mathsf{Z} = \{\Pi_iU\Pi_jU^{\dagger}W|\Pi_i, \Pi_j\in \mathsf{X}, W\in \mathsf{Z}\}$ would be a nearly optimal choice for multi-qubit controlled phase gates. More details about random compiling are shown in Appendix~\ref{appendsc:RC}.

\begin{figure}[htbp!]
\centering
\includegraphics[width=.47\textwidth]{./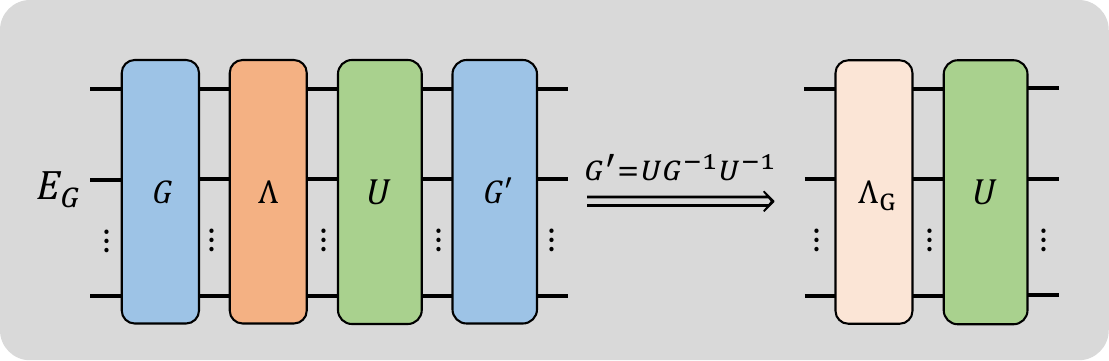}
\caption{Random compiling for $U$ by inserting twirling gates $G$ and $G' = UG^{-1}U^{-1}$ beside it.  $\Lambda_G$ is the $G$-twirled noise channel required to be a Pauli channel.}
\label{fig:RC}
\end{figure}

\section{Randomized benchmarking simulation for multi-qubit controlled phase gates}\label{sc:simulation}
Below, we show the benchmarking results for  $C^nZ$ gates using our scheme and compare it to~\cite{Cross2016dihedral}. In the simulation, to tailor $C^nZ$, we use group $\mathsf{G}^Z = \langle C^{n-1}Z,\cdots, CZ, X, S\rangle$ instead of the optimal twirling group $\langle C^{n-1}Z,\cdots, CZ, X, Z\rangle$. A small overhead of implementing phase gates $S$ would make the twirled noise channel more symmetric and reduce the number of its parameters from $3*(2^{n+1}-1)$ to $2*(2^{n+1}-1)$. This allows us to extract the noise parameters with only two SPAM settings:
\begin{enumerate}
\item\label{item:sm1}
preparing $\ket{0}^{\otimes (n+1)}$ and measuring in $Z$ basis;

\item\label{item:sm2}
preparing $\ket{+}^{\otimes (n+1)}$ and measuring in $X$ basis.
\end{enumerate}
For setting~\ref{item:sm1}, one implements the circuit in Figure~\ref{fig:RBmaintext} and get $\tr(\widetilde{M}\Lambda_{\mathsf{G}^Z}^m(\widetilde{\rho}))$ where $\Lambda_{\mathsf{G}^Z}$ is the twirled noise channel, $\rho = \ketbra{0}^{\otimes (n+1)}$, and $M\in \{\mathbb{I},Z\}^{\otimes (n+1)}$. Each non-identity choice of $M$ would make $\tr(\widetilde{M}\Lambda_{\mathsf{G}^Z}^m(\widetilde{\rho}))$ approximately a single exponential decay function $A\lambda^m$, while $\lambda$ is a noise parameter. Via fitting, one can obtain $2^{(n+1)}-1$ distinct parameters. Same for the setting~\ref{item:sm2}. The linear combination of these parameters provides the fidelity of $\Lambda$.

In~\cite{Cross2016dihedral}, the group to tailor $C^nZ$ is CNOT dihedral group $\mathsf{G}^X = \langle CX, X, Z_{2^n}\rangle$. $\Lambda_{\mathsf{G}^X}$ only has two parameters, which are separately extracted by preparing two distinct states, $\ket{0}^{\otimes (n+1)}$ and $\ket{+}^{\otimes (n+1)}$, measuring the probability back to the initial state, and fitting the probability to function $A\lambda^m+B$. The linear combination of two parameters also evaluates the fidelity of $\Lambda$. Except for the original SPAM setting, we also simulate the benchmarking protocol using the CNOT dihedral group and the SPAM settings~\ref{item:sm1} and~\ref{item:sm2} in our scheme to make a fair comparison.


The results are shown in Figures~\ref{fig:CCZ} and~\ref{fig:CnZ}, demonstrating the results for CCZ versus the number of sampled sequences and the result for $C^nZ$ versus the number of controlled qubits, respectively. More simulation details are presented in Appendix~\ref{appendsc:simulation}.

\begin{figure}[!htbp]
\subfigure[CCZ]{
\begin{minipage}[b]{0.42\textwidth}\label{fig:CCZ}
\includegraphics[scale=0.385]{./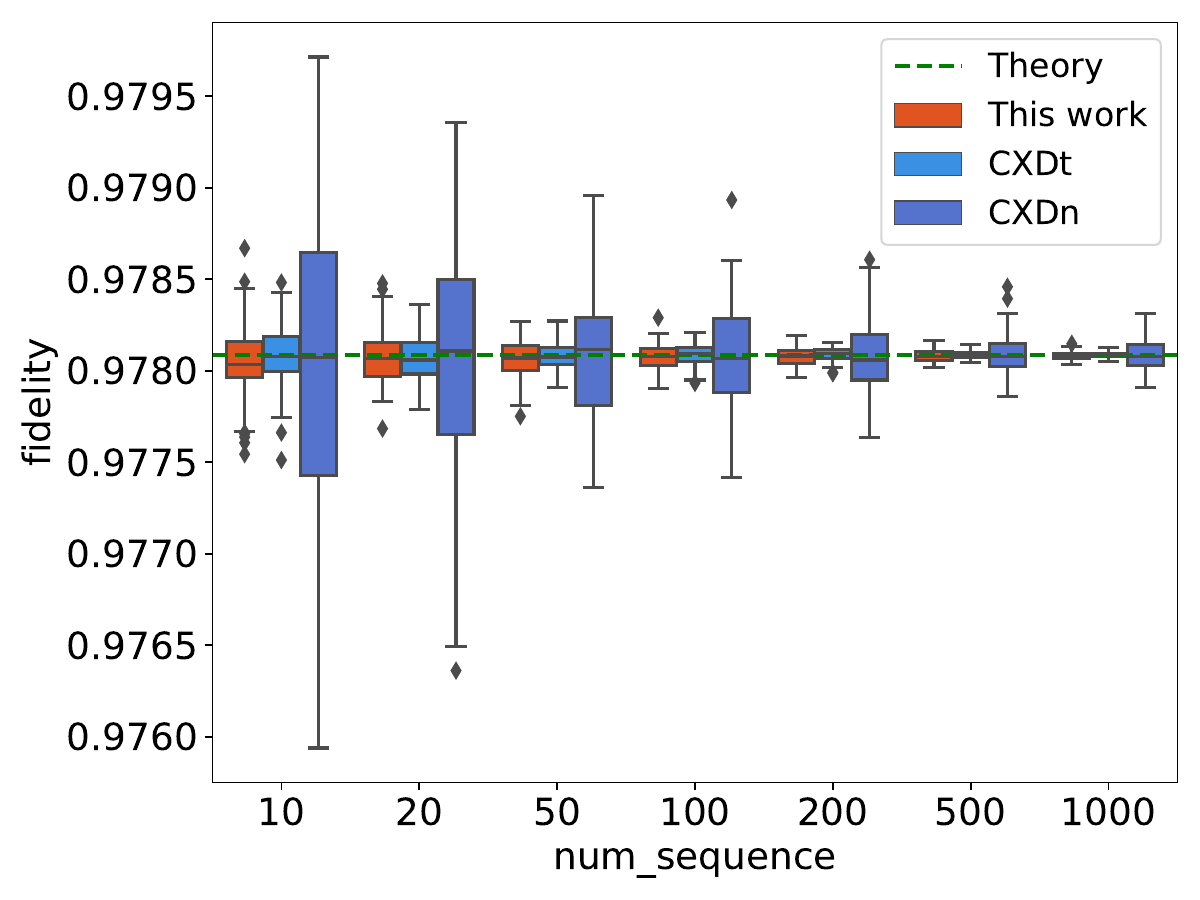}
\end{minipage}
}
\subfigure[$C^nZ$]{
\begin{minipage}[b]{0.48\textwidth}\label{fig:CnZ}
\includegraphics[scale=0.244]{./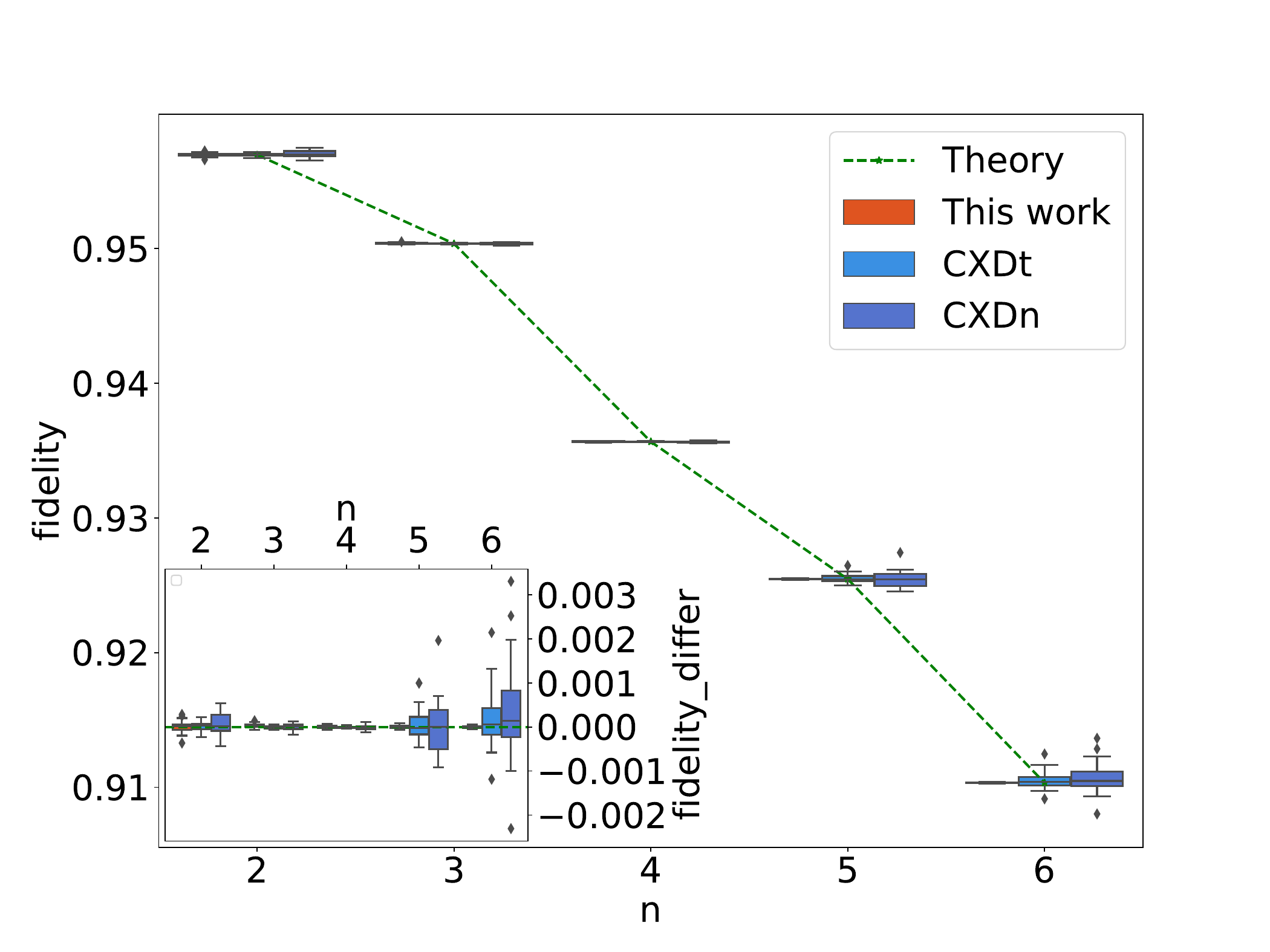}
\end{minipage}
}
\label{fig:simulation}
\caption{Benchmarking results for CCZ and $C^nZ$ with our schemes and CNOT dihedral group. There are two processing procedures for the CNOT dihedral group. The original procedure in~\cite{Cross2016dihedral} is denoted as CXDn, and the method, the same as our scheme, is denoted as CXDt. The green dashed line is the theoretical value of the noise channel fidelity. In (a), each box plot contains 100 fidelities, and each fidelity is estimated with circuit depths $\{3, 6,\cdots, 30\}$ and several sampled sequences specified by the horizontal axis. In (b), each box plot contains 20 fidelities, and each fidelity is estimated with circuit depths $\{1, 2, 5, 10, 20\}$, and 50 sampled sequences for each depth. The small figure in (b) demonstrates the difference between the estimated fidelities and the theoretical fidelity for three benchmarking procedures.}
\end{figure}

From Figure~\ref{fig:CCZ}, we can see that if using the SPAM settings~\ref{item:sm1} and~\ref{item:sm2}, the accuracy and the fluctuation of the estimated fidelities are almost the same for $\mathsf{G}^Z$ and $\mathsf{G}^X$, even $\mathsf{G}^Z$ is smaller and more easily implementable. Both two procedures are better than the original one~\cite{Cross2016dihedral}. Concerning the scalability, we found that $\mathsf{G}^Z$ performs better than $\mathsf{G}^X$ when $n=5$ and $n=6$ as shown in Figure~\ref{fig:CnZ}. An explanation is that as the number of qubits increases, the group size of $\mathsf{G}^X$ is too large. This results in an insufficient number of sampled sequences for $\mathsf{G}^X$ to twirl the noise channel, causing increased fluctuation in performance. The simulation emphasizes the importance of identifying smaller twirling groups and demonstrates the practicality of our findings. It is worth noting that our results can be extended to Toffoli gates, controlled-$\sqrt{Y}$ gates, and various useful non-diagonal gates via the technique of local gauge transformation~\cite{Zhang2023Scalable}.

\section{Outlook}\label{sc:outlook}
In this work, we study noise tailoring from a distinct perspective of finding suitable twirling groups and show nearly optimal results for multi-qubit controlled phase gates $C^nZ_m$. Besides $C^nZ_m$, exploring optimal twirling groups for other quantum gates and investigating the feasibility of solely employing local twirling gates is important to practical noise tailoring. While the exponential growth of optimal twirling groups poses a challenge for large-scale non-Clifford gate benchmarking, avenues for overcoming this limitation exist. Adding preconditions to the noise channel or trying circuit structure outside Figure~\ref{fig:RBmaintext} may allow one to find smaller twirling groups. Benchmarking without group structure, as discussed in the literature~\cite{Helsen2022RBFramework,Chen2022beyond,heinrich2023randomized,Gu2023CSB}, offers an alternative approach. Beyond noise tailoring, the properties of twirling groups explored here can also enhance studying protocols involving these groups, like shadow tomography.

\begin{acknowledgements}
We thank Boyang Chen, Zhuo Chen, Daojin Fan, and Shaowei Li for the helpful discussions. This work was supported by the National Natural Science Foundation of China Grant No.~12174216 and the Innovation Program for Quantum Science and Technology Grant No.~2021ZD0300804.
\end{acknowledgements}

\newpage
\appendix

\section{Preliminary}\label{appendsc:pre}
In this section, we introduce some related works and basic mathematical tools. Below, we first introduce basic notations. The Hilbert space for $n$ qubits is denoted as $\mathcal{H}$ and the set of linear operators on $\mathcal{H}$ is denoted as $\mathcal{L}(\mathcal{H})$. Same in the main text, we denote a quantum gate, or a unitary transformation acting on $\mathcal{H}$, in the standard representation with a capital letter, like $U$ and $G$. A set composed of quantum gates is denoted with sans serif fonts like $\mathsf{S}$. A quantum gate set satisfying group condition under the gate composition is always denoted as $\mathsf{G}$. Note that, in this work, we distinguish two concepts: the twirling gate set and the twirling group. The twirling group is a twirling gate set with a group structure. We normally use $\mathsf{V}$ and $\mathsf{G}$ to represent the twirling gate set and the twirling group, respectively. The Liouville representation of $U$ is denoted as its calligraphic letter, $\mathcal{U}$. In this paper, the Liouville representation refers to the Pauli-Liouville representation, which will be reviewed in detail in the following subsection. Quantum channels are defined as completely positive and trace-preserving (CPTP) linear maps on $\mathcal{L}(\mathcal{H})$. The noise channel of $U$ is normally denoted as $\Lambda$, and we use the same expression for the map representation and the Liouville representation for $\Lambda$. We use a wavy line to represent the noisy version of $U$ like $\widetilde{U}$ and $\widetilde{\mathcal{U}}$. Note that a quantum gate is always a quantum channel, but a quantum channel is generally not a quantum gate. We summarize the frequently-used notations in Table~\ref{table:Notation}. Note that we sometimes reuse part of notations, and the meanings of these notations are relevant to the context.

\begin{center}
\begin{table}[H]
\caption{Notation}\label{table:Notation}
\centering
\begin{tabular}{cc}\hline
Quantity &  Notation\\\hline
$n$ &  number of controlled qubit in $C^nZ_m$ or qubit number \\\hline
$m$ &  phase of controlled qubit in $C^nZ_m$ or circuit depth \\\hline
$N$ &  qubit number \\\hline
$\mathcal{H}$ &  Hilbert space\\\hline
$\mathcal{L}(\mathcal{H})$ & the set of linear operators on $\mathcal{H}$\\\hline
$U$ &  target gate \\\hline
$V$ &  twirling gate \\\hline
$G$ &  twirling gate in group $\mathsf{G}$ \\\hline
$\mathcal{U}$ &  Pauli Liouville representation of $U$ \\\hline
$\Lambda$ &  noise channel \\\hline
$\Lambda_G$ & $\mathsf{G}$-twirled noise channel \\\hline
$\widetilde{U}$ &  noisy version of $U$ \\\hline
$\mathsf{V}$ & twirling gate set \\\hline
$\mathsf{G}$ & twirling gate group \\\hline
$C^nZ_m = \begin{pmatrix}
\mathbb{I}_{2^{n+1}-1} & \mathbf{0} \\
\mathbf{0} & e^{i\frac{2\pi}{m}}
\end{pmatrix}$ & multi-qubit controlled phase gate \\\hline
$X_j$ &  Pauli $X$ on qubit $j$ \\\hline
$Y_j$ &  Pauli $Y$ on qubit $j$ \\\hline
$Z_j$ &  Pauli $Z$ on qubit $j$ \\\hline
$\mathbb{I}$ &  identity operator \\\hline
$I$ &  identity channel \\\hline
$\langle G_1, G_2,... \rangle$ & Group generated by $G_1, G_2,...$ \\\hline
$\textsf{P}_n$ &  $n$-qubit Pauli group $\langle X, Z\rangle^{\otimes n}$ \\\hline
$\textsf{D}_n^m$ &  $n$-qubit local dihedral group $\langle X, Z_m\rangle^{\otimes n}$ \\\hline
$\chi$ &  character function of a group \\\hline
$\Pi$ & projector or permutation matrix \\\hline
\end{tabular}
\end{table}
\end{center}

\subsection{Pauli-Liouville representation}\label{appendssc:PLrep}
Here, we introduce the Pauli-Liouville representation of quantum channels. Normally, a quantum channel, $\Lambda$, is defined within its Kraus representation. For any $O\in \mathcal{L}(\mathcal{H})$, the action of channel $\Lambda$ is defined with:
\begin{equation}
\Lambda(O)=\sum_{l=1}^k K_l O K_l^\dagger,
\end{equation}
where $k\in\mathbb{Z}_+$ and $\{K_l, 1\leq l\leq  k\}$ is the set of Kraus operators satisfying the condition:
\begin{equation}
\sum_{l=1}^k K_l^\dagger K_l=\mathbb{I},
\end{equation}
where $\mathbb{I}$ represents the identity operator.

The Kraus representation is not a matrix representation and, hence not convenient for our work. For further elaboration, we introduce the Liouville representation, which is defined on a set of trace-orthonormal basis elements in $\mathcal{L}(\mathcal{H})$. In this work, we choose the basis to be the normalized and projective Pauli group. In this case, the representation is named Pauli-Liouville representation. The $n$-qubit Pauli group, denoted as $\mathsf{P}_n$, is given by:
\begin{equation}
\mathsf{P}_n = \bigotimes_{j=1}^n \langle X, Z\rangle  = \langle X_1, Z_1, X_2, Z_2, \cdots, X_n, Z_n\rangle = \{\pm 1, \pm i\}\times \{\bigotimes_{j=1}^n P_j | P_j\in \{ \mathbb{I}, X, Y, Z\} \},
\end{equation}
where $\mathbb{I} = \begin{pmatrix}
1 & 0\\0 & 1
\end{pmatrix}$ is the identity operator with dimension 2. $X$, $Y$, and $Z$ are the single-qubit Pauli matrices given by
\begin{equation}
X = \begin{pmatrix}
0 & 1\\
1 & 0
\end{pmatrix},
\quad
Y = \begin{pmatrix}
0 & -i\\
i & 0
\end{pmatrix},
\quad
Z = \begin{pmatrix}
1 & 0\\
0 & -1
\end{pmatrix}.
\end{equation}
$X_j$ is the single-qubit Pauli $X$ matrix acting on $j$-th qubit given by $(\bigotimes_{i=1}^{j-1}\mathbb{I}_i) \otimes X (\bigotimes_{i=j+1}^{n}\mathbb{I}_i)$.
The notation $\langle \cdot \rangle$ denotes the group generated by $\cdot$. The elements in $\langle \rangle$ are called group generators. For instance, $\langle X_j, Z_j\rangle = \{X_j^{m_1} Z_j^{n_1} X_j^{m_2} Z_j^{n_2}\cdots | \forall i, m_i, n_i\in \mathbb{Z}\} = \{\pm 1, \pm i\}\times \{\mathbb{I}_j,X_j,Y_j,Z_j\}$. In the main text and below, for brevity we sometimes use notation $\langle X, Z\rangle^{\otimes n}$ or even $\langle X, Z\rangle$ to represent $\langle X_1, Z_1, X_2, Z_2, \cdots, X_n, Z_n\rangle$ where in this case, $X$ and $Z$ denote Pauli $X$ and Pauli $Z$ gates on all qubits, respectively. We would always omit subscripts, representing which qubits the gates act on, in $\langle \rangle$ when the group generators contain the same gates acting on all qubits.

In literature, people always use another definition for the Pauli group, which is, in fact, the projective Pauli group. Projective Pauli group is the quotient of the Pauli group by its center $\{\pm 1, \pm i\}$, $\langle X_1, Z_1, X_2, Z_2, \cdots, X_n, Z_n\rangle / \{\pm 1, \pm i\}$. Thus, we normally use $\bigotimes_{j=1}^n \{\mathbb{I},X,Y,Z\}$ to represent it. As in quantum systems, the overall phase is not important. The quotient by the phase $\{\pm 1, \pm i\}$ does not influence the quantum state. Below we simply use the Pauli group to denote the projective Pauli group and use the definition below.
\begin{equation}
\mathsf{P}_n = \bigotimes_{j=1}^n \{\mathbb{I}, X, Y, Z\}.
\end{equation}

The normalized Pauli group is obtained by normalizing the elements of the $n$-qubit Pauli group by a factor of $1/\sqrt{2^n}$ as shown below.
\begin{equation}
\mathsf{P}'_n = \{\sigma_{i} = \frac{1}{\sqrt{2^n}} P_{i} | P_{i}\in \mathsf{P}_n\}.
\end{equation}
These normalized Pauli operators are complete and satisfy the orthonormality under the Hilbert-Schmidt inner product,
\begin{equation}
\tr(\sigma_i^\dagger \sigma_j) = \delta_{ij},
\end{equation}
where $\delta_{ij}$ is the Kronecker delta symbol. Thus, we can represent the quantum state and the quantum channel with normalized Pauli operators as a set of bases.

Any $n$-qubit operator $O$ in $\mathcal{L}(\mathcal{H})$ can be decomposed with the $4^n$ normalized Pauli operators,
\begin{equation}\label{eq:pauliexpand}
O =\sum_{\sigma_i\in\mathsf{P}'_n}\tr(\sigma_i^\dagger O)\sigma_i.
\end{equation}
Thus, we can use the expansion coefficients $\tr(\sigma_i^\dagger O)$ to represent $O$ and in Pauli-Liouville representation, we express Eq.~\eqref{eq:pauliexpand} as
\begin{equation}
\lket{O} =\sum_{\sigma_i\in\mathsf{P}'_n}\tr(\sigma_i^\dagger O)\lket{\sigma_i},
\end{equation}
where $\lket{\sigma_i}$ in Pauli-Liouville representation is a length-$4^n$ vector with only one non-zero element 1.

Furthermore, any quantum channel $\Lambda$ can be represented as a matrix in the Liouville representation. The action of the channel on an operator $O$ is given by
\begin{equation}
\lket{\Lambda(O)}=\Lambda\lket{O},
\end{equation}
and the elements of the Liouville representation of $\Lambda$ are given by
\begin{equation}
\Lambda_{ij}=\lbra{\sigma_i}\Lambda\lket{\sigma_j}=\tr(\sigma_i\Lambda(\sigma_j)).
\end{equation}
Specifically, the diagonal terms of Pauli-Liouville representation are named Pauli fidelity~\cite{Zhang2023Scalable}, defined as below.
\begin{equation}
\lambda_i = \lbra{\sigma_i}\Lambda\lket{\sigma_i}=\frac{1}{d} \tr(P_i \Lambda(P_i)),
\end{equation}
where $P_i$ is a Pauli operator, and $d$ is the dimension of the Hilbert space.

In the Liouville representation, the concatenation of two channels can be represented as the product of their corresponding matrices,
\begin{equation}
\lket{\Lambda_2\circ\Lambda_1(O)}=\Lambda_2\lket{\Lambda_1(O)}=\Lambda_2\Lambda_1\lket{O}.
\end{equation}
The Liouville representation also allows us to vectorize the measurement operators using the Liouville bra-notation. The measurement probability of a state $\rho$ using a positive operator-valued measure (POVM) ${F_i}$ is given by
\begin{equation}
p_i=\lbraket{F_i}{\rho}=\tr(F_i^\dagger\rho),
\end{equation}
where $\lbra{F_i}$ is the Liouville bra-notation of $F_i$ and is equal to the conjugate transpose of $\lket{F_i}$.

We also briefly introduce another representation called the $\chi$-matrix representation for quantum channel $\Lambda$. Given an $n$-qubit state, $\rho$, $\Lambda(\rho)$ can be expanded as
\begin{equation}
\Lambda(\rho) = d \sum_{i, j} \chi_{ij} \sigma_i \rho \sigma^\dag_j,
\end{equation}
where the process matrix $\chi$ is uniquely determined by the orthonormal Pauli operator basis ${\sigma_j}$ and $d = 2^n$ represents the dimension of the quantum system. Note that $\sigma_0 = \frac{\mathbb{I}_{d}}{\sqrt{d}}$. If a channel is diagonal in this representation, it is called a Pauli channel. It is easy to verify that a Pauli channel is also diagonal in Pauli-Liouville representation. The diagonal terms of Pauli-Liouville representation and that of $\chi$-matrix representation are related by Walsh-Hadamard transformation as shown in the following equation.
\begin{equation}\label{eq:WalshHtrans}
\lambda_j = \sum_i (-1)^{\langle i, j \rangle }\chi_{ii},
\end{equation}
where $\langle i, j \rangle = 0$ if $P_i$ commutes with $P_j$, and $\langle i, j \rangle = 1$ otherwise. The inverse Walsh-Hadamard transformation is given by,
\begin{equation}\label{eq:invWalshHtrans}
\chi_{jj} = \frac{1}{d^2}\sum_i (-1)^{\langle i, j \rangle } \lambda_i.
\end{equation}

\subsection{Quantum channel fidelity}
To facilitate the understanding of randomized benchmarking (RB), we introduce the concepts of process fidelity and average fidelity. These two fidelities are equivalent to each other by linear transformation. The task of RB is estimating the fidelity of a given quantum gate or a given gate set.

The process fidelity of a channel, $\Lambda$, can be defined with its $\chi$-matrix representation as follows.
\begin{equation}
F(\Lambda) = \chi_{00}(\Lambda).
\end{equation}
The process fidelity can be obtained from Eq.~\eqref{eq:invWalshHtrans} as:
\begin{equation}\label{eq:processfidelity2}
F(\Lambda) = \chi_{00}(\Lambda) = \frac{1}{d^2} \sum_j \lambda_j = \frac{1}{d^2} \tr(\Lambda),
\end{equation}
which is equal to the trace of $\Lambda$ in Liouville representation divided by $d^2$. Eq.~\eqref{eq:processfidelity2} provides an alternative definition of process fidelity.

There exists a relation between the commonly-used average fidelity $F_{ave}$ and the process fidelity~\cite{horodecki1999general},
\begin{equation}
F_{ave} = \frac{dF+1}{d+1}.
\end{equation}
The average fidelity is defined as:
\begin{equation}
F_{ave} = \int d\psi \tr(\ketbra{\psi} \Lambda(\ketbra{\psi})),
\end{equation}
where the integral is taken over the Haar measure, it means the average fidelity of the ideal final state and the realistic final state over all pure states. Both the process fidelity and the average fidelity are well-defined metrics for quantifying the closeness of a quantum channel to the identity. Below, we refer to the fidelity to the process fidelity without further explanation. Also, note that in reality, a quantum gate, $U$, is noisy, and its noisy version can be expressed as the composite channel of $\mathcal{U}$ and its noise channel $\Lambda$. Then, the fidelity of $U$ refers to the fidelity of its noise channel $\Lambda$.

\subsection{Group twirling and representation theory}\label{sec:representation}
Representation theory is an effective tool for analysing abstract groups and is especially useful in randomized benchmarking and random compiling. Below we briefly introduce basic concepts in representation theory and refer to~\cite{steinberg2012representation,fultonRepresentation} for systematic introduction. Let $\mathsf{G}$ be a finite group and $G \in \mathsf{G}$ be an element of the group. The representation of $\mathsf{G}$ is defined as follows.

\begin{definition}[Group representation] \label{def:grouprep}
A map, $\phi$, is called a representation of the group $\mathsf{G}$ on a linear space, $V$, if it is a group homomorphism mapping from $\mathsf{G}$ to $GL(V)$, where
\begin{equation}
\begin{split}
&\phi:\mathsf{G} \rightarrow GL(V), \\
&g \mapsto \phi(G), \ \forall G \in \mathsf{G}.
\end{split}
\end{equation}
Here, $GL(V)$ denotes the general linear group of $V$. The representation $\phi$ satisfies the condition of preserving multiplication that for all $G_1, G_2 \in \mathsf{G}$,
\begin{equation}
\phi(G_1)\phi(G_2) = \phi(G_1 G_2).
\end{equation}
\end{definition}

Intuitively, representation is just using a matrix group to represent $\mathsf{G}$. Below we introduce the concept of irreducible representation. Given a representation, $\phi$, on $V$, a linear subspace, $W \subseteq V$, is referred to as invariant if for all $w \in W$ and for all $G \in \mathsf{G}$,
\begin{equation}
\phi(G)w \in W.
\end{equation}
The restriction of $\phi$ to the invariant subspace $W$ is called the subrepresentation of $\mathsf{G}$ on $W$. Note that any representation has a subrepresentation mapping all elements in $\mathsf{G}$ to $1$ where $W=\{0\}$ and has itself as a subrepresentation where $W=V$. These two subrepresentations are both trivial. With the concept of subrepresentation, we define the irreducible representation as below.

\begin{definition}[Irreducible representation]
A representation $\phi$ of the group $\mathsf{G}$ on the linear space $V$ is said to be irreducible if it only possesses trivial subrepresentations.
\end{definition}

Maschke's theorem provides an interesting property stating that every representation $\phi$ of a finite group, $\mathsf{G}$, can be decomposed into the direct sum of irreducible representations. For all $G \in \mathsf{G}$, the decomposition can be expressed as follows.
\begin{equation}\label{eq:directsum}
\phi(G) \simeq \bigoplus_{\phi_i\in R_{\mathsf{G}}} \mathbb{I}_{n_{i}\times n_{i}}\otimes\phi_i(G),
\end{equation}
Here, $R_{\mathsf{G}}$ denotes the set of all irreducible representations, and $n_i$ represents the multiplicity of the equivalent irreducible representations of $\phi_i$ in $\phi$.

The trace of a representation is called a character, which is defined below.
\begin{definition}[Character function]
Let $\phi$ be a representation over the group $\mathsf{G}$. The character of $\phi$ is defined as the function $\chi_{\phi}: \mathsf{G} \rightarrow \mathbb{C}$ such that for every $G \in \mathsf{G}$,
\begin{equation}
\chi_{\phi}(G) = \tr[\phi(G)].
\end{equation}
\end{definition}

Character function is a powerful tool in representation theory. We can define the inner product of two character functions as below.

\begin{definition}[Inner product of character function]
Let $\chi_1$ and $\chi_2$ be two character functions of a finite group, $\mathsf{G}$, their inner product is defined as
\begin{equation}
\braket{\chi_1}{\chi_2} = \frac{1}{\abs{\mathsf{G}}}\sum_{G\in\mathsf{G}}\chi_1^*(G)\chi_2(G).
\end{equation}
\end{definition}

There is a useful result for the inner product between two characters when $\chi_1$ and $\chi_2$ are both irreducible representations. Then $\braket{\chi_1}{\chi_2} = 1$ if $\chi_1$ and $\chi_2$ are equivalent and $\braket{\chi_1}{\chi_2} = 0$ otherwise. Note that Eq.~\eqref{eq:directsum} tells us that
\begin{equation}
\chi_{\phi}(G) = \tr(\phi(G)) = \sum_{\phi_{i}\in R_{\mathsf{G}}} n_i\chi_i(G),
\end{equation}
where $\chi_i(G) = \tr(\phi_i(G))$. With the orthonormality of irreducible representations, we obtain $n_i = \braket{\chi_i}{\chi_{\phi}}$. Then, we have the following lemma.

\begin{lemma}[Multiplicity of irreducible representations]\label{lemma:multiplicity}
For any irreducible representation $\phi_i$ with character $\chi_i$, the multiplicity of $\phi_i$ in $\phi$ is given by
\begin{equation}
n_i = \braket{\chi_j}{\chi_{\phi}}.
\end{equation}
\end{lemma}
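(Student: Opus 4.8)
The plan is to derive the multiplicity formula directly from the character decomposition established just above, together with the orthonormality of irreducible characters. First I would invoke Maschke's theorem in the form of Eq.~\eqref{eq:directsum}, writing $\phi(G) \simeq \bigoplus_{\phi_i \in R_{\mathsf{G}}} \mathbb{I}_{n_i \times n_i} \otimes \phi_i(G)$, and take traces on both sides. Since the trace is invariant under the similarity transformation implicit in $\simeq$ and is additive over direct sums and multiplicative over tensor products, this yields $\chi_{\phi}(G) = \sum_{\phi_i \in R_{\mathsf{G}}} n_i \chi_i(G)$ for every $G \in \mathsf{G}$, which is precisely the relation recorded immediately before the lemma.

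Next I would pair this identity with a fixed irreducible character $\chi_i$ under the inner product $\braket{\cdot}{\cdot}$ of character functions. Using linearity of the inner product in its second argument --- the coefficients $n_j$ being non-negative integers, hence real, so that the conjugation in the first slot plays no role --- one gets $\braket{\chi_i}{\chi_{\phi}} = \sum_{\phi_j \in R_{\mathsf{G}}} n_j \braket{\chi_i}{\chi_j}$. The orthonormality of irreducible characters, already quoted in the text ($\braket{\chi_i}{\chi_j} = 1$ when $\phi_i$ and $\phi_j$ are equivalent and $\braket{\chi_i}{\chi_j} = 0$ otherwise), collapses this sum to the single surviving term $n_i$, giving $n_i = \braket{\chi_i}{\chi_{\phi}}$ as claimed.

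There is essentially no obstacle here: every ingredient --- the direct-sum decomposition, the passage from representations to characters, and the orthonormality relation --- has been set up in the preceding paragraphs, so the argument reduces to a two-line computation. The only point meriting explicit mention is the bookkeeping of the inner-product convention (which slot is conjugate-linear) together with the remark that the multiplicities $n_j$ are real, so the convention is immaterial. If a fully self-contained treatment were desired, one could additionally insert a short proof of the orthonormality of irreducible characters via Schur's lemma, but since that fact is already attributed to~\cite{steinberg2012representation,fultonRepresentation} I would simply cite it and keep the proof of the lemma to its essential one-line core.
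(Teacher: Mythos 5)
Your proposal is correct and follows exactly the same route the paper sketches in the paragraph immediately preceding the lemma: take traces of the Maschke decomposition to obtain $\chi_{\phi} = \sum_j n_j \chi_j$, then pair with $\chi_i$ and use orthonormality of irreducible characters to isolate $n_i$. (You also implicitly correct the paper's typo, which writes $\braket{\chi_j}{\chi_{\phi}}$ where it means $\braket{\chi_i}{\chi_{\phi}}$.)
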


We also introduce a concept named multiplicity-free representation for further elaboration.

\begin{definition}[Multiplicity-free representation]\label{def:multiplicityfree}
Given a representation, $\phi$, of group $\mathsf{G}$, if for any irreducible representation, $\phi_i$, the multiplicity of $\phi_i$ in $\phi$ is 1, we call $\phi$ a multiplicity-free representation.
\end{definition}

Using the character function, we can also obtain the generalized projection formula utilized in character randomized benchmarking \cite{Helsen2019characterRB}.

\begin{lemma}[Generalized projection formula] \label{lemma:project_formula}
Consider a finite group, $\mathsf{G}$, and its representation $\phi$. Let $\phi_i$ be an irreducible representation contained in $\phi$ with character $\chi_i$. The projector onto the support space of $\phi_i$ can be expressed as:
\begin{equation}\label{eq:project_formula}
\Pi_i = \frac{d_i}{|\mathsf{G}|}\sum_{G \in \mathsf{G}} \chi_i(G) \phi(G),
\end{equation}
where $d_i = \dim \phi_i$ denotes the dimension of $\phi_i$.
\end{lemma}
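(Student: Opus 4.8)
The plan is to reduce the statement to a block-diagonal model of $\phi$ via Maschke's theorem and then evaluate the group average entrywise using the Schur orthogonality relations. Write $\Pi_i := \frac{d_i}{|\mathsf{G}|}\sum_{G\in\mathsf{G}}\chi_i(G)\,\phi(G)$ for the operator in question. The first observation is that $\Pi_i$ lies in the commutant of $\phi$: because $\chi_i$ is a class function, $\phi(H)\Pi_i\phi(H)^{-1} = \frac{d_i}{|\mathsf{G}|}\sum_G\chi_i(G)\phi(HGH^{-1}) = \frac{d_i}{|\mathsf{G}|}\sum_{G'}\chi_i(H^{-1}G'H)\phi(G') = \Pi_i$ for every $H\in\mathsf{G}$. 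Next, by Maschke's theorem (Eq.~\eqref{eq:directsum}) pick an invertible $T$ with $T\phi(G)T^{-1} = \bigoplus_j\mathbb{I}_{n_j}\otimes\phi_j(G)$, the sum running over a set of pairwise-inequivalent irreducibles $\phi_j$ occurring in $\phi$ with multiplicities $n_j$ (cf.\ Lemma~\ref{lemma:multiplicity}). Both sides of the claimed identity are covariant under conjugation by $T$: the left-hand side since $T\Pi_i T^{-1} = \frac{d_i}{|\mathsf{G}|}\sum_G\chi_i(G)(T\phi(G)T^{-1})$, and the right-hand side since the $\phi_i$-isotypic subspace of $V$ is carried by $T$ onto the $i$-th block and so is its projector. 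Hence it suffices to prove the identity in the block-diagonal model, where $\Pi_i$ is block-diagonal with $j$-th block equal to $\mathbb{I}_{n_j}\otimes M_j$ and $M_j := \frac{d_i}{|\mathsf{G}|}\sum_{G\in\mathsf{G}}\chi_i(G)\,\phi_j(G)$.

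It then remains to show $M_j = \delta_{ij}\,\mathbb{I}_{d_i}$. I would invoke the Schur orthogonality relations in entrywise form — which, for a self-contained account, I would first derive by the standard averaged-intertwiner argument: for any linear map $L$, the average $\overline{L} := \frac{1}{|\mathsf{G}|}\sum_G\phi_i(G)\,L\,\phi_j(G^{-1})$ intertwines $\phi_i$ with $\phi_j$, hence by Schur's lemma vanishes when $i\neq j$ and equals $\frac{\mathrm{tr}(L)}{d_i}\mathbb{I}$ when $i=j$, and specializing $L$ to the elementary matrices gives
\[
\frac{1}{|\mathsf{G}|}\sum_{G\in\mathsf{G}}\phi_i(G)_{ab}\,\phi_j(G^{-1})_{cd} \;=\; \frac{1}{d_i}\,\delta_{ij}\,\delta_{ad}\,\delta_{bc}.
\]
Writing $\chi_i(G) = \sum_a\phi_i(G^{-1})_{aa}$ — here and below one reads $\chi_i(G)$ as $\chi_i(G^{-1})$, equivalently $\overline{\chi_i(G)}$, the two agreeing whenever the character is real, as in the twirling applications of this paper — substituting $G\mapsto G^{-1}$, setting $b=a$, and summing over $a$ yields $\frac{1}{|\mathsf{G}|}\sum_G\chi_i(G)\,\phi_j(G)_{cd} = \frac{1}{d_i}\delta_{ij}\sum_a\delta_{ad}\delta_{ac} = \frac{1}{d_i}\delta_{ij}\delta_{cd}$, i.e.\ $M_j = \delta_{ij}\mathbb{I}_{d_i}$ as claimed.

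Combining the two steps, in the block-diagonal model $\Pi_i = \bigoplus_j\delta_{ij}\,\mathbb{I}_{n_j}\otimes\mathbb{I}_{d_i}$, which is precisely the idempotent projecting onto the $\phi_i$-isotypic component of $V$ (the $i$-th block), of rank $n_i d_i$; conjugating back by $T^{-1}$ yields the statement for arbitrary $\phi$. The only delicate point is the character convention — whether the defining sum carries $\chi_i(G)$, $\chi_i(G^{-1})$, or $\overline{\chi_i(G)}$ — which I would flag and settle explicitly as above; once that is pinned down, the proof is a direct application of Schur orthogonality together with the decomposition of $\phi$ into isotypic components, and presents no further obstacle. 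If the reader is willing to take Schur orthogonality as known, the whole argument collapses to the two-line computation of $M_j$.
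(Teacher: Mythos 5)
Your proof is correct, and the argument you give is the standard one; the paper itself states this lemma without proof (it is quoted as a known tool from character RB, citing Helsen \emph{et al.}), so there is no in-paper derivation to compare against. Your route --- reduce to the block-diagonal model via Maschke, then evaluate $M_j = \frac{d_i}{|\mathsf{G}|}\sum_G \chi_i(G)\phi_j(G)$ entrywise by the Schur orthogonality relations (themselves obtained from the averaged-intertwiner argument with elementary matrices) --- is exactly how one would fill this gap, and the computation $M_j = \delta_{ij}\mathbb{I}$ checks out. The one point you rightly flag is the character convention: as literally written, $\frac{d_i}{|\mathsf{G}|}\sum_G \chi_i(G)\phi(G)$ projects onto the isotypic component of $\phi_i$ only when $\chi_i(G)$ is read as $\chi_i(G^{-1})=\overline{\chi_i(G)}$ (otherwise, for a complex character, it would project onto the isotypic component of the conjugate irrep $\overline{\phi_i}$). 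Since the characters appearing in the paper's twirling applications (e.g.\ $\chi_\sigma(P)=(-1)^{\langle P,\sigma\rangle}$ for the Pauli character group) are real, the distinction is immaterial there, and your explicit handling of it makes the statement precise rather than merely conventional. No gaps beyond that.
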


The representation theory is also useful for us to analyze the group twirling on a channel, $\Lambda$, over a group, $\mathsf{G}$.

\begin{definition}[Group Twirling]
For a representation $\phi$ of the group $\mathsf{G}$ on the linear space $V$, the twirling of a linear map, $\Lambda: V \rightarrow V$ over $\mathsf{G}$ is defined as:
\begin{equation}
\Lambda_{\mathsf{G}} = \frac{1}{\abs{\mathsf{G}}} \sum_{G\in\mathsf{G}} \phi(G)^\dagger\Lambda\phi(G).
\end{equation}
\end{definition}

To analyze the result of group twirling, we introduce Schur's lemma, which is essential in our further elaboration.

\begin{lemma}[Schur's lemma]
Let $\phi_1:\mathsf{G}\rightarrow GL(V_1)$ and $\phi_2:\mathsf{G}\rightarrow GL(V_2)$ be two arbitrary irreducible representations of group $\mathsf{G}$ and $A:V_1\rightarrow V_2$ be a linear map from $V_1$ to $V_2$. Suppose for any $G\in\mathsf{G}$,
\begin{equation}
A\phi_1(G) = \phi_2(G)A.
\end{equation}
Then, $A$ equals $0$, mapping all vectors in $V_1$ to zero vector in $V_2$, if $\phi_1$ and $\phi_2$ are inequivalent irreducible representations; $A$ is proportional to the identity operator if $\phi_1$ and $\phi_2$ are equivalent irreducible representations. Note that when $\phi_1$ and $\phi_2$ are equivalent, $V_1$ and $V_2$ are equivalent linear space, so the identity operator is well-defined.
\end{lemma}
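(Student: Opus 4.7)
The plan is to exploit the intertwining relation $A\phi_1(G) = \phi_2(G)A$ to locate two distinguished $\mathsf{G}$-stable subspaces---namely $\ker(A)\subseteq V_1$ and $\mathrm{Im}(A)\subseteq V_2$---and then leverage irreducibility on each side. First I would verify that $\ker(A)$ is $\phi_1$-invariant: for any $v\in\ker(A)$ one computes $A\phi_1(G)v = \phi_2(G)Av = 0$, so $\phi_1(G)v\in \ker(A)$. A symmetric one-line check shows $\mathrm{Im}(A)$ is $\phi_2$-invariant. Irreducibility of $\phi_1$ then forces $\ker(A)\in\{0,V_1\}$, and irreducibility of $\phi_2$ forces $\mathrm{Im}(A)\in\{0,V_2\}$.

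Next I would split into cases. If $\ker(A)=V_1$ (equivalently $\mathrm{Im}(A)=\{0\}$), then $A=0$, which is the conclusion in the inequivalent case. Otherwise $A$ is both injective and surjective, hence a vector-space isomorphism; the intertwining relation rewrites as $\phi_2(G) = A\phi_1(G)A^{-1}$, witnessing equivalence of $\phi_1$ and $\phi_2$. Contrapositively, if $\phi_1$ and $\phi_2$ are inequivalent then $A$ must vanish, which disposes of the first half of the lemma.

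For the second half, assuming equivalence I would identify $V_1$ with $V_2$ through a fixed intertwiner so that $\phi_1=\phi_2=\phi$ on a common space $V$. Because the representations are over $\mathbb{C}$, the operator $A$ has at least one eigenvalue $\lambda$. The operator $B = A - \lambda\,\mathrm{id}_V$ still satisfies $B\phi(G)=\phi(G)B$ by linearity of the intertwining condition, yet $B$ has a nonzero kernel (any $\lambda$-eigenvector of $A$). Applying the first half of the argument to $B$ forces $B=0$, so $A=\lambda\,\mathrm{id}_V$, as required.

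The main obstacle is essentially conceptual rather than computational: one must be careful about the underlying field. The ``$A=0$'' half works over any field, but the ``$A$ is a scalar multiple of the identity'' half silently uses algebraic closure to guarantee that an eigenvalue of $A$ exists. Since the paper works with complex representations of finite groups throughout---as is implicit in the character-theoretic machinery of Lemma~\ref{lemma:multiplicity} and Lemma~\ref{lemma:project_formula}---this hypothesis is satisfied and no additional work is needed.
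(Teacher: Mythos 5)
Your proof is correct: the kernel/image invariance argument plus the eigenvalue-subtraction step over $\mathbb{C}$ is the classical proof of Schur's lemma, and you rightly flag that algebraic closure is what makes the scalar-multiple conclusion work. The paper itself states this lemma as standard background without proof (deferring to textbook references), so your argument simply supplies the standard textbook derivation that the paper implicitly relies on, including the correct handling of the equivalent case by fixing an intertwiner to identify $V_1$ with $V_2$.
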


Using Schur's Lemma, we can establish the following proposition.
\begin{proposition}\label{prop:twirling}
Let $\phi_1:\mathsf{G}\rightarrow GL(V_1)$ and $\phi_2:\mathsf{G}\rightarrow GL(V_2)$ be two arbitrary irreducible representations of group $\mathsf{G}$ and $A:V_1\rightarrow V_2$ be an arbitrary linear map from $V_1$ to $V_2$. Define
\begin{equation}
A_G = \mathbb{E}_{G\in \mathsf{G}} \phi_2^{\dagger}(G) A \phi_1 (G).
\end{equation}
Then,
\begin{equation}
A_G = 0,
\end{equation}
if $\phi_1$ and $\phi_2$ are inequivalent, and
\begin{equation}
A_G = \frac{\tr(A)}{\dim V_1} \mathbb{I},
\end{equation}
if $\phi_1$ and $\phi_2$ are equivalent, where $\mathbb{I}$ is the identity operator mapping from $V_1$ to $V_2$.
\end{proposition}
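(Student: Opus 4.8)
\textbf{Proof proposal for Proposition~\ref{prop:twirling}.}

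The plan is to reduce the statement to Schur's lemma by showing that $A_G$ is an intertwiner between $\phi_1$ and $\phi_2$, and then to pin down the proportionality constant in the equivalent case by a single trace computation. Throughout I would work with unitary representations, which is harmless: since $\mathsf{G}$ is finite, every representation is equivalent to a unitary one, so I may assume $\phi_i(G)^{\dagger} = \phi_i(G)^{-1} = \phi_i(G^{-1})$. In the equivalent case I would also assume without loss of generality that $V_1 = V_2$ and $\phi_1 = \phi_2$, absorbing the intertwining isomorphism into a redefinition of $A$; this is exactly what the phrase ``the identity operator mapping from $V_1$ to $V_2$'' in the statement presupposes.

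First I would verify the intertwining relation $\phi_2(H)\,A_G = A_G\,\phi_1(H)$ for every $H\in\mathsf{G}$. Starting from $A_G\,\phi_1(H) = \mathbb{E}_{G}\,\phi_2(G)^{\dagger} A\,\phi_1(GH)$ and reindexing the uniform average via $G\mapsto GH^{-1}$ — legitimate because $\mathsf{G}$ is finite and the average is translation-invariant — the prefactor $\phi_2(GH^{-1})^{\dagger}$ splits as $\phi_2(H)\,\phi_2(G)^{\dagger}$ by unitarity, which pulls $\phi_2(H)$ out of the average and leaves $\phi_2(H)\,A_G$. Thus $A_G$ intertwines $\phi_1$ and $\phi_2$.

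Next I would invoke Schur's lemma in the form already stated: an intertwiner between inequivalent irreducibles vanishes, giving $A_G=0$ when $\phi_1\not\simeq\phi_2$; and an intertwiner between equivalent (here, equal) irreducibles is a scalar multiple of the identity, so $A_G = c\,\mathbb{I}$. To evaluate $c$ I would take traces of both sides: $\tr(A_G) = \mathbb{E}_{G}\,\tr\!\big(\phi_1(G)^{\dagger} A\,\phi_1(G)\big) = \tr(A)$ by cyclicity of the trace together with $\phi_1(G)^{\dagger}\phi_1(G)=\mathbb{I}$, while $\tr(c\,\mathbb{I}) = c\,\dim V_1$; hence $c = \tr(A)/\dim V_1$, as claimed.

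I do not expect a genuine obstacle here, as this is the standard ``vector Schur'' computation; the only points requiring care are bookkeeping ones. One is that the $\dagger$ in the definition of $A_G$ must be tradeable for an inverse so that the reindexing step goes through — handled by passing to unitary representations. The other is that, in the equivalent case, the identification of $V_1$ with $V_2$ must be made cleanly so that ``$\mathbb{I}$'' and ``$\tr(A)$'' are unambiguous — handled by conjugating $A$ by a unitary intertwiner, which alters neither the final formula nor its meaning. If one preferred to avoid the reduction to unitary representations, the same argument runs verbatim after replacing every $\phi_i(G)^{\dagger}$ by $\phi_i(G^{-1})$ from the outset, at the cost of $A_G$ losing its clean interpretation as a twirl.
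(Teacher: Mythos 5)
Your proof is correct and follows exactly the route the paper implicitly intends — the paper states this proposition directly after Schur's lemma with the sentence ``Using Schur's Lemma, we can establish the following proposition'' and gives no further argument, so the intended proof is precisely the standard one you supply: show $A_G$ intertwines $\phi_1$ and $\phi_2$, apply Schur, and fix the scalar by taking traces. Your two ``bookkeeping'' points (passing to unitary representations so that $\dagger$ and inverse agree, and fixing an isomorphism $V_1\cong V_2$ so that $\mathbb{I}$ and $\tr(A)$ are well-defined) are handled cleanly and are the only places where care is genuinely required; no gap.
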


For further elaboration and proof, we provide the Burnside theorem in the representation theory.

\begin{lemma}\label{lemma:Burnsidetheorem}
Consider a finite group, $\mathsf{G}$, and its all inequivalent irreducible representations $R_G=\{\phi_i\}$, then the cardinality of the twirling group, $\abs{\mathsf{G}}$ can be given by
\begin{equation}
\abs{\mathsf{G}} = \sum_{\phi_i\in R_G} \dim \phi_i^2.
\end{equation}
\end{lemma}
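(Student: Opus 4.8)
\emph{Proof proposal.} The plan is to prove Lemma~\ref{lemma:Burnsidetheorem} by decomposing the left regular representation of $\mathsf{G}$ into irreducibles and matching dimensions. First I would introduce the regular representation $\phi_{\mathrm{reg}}$ of $\mathsf{G}$ on the group algebra $\mathbb{C}[\mathsf{G}]$, i.e.\ the $\abs{\mathsf{G}}$-dimensional space with orthonormal basis $\{\lket{H}\,:\,H\in\mathsf{G}\}$ and action $\phi_{\mathrm{reg}}(G)\lket{H}=\lket{GH}$. This is a genuine representation (it preserves multiplication), and $\dim\phi_{\mathrm{reg}}=\abs{\mathsf{G}}$.

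Next I would compute its character. Since each $\phi_{\mathrm{reg}}(G)$ permutes the basis vectors $\lket{H}$, its trace equals the number of fixed points, so $\chi_{\mathrm{reg}}(G)=\abs{\{H\in\mathsf{G}:GH=H\}}$, which is $\abs{\mathsf{G}}$ when $G$ is the identity $e$ and $0$ otherwise. Then, applying Lemma~\ref{lemma:multiplicity}, the multiplicity $n_i$ of any irreducible $\phi_i\in R_G$ (with character $\chi_i$) in $\phi_{\mathrm{reg}}$ is
\begin{equation}
n_i=\braket{\chi_i}{\chi_{\mathrm{reg}}}=\frac{1}{\abs{\mathsf{G}}}\sum_{G\in\mathsf{G}}\chi_i^*(G)\chi_{\mathrm{reg}}(G)=\frac{1}{\abs{\mathsf{G}}}\,\chi_i^*(e)\,\abs{\mathsf{G}}=\chi_i(e)=\dim\phi_i,
\end{equation}
where I use that $\chi_i(e)=\tr(\phi_i(e))=\tr(\mathbb{I}_{\dim\phi_i})=\dim\phi_i$ is a positive integer, hence real. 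In particular $n_i\ge 1$, so every inequivalent irreducible representation of $\mathsf{G}$ actually occurs in $\phi_{\mathrm{reg}}$; thus the set of irreducibles appearing in the Maschke decomposition of $\phi_{\mathrm{reg}}$ is exactly $R_G$.

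Finally, by Maschke's theorem, Eq.~\eqref{eq:directsum}, one has $\phi_{\mathrm{reg}}(G)\simeq\bigoplus_{\phi_i\in R_G}\mathbb{I}_{n_i\times n_i}\otimes\phi_i(G)$, and comparing dimensions gives
\begin{equation}
\abs{\mathsf{G}}=\dim\phi_{\mathrm{reg}}=\sum_{\phi_i\in R_G}n_i\dim\phi_i=\sum_{\phi_i\in R_G}(\dim\phi_i)^2,
\end{equation}
which is the claim. The argument is essentially routine once the regular representation is set up: the only points requiring care are that the orthonormality of irreducible characters underlying Lemma~\ref{lemma:multiplicity} is available (taken from the preceding subsection), and the observation that $n_i>0$ forces every irreducible to appear, so that the index set $R_G$ in the statement genuinely coincides with the index set of the decomposition. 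I expect no substantive obstacle beyond bookkeeping.
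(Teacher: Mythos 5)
Your proof is correct and complete. The paper does not actually prove this lemma --- it is stated as a classical fact and cited to a representation-theory reference --- so there is no paper argument to compare against; your regular-representation argument is precisely the standard textbook proof, and every step (the character of $\phi_{\mathrm{reg}}$ vanishing off the identity, the multiplicity computation $n_i=\dim\phi_i$ via Lemma~\ref{lemma:multiplicity}, the observation $n_i\geq 1$ so that $R_G$ is exhausted, and the dimension count) is sound.
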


\subsection{Randomized benchmarking and diagonalizability of twirled noise channel}\label{appendssc:RB}
Below, we introduce randomized benchmarking (RB) and character randomized benchmarking~\cite{Helsen2019characterRB}. Let us start with a brief review of the standard RB, which aims to estimate the fidelity of a quantum gate group. We consider an $n$-qubit gate group, $\mathsf{G}$, where each gate $G\in \mathsf{G}$ is assumed to be with a noise channel $\Lambda$ in implementation. In reality, the noise channels for different gates can be different. Below, we simply consider the first-order approximation~\cite{Emerson2012praRB}, which is valid and useful in experiments, and regard the noise channels for different gates in $\mathsf{G}$ as the same. One can express the noisy quantum gate $\widetilde{\mathcal{G}} = \Lambda\mathcal{G}$ as the composite of the noiseless gate $G$ and its noise channel $\Lambda$ for any gate $G\in\mathsf{G}$. Then, the task of RB is estimating the fidelity of $\Lambda$ robust to SPAM error.

To achieve that, one always performs random gate sequences composed of $m$ random gates from $\mathsf{G}$ and an inverse quantum gate where $m$ is a prefixed integer called circuit depth,
\begin{equation}\label{eq:randomsequence}
\begin{split}
\widetilde{S} &= \widetilde{\mathcal{G}}_{inv} \prod_{i=1}^{m} \widetilde{\mathcal{G}}_i\\
&= \Lambda \mathcal{G}_{inv} \prod_{i=1}^{m} \Lambda \mathcal{G}_i,
\end{split}
\end{equation}
where $G_{inv} = (\prod_{i=1}^m G_i)^{-1}$, $\mathcal{G}_i$ denotes the Liouville representation of $G_i$, and $\widetilde{\cdot}$ means the quantum gate is noisy.

In the sense of expectation, the random gate sequence $\widetilde{S}$ is equal to
\begin{equation}
\begin{split}
\mathbb{E}_{\forall i, G_i\in\mathsf{G}} \widetilde{S} &= \Lambda \mathbb{E}_{\forall i, G_i\in\mathsf{G}} \prod_{i=1}^{m} (\prod_{j=1}^i\mathcal{G}_j)^{\dagger} \Lambda (\prod_{j=1}^i\mathcal{G}_j)\\
&=\Lambda (\mathbb{E}_{G\in \mathsf{G}}\mathcal{G}^{\dagger}\Lambda \mathcal{G})^m\\
&= \Lambda\Lambda_G^m.
\end{split}
\end{equation}
The second line utilizes the fact that $\mathsf{G}$ is a group. Thus, $\prod_{j=1}^i\mathcal{G}_j$ independently
and identically satisfy the uniform distribution on $\mathsf{G}$ for any $1\leq j\leq m$.

If we use the same state preparation and measurement for all random sequences, then in the expectation, we can get
\begin{equation}
\begin{split}
f(m) &= \mathbb{E}_{\forall i, G_i\in\mathsf{G}} \lbra{M} \widetilde{S} \lket{\rho}\\
&= \lbra{M}\Lambda\Lambda_G^m \lket{\rho}\\
&= \lbra{M'}\Lambda_G^m \lket{\rho},
\end{split}
\end{equation}
where $\rho$ and $M$ are prefixed initial state and measurement, respectively, and $M'$ is the measurement absorbing the noise of the inverse gate, satisfying $\lbra{M'} = \lbra{M}\Lambda$.

The next step is obtaining the trace of $\Lambda_G$ robust to state preparation and measurement (SPAM) error. Note that the fidelity $F(\Lambda) = \frac{1}{d^2}\tr(\Lambda)=\frac{1}{d^2}\tr(\Lambda_G)$. Normally, $\Lambda_G$ has few parameters as it has been twirled and becomes symmetric. These parameters can be evaluated robustly to SPAM error via obtaining $f(m)$ with different circuit depths $m$ and employing exponential fitting. Then, the trace of $\Lambda_G$, or the fidelity, can be evaluated with the parameters of $\Lambda_G$.

Before the work of character RB~\cite{Helsen2019characterRB}, researchers mainly focus on a large group, $G$, like the Clifford group, to make $\Lambda_G$ highly symmetric with few parameters. For Clifford RB~\cite{Emerson2011prlRB,Emerson2012praRB}, $\Lambda_G$ is a depolarizing channel with a single parameter $p$ satisfying $\Lambda_G(\rho) = p\rho + (1-p)\frac{\mathbb{I}_d}{d}$. In Pauli-Liouville representation, $\Lambda_G = \lketbra{\sigma_0}{\sigma_0} + p\sum_{i\geq 1} \lketbra{\sigma_i}{\sigma_i}$. Then, $f(m)$ has a single exponential decay expression
\begin{equation}
f(m) = Ap^m + B,
\end{equation}
where $p$ is only relevant to $\Lambda$, and $A$ and $B$ are only relevant to SPAM. Thus, via single exponential fitting, one can get $p$ along with the trace of $\Lambda_G$ and the fidelity of $\Lambda$.

For CNOT dihedral RB~\cite{Cross2016dihedral}, $\Lambda_G = \Pi_0 + p_Z \Pi_Z + p_X \Pi_X$ has two undetermined parameters, $p_Z$ and $p_X$ where $\Pi_0 = \lketbra{\sigma_0}{\sigma_0}$, $\Pi_Z = \sum_{\sigma_z \in \{\frac{\mathbb{I}}{\sqrt{2}},\frac{Z}{\sqrt{2}}\}^{\otimes n}/\sigma_0} \lketbra{\sigma_z}{\sigma_z}$, and $\Pi_X = \sum_{\sigma_x\in \mathsf{P}'_n/\{\frac{\mathbb{I}}{\sqrt{2}},\frac{Z}{\sqrt{2}}\}^{\otimes n}}\lketbra{\sigma_x}{\sigma_x}$, then $f(m)$ is a double exponential decay function with parameters $p_Z$ and $p_X$. One has to employ double exponential fitting to obtain $\tr(\Lambda_G)$ and its fidelity.

For a general group $\mathsf{G}$, $f(m)$ is complex and generally not an exponential decay function~\cite{Helsen2022RBFramework}. But if $\Lambda_G$ can be written as
\begin{equation}\label{eq:projectordecompose}
\Lambda_G = \sum_{i} p_i \Pi_i,
\end{equation}
where $\{\Pi_i\}$ are mutually orthogonal projectors only dependent on $\mathsf{G}$ in the space of Liouville representation, we can obtain all decay parameters $p_i$ with only single exponential fitting via the technique of character RB~\cite{Helsen2019characterRB}. Note that Eq.~\eqref{eq:projectordecompose} can also be interpreted as that $\Lambda_G$ is diagonal in the Pauli-Liouville representation up to a unitary transformation $\mathcal{T}$,
\begin{equation}\label{eq:projectordecompose2}
\Lambda_G = \mathcal{T} (\sum_i p_i \Pi^P_i)
\mathcal{T}^{\dagger},
\end{equation}
where $\Pi^P_i$ is a projector, equaling the summation of the Pauli operator bases. The projector $\Pi^P_i$ and the unitary transformation $\mathcal{T}$ are independent of the channel $\Lambda$ and are only related to group $\textsf{G}$. We consider the case that $\mathcal{T}$ is a Liouville representation of a unitary gate $T$.

Note that in this case,
\begin{equation}\label{eq:survprob}
\begin{split}
f(m) &= \lbra{M'}\Lambda_G^m \lket{\rho}\\
&= \sum_i \lbra{M'}\Pi_i \lket{\rho} p_i^m,
\end{split}
\end{equation}
is a multiple exponential decay function. If we directly fit $f(m)$ with a multiple exponential decay function, $\sum_i A_ip_i^m$, the fitting process would consume massive computational resources, and the result is normally inaccurate. The key step of character RB is utilizing Lemma~\ref{lemma:project_formula}. Instead of implementing the gate sequence in Eq.~\eqref{eq:randomsequence}, in character RB, we select a projector $\Pi_j$ from $\{\Pi_i\}$ and implement
\begin{equation}
\begin{split}
\widetilde{S}' &= \widetilde{\mathcal{G}}_{inv} (\prod_{i=1}^{m} \widetilde{\mathcal{G}}_i) \widetilde{\mathcal{G}'}\\
&= \Lambda \mathcal{G}_{inv} (\prod_{i=1}^{m} \Lambda \mathcal{G}_i) \mathcal{G}',
\end{split}
\end{equation}
where $G'$ is named character gate and is independently sampled from a predetermined gate set, named character group, $\mathsf{G}'$. We select a character function $\chi'$ associated with an irreducible representation $\phi'$ of $\mathsf{G}'$ and obtain a projector according to Lemma~\ref{lemma:project_formula},
\begin{equation}
\Pi' = \frac{\dim \phi'}{\abs{\mathsf{G}'}}\sum_{G'\in \mathsf{G}'}\chi'(G') \mathcal{G}',
\end{equation}
such that for any projector $\Pi_i\in \{\Pi_i\}$,
\begin{equation}\label{eq:projection}
\Pi'\Pi_i = \delta_{ij}\Pi'.
\end{equation}

Meanwhile, we consider $\lbra{M} \widetilde{S}' \dim \phi'\chi'(G') \lket{\rho}$ and in the expectation, this quantity equals
\begin{equation}
\begin{split}
f_j(m) &= \mathbb{E}_{\forall i, G_i\in\mathsf{G}, G'\in \mathsf{G}'} \lbra{M} \widetilde{S}' \dim \phi'\chi'(G') \lket{\rho}\\
&= \lbra{M}\Lambda\Lambda_G^m \Pi'\lket{\rho}\\
&= \lbra{M'}\Lambda_G^m \Pi'\lket{\rho}\\
&= \sum_i p_i^m \lbra{M'} \Pi_i\Pi'\lket{\rho}\\
&= \lbra{M'} \Pi'\lket{\rho} p_j^m.\\
\end{split}
\end{equation}
Thus, $p_j$ can be obtained via single exponential fitting and due to the arbitrariness of $\Pi_j$, all parameters of $\Lambda_G$ can be obtained via single exponential fitting, and hence the fidelity can be evaluated accurately.

Note that in character RB, the key point is realizing projector $\Pi'$. Below we show that we can always realize $\Pi'$ satisfying Eq.~\eqref{eq:projection}. First, each projector can be decomposed as
\begin{equation}
\Pi_j = \sum_{k=1}^{\tr(\Pi_j)} \lketbra{\mu_{j_k}}{\mu_{j_k}},
\end{equation}
where $\{\mu_{j_k}, 1\leq k\leq \tr(\Pi_j), \forall \Pi_j\in \{\Pi_i\}\}$ forms an orthonormal operator basis in Liouville representation. As normalized Pauli operators $\mathsf{P}_n'$ is also an orthonormal basis, there exists a unitary transformation linking the two bases. Then, given $\Pi_j$ from $\{\Pi_i\}$, we select $\lketbra{\mu_{j_1}}{\mu_{j_1}}$, which is equal to $\mathcal{T}\lketbra{\sigma}{\sigma}\mathcal{T}^{\dagger} = \lketbra{T\sigma T^{\dagger}}{T\sigma T^{\dagger}}$ where $\lketbra{\sigma}{\sigma}$ is a Pauli operator basis and $T$ is the unitary transformation linking two bases. For Pauli group $\mathsf{P}_n$, we have the following equation,
\begin{equation}
\lketbra{\sigma}{\sigma} =  \mathbb{E}_{P\in \mathsf{P}_n} \chi_{\sigma}(P)\mathcal{P},
\end{equation}
where $\chi_{\sigma} = (-1)^{\langle P, \sigma \rangle}$ equals $1$ when $P$ and $\sigma$ commute and $-1$ otherwise. Then, we only need to choose $\mathsf{G}' = T\mathsf{P}_n T^{\dagger}$ and $\chi'=\chi_{\sigma}$ to realize
\begin{equation}
\Pi' = \mathcal{T}\lketbra{\sigma}{\sigma}\mathcal{T}^{\dagger} = \mathbb{E}_{P\in \mathsf{P}_n} \chi'(P)\mathcal{T}\mathcal{P}\mathcal{T}^{\dagger}.
\end{equation}
Then, $\Pi'$ satisfies Eq.~\eqref{eq:projection}.

In summary, as long as $\Lambda_G$ has an expression of Eq.~\eqref{eq:projectordecompose} or Eq.~\eqref{eq:projectordecompose2}, then with the technique of character RB, one can obtain $\tr(\Lambda_G)$ and the fidelity of $\Lambda$ accurately with only single exponential fitting. In Appendix~\ref{appendsc:group}, we would prove that if the Liouville representation of $\mathsf{G}$ is not multiplicity-free, as defined in Definition~\ref{def:multiplicityfree}, then $\Lambda_G$ cannot be diagonal for \textit{arbitrary} noise channel $\Lambda$. If the Liouville representation of $\mathsf{G}$ is multiplicity-free, then~\cite{Helsen2019characterRB} has shown that the character group $\mathsf{G}'$ can be chosen as a subgroup of $\mathsf{G}$. In this case, each projector $\Pi_i$ in Eq.~\eqref{eq:projectordecompose} relates to an irreducible representation subspace of $\mathsf{G}$. Denote this irreducible representation as $\phi_i$ with character $\chi_i$, then $\Pi_i$ can be realized with
\begin{equation}
\Pi_i = \frac{\dim \phi_i}{\abs{\mathsf{G}}}\sum_{G'\in \mathsf{G}}\chi_i(G') \mathcal{G}'.
\end{equation}
Thus, implementing character gate $G'$ is not harder than the twirling group.

Besides adding character gates, there is another method to effectively realize the projector $\Pi_i$ in Eq.~\eqref{eq:projectordecompose}. If one can realize a measurement $M$ such that $\sum_i p_i\lbra{M'}\Pi_i = p_j\lbra{M'}\Pi_j$, then Eq.~\eqref{eq:survprob} will also reduce to $p_j^m\lbra{M'}\Pi_j\lket{\rho}$. An accurate initial state $\rho$, satisfying $\sum_i p_i\Pi_i\lket{\rho} = p_j\Pi_j\lket{\rho}$, can also achieve that. Thus, if we have some information about SPAM, we can realize the projection without the need to implement character gates.

\subsection{Interleaved randomized benchmarking}\label{appendssc:interleavedRB}
Above, we only introduce how to evaluate the fidelity of a quantum gate group $\mathsf{G}$ via randomized benchmarking. In order to obtain the gate fidelity of an individual target gate, $U$, one needs to utilize the technique of interleaved RB~\cite{Magesan2012interleavedRB}. In~\cite{Magesan2012interleavedRB}, the target gate $U$ is embedded into a group, $\mathsf{G}$. Below we call them the twirled gate and the twirling group, respectively. To enable interleaved RB, one implements two kinds of circuits. The first is just a random gate sequence in regular RB, extracting the average gate fidelity of the twirling group $\mathsf{G}$. The second type of circuit is composed of random twirling gates from $\mathsf{G}$ interleaved with the target gate $U$, extracting the composition gate fidelity of the twirling group and the target gate. Comparing the two results, one can get the individual gate fidelity of the target gate. Specifically, suppose the noise channel for twirling group $\mathsf{G}$ is $\mathcal{E}$ and the noise channel for the target gate is $\Lambda$. Then with the RB methods introduced before one can obtain $F(\mathcal{E}) = \frac{\tr(\mathcal{E})}{d^2}$. After that, one implements an interleaved random gate sequence,
\begin{equation}\label{eq:interleavedrandomsequence}
\begin{split}
\widetilde{S}_i &= \widetilde{\mathcal{G}}_{inv} \prod_{i=1}^{m} \widetilde{\mathcal{U}}\widetilde{\mathcal{G}}_i\\
&= \mathcal{E} \mathcal{G}_{inv} \prod_{i=1}^{m} \mathcal{U} \Lambda \mathcal{E} \mathcal{G}_i,
\end{split}
\end{equation}
where $G_{inv} = (\prod_{i=1}^m UG_i)^{-1}$, $\mathcal{G}_i$ and $\mathcal{U}$ denote the Liouville representation of $G_i$ and $U$, respectively, and $\widetilde{\cdot}$ means the quantum gate is noisy. Note that $\widetilde{\mathcal{U}} = \mathcal{U}\Lambda$ while $\widetilde{\mathcal{G}}=\mathcal{E}\mathcal{G}$. The noise channels for $U$ and gates in $\mathsf{G}$ are put in different positions, but due to the arbitrariness of $\Lambda$ and $\mathcal{E}$, the difference does not put any restriction on the noise. In~\cite{Magesan2012interleavedRB}, the target gate $U$ belongs to twirling group $\mathsf{G}$, then under the expectation of sampling of $G_i$, Eq.~\eqref{eq:interleavedrandomsequence} is equal to,
\begin{equation}
\begin{split}
\mathbb{E}_{\forall i, G_i\in\mathsf{G}} \widetilde{S}_i &= \mathcal{E} \mathbb{E}_{\forall i, G_i\in\mathsf{G}} \prod_{i=1}^{m} ((\prod_{j=2}^i\mathcal{G}_j\mathcal{U})\mathcal{G}_1)^{\dagger} \Lambda\mathcal{E} ((\prod_{j=2}^i\mathcal{G}_j\mathcal{U})\mathcal{G}_1)\\
&=\mathcal{E} (\mathbb{E}_{G\in \mathsf{G}}\mathcal{G}^{\dagger}\Lambda\mathcal{E} \mathcal{G})^m\\
&= \mathcal{E}(\Lambda\mathcal{E})_G^m.
\end{split}
\end{equation}
The second line utilizes the fact that $\mathsf{G}$ is a group. Thus, same as the discussion in the previous subsection, interleaved quantum circuits allow us to estimate the fidelity of $\Lambda \mathcal{E}$, that is, $F(\Lambda\mathcal{E}) = \frac{\tr(\Lambda \mathcal{E})}{d^2}$. In~\cite{Magesan2012interleavedRB}, the authors show how to estimate $F(\Lambda)$ from $F(\Lambda\mathcal{E})$ and $F(\mathcal{E})$ and we omit the detail here. In~\cite{Magesan2012interleavedRB}, the target gate $U$ belongs to $\mathsf{G}$ so the noisy levels, or the fidelities, of $\Lambda$ and $\mathcal{E}$ are close. But generally speaking, implementing $\mathsf{G}$ is easier than implementing $U$, then $\mathcal{E}$ can be viewed as identity $\mathcal{I}$ compared to $\Lambda$. In our work, we simply take $\mathcal{E} = \mathcal{I}$ and focus on estimating $F(\Lambda\mathcal{E}) = F(\Lambda)$.

\subsection{Dihedral group and classically replaceable unitary operations}
For further elaboration, in this part, we introduce the local dihedral group and classically replaceable unitary operations (CRU)~\cite{Liu2022CRO} along with their properties. The local dihedral group on $n$ qubits is similar to the $n$-qubit Pauli group, which is defined as below.
\begin{equation}
\mathsf{D}_n^m = \langle X, Z_m \rangle^{\otimes n} = \langle X_1, (Z_m)_1, X_2, (Z_m)_2, \cdots, X_n, (Z_m)_n \rangle,
\end{equation}
where $X = \begin{pmatrix}
0 & 1\\
1 & 0
\end{pmatrix}$ is the Pauli $X$ gate and $Z_m = \begin{pmatrix}
1 & 0\\
0 & e^{i\frac{2\pi}{m}}
\end{pmatrix}$ is a phase gate with phase $\frac{2\pi}{m}$. Here, $m$ is a positive integer. In general, the phase on different qubits can be different and we can define $\langle X_1, (Z_{m_1})_1, X_2, (Z_{m_2})_2, \cdots, X_n, (Z_{m_n})_n \rangle$, but below we only consider a simple case that $m_1=m_2=\cdots=m_n=m$. Same with the discussion of the Pauli group, we can define the projective local dihedral group via quotient by the center of $\langle X, Z_m \rangle^{\otimes n}$,
\begin{equation}
\mathsf{D}_{\prime n}^m = \langle X, Z_m \rangle^{\otimes n} / \langle \omega_m \rangle,
\end{equation}
where $\omega_m = e^{i\frac{2\pi}{m}}$ and $\langle \omega_m \rangle = \{e^{i\frac{2k\pi}{m}}, 0\leq k\leq m-1\}$. As the overall phase is not important, below we do not distinguish the local dihedral group and the projective local dihedral group and use the definition $\mathsf{D}_n^m = \langle X, Z_m \rangle^{\otimes n} / \langle \omega_m \rangle$.

Classically replaceable unitary operations, or incoherent unitary operations, are all unitary gates that can be moved after computational basis measurements and be replaced with classical post-processing. Given the computational basis $\ket{i}$ on the Hilbert space $\mathcal{H}$, we can define the dephasing operation $\Delta$. For any $\rho\in \mathcal{D}(\mathcal{H})$,
\begin{equation}
\Delta(\rho)  = \sum_{i} \ketbra{i} \rho \ketbra{i}.
\end{equation}
Then, the CRU gate set has an expression,
\begin{equation}
\{U \text{ is unitary}| \mathcal{U}\Delta = \Delta \mathcal{U}\}.
\end{equation}
Note that we use the same notation $\Delta$ to represent its map representation and Liouville representation. It is shown that a CRU can be given by~\cite{Liu2022CRO,Winter2016Coherence}
\begin{equation}\label{eq:CRU}
U = \sum_j e^{i\theta_j} \ketbra{\sigma(j)}{j},
\end{equation}
where $\sigma$ is a permutation over computational basis and $\theta_j$ is a phase in $[ 0, 2\pi )$. And any unitary gate having an expression of~\eqref{eq:CRU} is a CRU. We can decompose $U$ as
\begin{equation}
U = \sum_j \ketbra{\sigma(j)}{j} \sum_j e^{i\theta_j} \ketbra{j}.
\end{equation}
Set $\Pi = \sum_j \ketbra{\sigma(j)}{j}$ and $W = \sum_j e^{i\theta_j} \ketbra{j}$ and we can see any CRU can be written as the multiplication of a permutation matrix, $\Pi$, and a diagonal matrix, $W$, $U = \Pi W$. Note that any matrix with the form $\sum_j \ketbra{\sigma(j)}{j}$ for arbitrary permutation $\sigma$ is defined to be a permutation matrix. Also, the multiplication of a permutation matrix and a diagonal matrix is always expressed as Eq.~\eqref{eq:CRU} and is a CRU.

\begin{lemma}
Any classically replaceable unitary operation $U$ can be decomposed as the multiplication of a permutation matrix, $\Pi$, and a diagonal matrix, $W$, $U = \Pi W$. Vice versa.
\end{lemma}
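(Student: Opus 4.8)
The plan is to prove the two implications separately, using only the defining relation $\mathcal{U}\Delta = \Delta\mathcal{U}$ of a CRU and the expression~\eqref{eq:CRU} already recorded above.

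For the forward direction (a CRU has the form $\Pi W$), I would feed diagonal inputs into the commutation relation. Taking $\rho = \ketbra{j}{j}$, dephasing acts trivially on the left, so $\mathcal{U}\Delta(\ketbra{j}{j}) = U\ketbra{j}{j}U^\dagger$, while the right-hand side of $\mathcal{U}\Delta=\Delta\mathcal{U}$ evaluated on $\ketbra{j}{j}$ is $\Delta(U\ketbra{j}{j}U^\dagger)$. Equating the two forces the rank-one projector $U\ketbra{j}{j}U^\dagger$ to be invariant under $\Delta$, hence diagonal; a diagonal rank-one projector equals $\ketbra{k}{k}$ for a single index $k$. Therefore $U\ket{j} = e^{i\theta_j}\ket{\sigma(j)}$ for some index $\sigma(j)$ and phase $\theta_j\in[0,2\pi)$. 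Because $U$ is unitary it maps the orthonormal basis $\{\ket{j}\}$ to an orthonormal set, so distinct $j$ yield distinct $\sigma(j)$, i.e.\ $\sigma$ is a permutation. Collecting columns gives $U = \sum_j e^{i\theta_j}\ketbra{\sigma(j)}{j}$, which is precisely~\eqref{eq:CRU}; setting $\Pi = \sum_j \ketbra{\sigma(j)}{j}$ and $W = \sum_j e^{i\theta_j}\ketbra{j}{j}$ yields $U = \Pi W$ with $\Pi$ a permutation matrix and $W$ diagonal.

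For the converse (every $\Pi W$ is a CRU), I would verify $\mathcal{U}\Delta = \Delta\mathcal{U}$ directly for $U = \Pi W$. The channel $\rho\mapsto W\rho W^\dagger$ commutes with $\Delta$ since both are diagonal in the computational basis; the permutation channel $\rho\mapsto \Pi\rho\Pi^\dagger$ sends $\ketbra{i}{i}$ to $\ketbra{\sigma(i)}{\sigma(i)}$ and hence also commutes with $\Delta$ (it merely relabels the summation index). Composing the two gives $\mathcal{U}\Delta = \Delta\mathcal{U}$. Alternatively, one observes that $\Pi W$ already has the form~\eqref{eq:CRU} and invokes the fact, cited above, that any operator of that form is a CRU.

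The only delicate step — and the one I would spell out explicitly rather than leave implicit — is the deduction in the forward direction that $\Delta(U\ketbra{j}{j}U^\dagger) = U\ketbra{j}{j}U^\dagger$ forces $U\ket{j}$ to be a single computational basis vector up to phase. This is elementary (a diagonal rank-one projector has exactly one nonzero diagonal entry, equal to $1$), but it is where the ``incoherence'' content of the lemma resides; everything else is bookkeeping around unitarity and the factorization $\Pi W$.
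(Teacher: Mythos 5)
Your proof is correct and follows essentially the same route as the paper: establish the canonical form $U = \sum_j e^{i\theta_j}\ketbra{\sigma(j)}{j}$, then read off $\Pi = \sum_j \ketbra{\sigma(j)}{j}$ and $W = \sum_j e^{i\theta_j}\ketbra{j}$. The only difference is that the paper outsources the derivation of that form to the cited references~\cite{Liu2022CRO,Winter2016Coherence}, whereas you supply the short argument yourself (feeding $\ketbra{j}{j}$ into $\mathcal{U}\Delta=\Delta\mathcal{U}$ and noting that a $\Delta$-invariant rank-one projector must be a single $\ketbra{k}{k}$), which makes the proof self-contained but does not change the approach.
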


From the above lemma, we can obtain that the computational basis gate set is invariant under the action of a CRU.

\begin{lemma}\label{lemma:ZinvariantCRU}
Diagonal matrices set or computational basis gate set is invariant under the action of any CRU.
\end{lemma}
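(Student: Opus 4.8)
The whole statement reduces to the decomposition $U=\Pi W$ of a CRU into a permutation matrix $\Pi$ and a diagonal unitary $W=\sum_j e^{i\theta_j}\ketbra{j}$, established in the preceding lemma. Here "the action of a CRU $U$" is to be read as conjugation, $D\mapsto UDU^{\dagger}$ — the same operation appearing in the twirl $\Lambda_G=\mathbb{E}_{G}\,\mathcal{G}\Lambda\mathcal{G}^{\dagger}$ — so the goal is to show that for every diagonal matrix $D$ the operator $UDU^{\dagger}$ is again diagonal, and that when $D$ is additionally unitary (a computational-basis gate) so is $UDU^{\dagger}$.

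First I would substitute the decomposition and dispose of the diagonal-by-diagonal part: writing $D=\sum_j d_j\ketbra{j}$, one has $UDU^{\dagger}=\Pi\,WDW^{\dagger}\,\Pi^{\dagger}$, and since $W$ and $D$ are both diagonal they commute, so $WDW^{\dagger}=D\,WW^{\dagger}=D$ and hence $UDU^{\dagger}=\Pi D\Pi^{\dagger}$. It then remains to conjugate a diagonal matrix by a permutation matrix: with $\Pi=\sum_j\ketbra{\sigma(j)}{j}$ for the permutation $\sigma$ underlying $U$, a one-line computation gives $\Pi\ketbra{j}\Pi^{\dagger}=\ketbra{\sigma(j)}$, so $\Pi D\Pi^{\dagger}=\sum_j d_j\ketbra{\sigma(j)}$, which is manifestly diagonal, its entries being those of $D$ permuted by $\sigma$. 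This settles the "diagonal matrices" statement.

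For the "computational basis gate" statement, observe that if $D$ is moreover unitary then $UDU^{\dagger}$ is a product of unitaries, hence unitary, and being diagonal it is again a computational-basis gate; since $U^{\dagger}$ is also a CRU, the map in fact carries this set bijectively onto itself, so the set is invariant in the strong sense.

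There is essentially no hard step: all of the content is carried by the decomposition lemma, and what remains is the bookkeeping that conjugation by the diagonal factor $W$ fixes every diagonal operator while conjugation by the permutation factor $\Pi$ merely permutes its diagonal entries. As an alternative one could argue directly from the dephasing characterization $\mathcal{U}\Delta=\Delta\mathcal{U}$: this relation also yields $\Delta\mathcal{U}^{\dagger}=\mathcal{U}^{\dagger}\Delta$, and since a diagonal unitary $W$ satisfies $\mathcal{W}\Delta=\Delta$, one gets $\mathcal{U}\mathcal{W}\mathcal{U}^{\dagger}\Delta=\mathcal{U}\mathcal{W}\Delta\,\mathcal{U}^{\dagger}=\mathcal{U}\Delta\,\mathcal{U}^{\dagger}=\Delta$, i.e.\ $UWU^{\dagger}$ is diagonal; this avoids the permutation bookkeeping but is morally the same computation.
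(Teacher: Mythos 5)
Your proof is correct and takes essentially the same route as the paper's: both rely on the explicit CRU form $U=\sum_j e^{i\theta_j}\ketbra{\sigma(j)}{j}$ and conclude that conjugation sends a diagonal matrix $\sum_j d_j\ketbra{j}$ to $\sum_j d_{\sigma^{-1}(j)}\ketbra{j}$. The paper simply carries out the conjugation in one direct computation, whereas you first peel off $U=\Pi W$ and handle the diagonal factor $W$ (which acts trivially) and the permutation factor $\Pi$ (which permutes the diagonal entries) separately; your added observations about unitarity being preserved and about $U^{\dagger}$ also being a CRU, as well as the alternative argument via $\mathcal{U}\Delta=\Delta\mathcal{U}$, are correct but go slightly beyond what the paper records.
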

\begin{proof}
For any CRU $U = \sum_j e^{i\theta_j} \ketbra{\sigma(j)}{j}$ and diagonal matrix $W = \sum_j e^{\phi_j} \ketbra{j}$,
\begin{equation}
\begin{split}
UWU^{-1} &= \sum_{jj'k}e^{i(\phi_k + \theta_j - \theta_{j'})} \ket{\sigma(j)} \braket{j}{k} \braket{k}{j'}\bra{\sigma(j')}\\
&= \sum_j e^{i\phi_j} \ketbra{\sigma(j)}\\
&= \sum_j e^{i\phi_{\sigma^{-1}(j)}} \ketbra{j},
\end{split}
\end{equation}
which is a diagonal matrix. Proof is done.
\end{proof}

In fact, for any CRU subgroup $\mathsf{G}\leq \text{CRU}$, all diagonal matrices in $\mathsf{G}$ forms a subgroup, $\mathsf{G}_Z$ of $\mathsf{G}$. Then, $\mathsf{G}_Z$ is invariant under the action of any gate in $\mathsf{G}$, or equivalently, $\mathsf{G}_Z$ is a normal subgroup of $\mathsf{G}$.

\begin{lemma}\label{lemma:ZinvariantCRU2}
Given an $n$-qubit CRU subgroup $\mathsf{G}$, set the computational basis subgroup of $\mathsf{G}$ as $\mathsf{G}_Z = \{U\in \mathsf{G} | U \text{ is diagonal} \}$. Then, $\mathsf{G}_Z$ is a normal subgroup of $\mathsf{G}$.
\end{lemma}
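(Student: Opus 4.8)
The plan is to verify the two defining properties of a normal subgroup in turn: that $\mathsf{G}_Z$ is a subgroup of $\mathsf{G}$, and that it is stable under conjugation by arbitrary elements of $\mathsf{G}$. Both reduce to elementary observations, the second of which is essentially already packaged in Lemma~\ref{lemma:ZinvariantCRU}.

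First I would check the subgroup condition. The identity matrix is diagonal and lies in $\mathsf{G}$, so $\mathbb{I}\in\mathsf{G}_Z$. For closure, if $D_1,D_2\in\mathsf{G}_Z$ then $D_1D_2\in\mathsf{G}$ since $\mathsf{G}$ is a group, and $D_1D_2$ is diagonal because a product of diagonal matrices is diagonal; hence $D_1D_2\in\mathsf{G}_Z$. For inverses, if $D = \sum_j e^{i\phi_j}\ketbra{j}\in\mathsf{G}_Z$, then $D^{-1} = \sum_j e^{-i\phi_j}\ketbra{j}$ is diagonal, and $D^{-1}\in\mathsf{G}$ again because $\mathsf{G}$ is a group. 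So $\mathsf{G}_Z\leq\mathsf{G}$.

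Next I would establish normality. Take any $G\in\mathsf{G}$ and any $D\in\mathsf{G}_Z$. Since $\mathsf{G}\leq\mathrm{CRU}$, the element $G$ is itself a CRU, so Lemma~\ref{lemma:ZinvariantCRU} applies: the conjugate $GDG^{-1}$ of the diagonal matrix $D$ by the CRU $G$ is again a diagonal matrix. At the same time, $GDG^{-1}\in\mathsf{G}$ because $\mathsf{G}$ is a group and contains $G$, $D$, and $G^{-1}$. Combining these two facts gives $GDG^{-1}\in\mathsf{G}_Z$, which is precisely the statement that $\mathsf{G}_Z$ is invariant under conjugation by every element of $\mathsf{G}$, i.e.\ $\mathsf{G}_Z$ is a normal subgroup of $\mathsf{G}$.

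There is no real obstacle here; the only thing worth being careful about is not to conflate "diagonal in the computational basis" with "unitary and diagonal" — but since every element of $\mathsf{G}$ is unitary by hypothesis, all diagonal elements encountered are automatically of the form $\sum_j e^{i\phi_j}\ketbra{j}$, so their inverses stay diagonal and the argument closes. The substantive content — that CRUs preserve the set of diagonal matrices under conjugation — has already been done in Lemma~\ref{lemma:ZinvariantCRU}, so this lemma is just the observation that this preservation, intersected with the group $\mathsf{G}$, yields a normal subgroup.
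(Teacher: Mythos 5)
Your proof is correct and follows essentially the same route as the paper's: both rest on Lemma~\ref{lemma:ZinvariantCRU} to show that conjugation by a CRU keeps diagonal matrices diagonal, combined with the observation that $\mathsf{G}_Z$ collects \emph{all} diagonal elements of $\mathsf{G}$, so the conjugate lands back inside $\mathsf{G}_Z$. The only cosmetic difference is that the paper upgrades $G\mathsf{G}_Z G^{\dagger}\subseteq\mathsf{G}_Z$ to equality by a cardinality count, whereas your version simply notes that the one-sided inclusion for all $G\in\mathsf{G}$ already gives normality — which is slightly leaner and equally valid.
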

\begin{proof}
Obviously, $\mathsf{G}_Z$ is a subgroup of $\mathsf{G}$. We only need to prove that under the conjugate action of any quantum gate $G$ in $\mathsf{G}$, $\mathsf{G}_Z$ is invariant. For any gate $G\in \mathsf{G}$, $G\in \text{CRU}$, then $G\mathsf{G}_Z G^{\dagger}$ only contains diagonal matrices. As all diagonal matrices in $\mathsf{G}$ are contained in $\mathsf{G}_Z$, $G\mathsf{G}_Z G^{\dagger} \subseteq \mathsf{G}_Z$. Also, $\abs{G\mathsf{G}_Z G^{\dagger}} = \abs{\mathsf{G}_Z}$, so $G\mathsf{G}_Z G^{\dagger} = \mathsf{G}_Z$. As $G$ is an arbitrary gate in $\mathsf{G}$, we prove that $\mathsf{G}_Z$ is the normal subgroup of $\mathsf{G}$. Here we complete the proof.
\end{proof}

Below we present a result about permutation matrices for further elaboration. The lemma tells us the structure of the permutation matrices, which can always be decomposed as the multiplication of Toffoli gates, CNOT gates, and Pauli $X$ gates.

\begin{lemma}\label{lemma:permutation}
All permutation matrices on $n$ qubits can be generated by Pauli $X$ gate on each qubit and all Toffoli gates $C^{n-1}X$ with $n-1$ control qubits and 1 target qubit.
\end{lemma}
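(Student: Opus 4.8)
The plan is to identify permutation matrices on $n$ qubits with the symmetric group $S_{2^n}$ acting on the computational basis $\{\ket{\mathbf{x}}\}_{\mathbf{x}\in\{0,1\}^n}$, and to prove that the subgroup $\mathsf{H}$ generated by $\{X_j\}_{j=1}^n$ together with the $n$ gates $C^{n-1}X$ (one for each choice of target qubit, the remaining $n-1$ qubits acting as controls on $\ket{1}$) is all of $S_{2^n}$. Since each generator is itself a permutation matrix, $\mathsf{H}\le S_{2^n}$ holds automatically; and since every element of $S_{2^n}$ is a product of transpositions, it suffices to show that $\mathsf{H}$ contains every transposition $(\ket{\mathbf{a}},\ket{\mathbf{b}})$ of two basis states.

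The first step is to realize the \emph{elementary} transpositions, those exchanging $\ket{\mathbf{a}}$ and $\ket{\mathbf{b}}$ with $\mathbf{a}$ and $\mathbf{b}$ differing in exactly one coordinate $k$. Fix $k$ and let $C^{n-1}X_k$ be the Toffoli with target $k$. Conjugating it by $X_j$ for $j\neq k$ replaces the control on qubit $j$ by a control on $\ket{0}$, so conjugating by $\prod_{j\in S}X_j$ for a subset $S\subseteq\{1,\dots,n\}\setminus\{k\}$ yields the generalized Toffoli that flips qubit $k$ exactly when the remaining qubits match the pattern ``$0$ on $S$, $1$ off $S$''. This operator acts nontrivially only on the two basis states that carry that pattern off coordinate $k$ and opposite bits on coordinate $k$, which it swaps; i.e.\ it is exactly the corresponding elementary transposition. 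Letting $k$ range over all $n$ coordinates and $S$ over all subsets of the other coordinates produces every Hamming-distance-$1$ transposition, so all of them lie in $\mathsf{H}$.

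The second step upgrades to arbitrary transpositions via a connectivity argument on the Boolean hypercube $Q_n$ (vertex set $\{0,1\}^n$, edges between strings at Hamming distance $1$), whose edge-transpositions are precisely those obtained above. For $\mathbf{a}\neq\mathbf{b}$, pick a shortest $Q_n$-path $\mathbf{a}=\mathbf{c}_0,\mathbf{c}_1,\dots,\mathbf{c}_r=\mathbf{b}$, obtained by flipping the differing bits one at a time; its vertices are pairwise distinct. Repeatedly applying the identity
\[
(\ket{\mathbf{p}},\ket{\mathbf{r}})=(\ket{\mathbf{q}},\ket{\mathbf{r}})\,(\ket{\mathbf{p}},\ket{\mathbf{q}})\,(\ket{\mathbf{q}},\ket{\mathbf{r}})\qquad(\mathbf{p},\mathbf{q},\mathbf{r}\ \text{distinct})
\]
inductively along the path (with $\mathbf{p}=\mathbf{c}_0$, $\mathbf{q}=\mathbf{c}_{i-1}$, $\mathbf{r}=\mathbf{c}_i$) expresses $(\ket{\mathbf{c}_0},\ket{\mathbf{c}_r})$ as a product of the edge-transpositions $(\ket{\mathbf{c}_i},\ket{\mathbf{c}_{i+1}})$, so $(\ket{\mathbf{a}},\ket{\mathbf{b}})\in\mathsf{H}$. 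This is the standard fact that the edge-transpositions of a connected graph generate the symmetric group on its vertices; hence $\mathsf{H}=S_{2^n}$.

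I expect the only real content to be the first step: the bookkeeping that turns the fixed $\ket{1}$-control Toffolis into all mixed $0/1$-control Toffolis via conjugation by $X$'s, and the recognition that these mixed Toffolis are exactly the Hamming-distance-$1$ basis transpositions. The hypercube connectivity and the transposition identity are routine, though it is worth checking the degenerate small cases ($n=1$, where $C^{n-1}X$ is just $X$; $n=2$, where it is CNOT in either orientation) to confirm that the generating set behaves as claimed.
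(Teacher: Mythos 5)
Your proposal is correct and follows essentially the same route as the paper's proof: conjugating the fixed-control Toffoli $C^{n-1}X_k$ by products of Pauli $X$ gates to obtain every Hamming-distance-$1$ basis transposition, and then composing/conjugating those elementary transpositions along a path to reach an arbitrary transposition. The paper states the same chaining via the identity $(s, s\oplus 1_{k_1})(s, s\oplus 1_{k_2})(s, s\oplus 1_{k_1}) = (s\oplus 1_{k_1}, s\oplus 1_{k_2})$, which is the same conjugation step you perform inductively along a hypercube path.
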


\begin{proof}
The expression of permutation matrices on $n$ qubits can be unified as below.
\begin{equation}
\Pi_{\sigma} = \sum_{j\in \{0,1\}^n} \ketbra{\sigma(j)}{j},
\end{equation}
where $\sigma$ is an arbitrary permutation on $\{0,1\}^n$. The permutations themself form a permutation group with cardinality $2^n !$. As transpositions can generate the permutation group, we only need to prove that any transposition $(s_1, s_2)$ can be generated by Pauli $X$ gates and $C^{n-1}X$ gates for $s_1\neq s_2\in \{0,1\}^n$.

Denote $X_k$ to be the Pauli $X$ gate acting on $k$-th qubit and set $X^s = \bigotimes_{i=1}^n X_i^{s_i}$ for $s\in \{0,1\}^n$. We also denote $C^{n-1}X_k$ to be the Toffoli gate with $k$-th qubit as the target qubit and other qubits as the control qubits. Then, for any $s\in \{0,1\}^n$, $X^s C^{n-1}X_k (X^s)^{\dagger}$ would swap the basis $\ket{1^n\oplus s}$ and $\ket{1^n\oplus 1_k \oplus s}$ while keeping other bases fixed. Here, $1^n = 11\cdots 1$ is the all-1 bit string and $1_k = 0^{k-1}10^{n-k}$ is the bit string with 1 in $k$-th position and 0 in other positions. Thus, by taking $k$ over 1 to $n$ and $s$ over $\{0,1\}^n$ we would obtain any transposition $(s, s\oplus 1_k)$. As $(s, s\oplus 1_{k1})(s, s\oplus 1_{k2})(s, s\oplus 1_{k1}) = (s\oplus 1_{k1}, s\oplus 1_{k2})$, we would also obtain any transposition $(s, s\oplus 1_{k1} \oplus 1_{k2} ... \oplus 1_{kl})$, which suffices to produce any transposition $(s_1, s_2)$ for $s_1\neq s_2\in \{0,1\}^n$.

Note that the Toffoli gates with less than $n-1$ control qubits are also permutation matrices. It means that with single qubit Pauli $X$ gates and $C^{n-1}X$ gates, one can construct any controlled-$X$ gates $C^kX$ with $1\leq k\leq n-2$.
\end{proof}

\section{Optimal twirling groups for multi-qubit controlled phase gates in Randomized Benchmarking}\label{appendsc:group}
In original interleaved RB~\cite{Magesan2012interleavedRB}, to benchmark an individual target quantum gate, $U$, one always embeds $U$ in a large group, $\mathsf{G}$, with a number of global quantum gates so that $\mathsf{G}$ has a strong twirling effect. However, the large size of the twirling group would make group sampling and computing inverse gates difficult, and plenty of global quantum gates in $\mathsf{G}$ would make the twirling gates hard to realize. The former difficulty is a classical computational problem, and the latter is a gate implementability problem. In this work, we focus on finding the smallest and the most easily implementable twirling group for a target gate.

Fortunately,~\cite{Erhard2019CB,Helsen2019characterRB,Zhang2023Scalable} shows that embedding the target gate $U$ in the twirling group $\mathsf{G}$ is not necessary to characterize $U$. A Clifford gate can be effectively characterized with local Clifford twirling or Pauli twirling instead of global Clifford twirling. It gives hope that maybe we can choose a small twirling group for benchmarking a generic target quantum gate. Normally, we would like to choose a twirling group as small as possible, but the twirling group can not be arbitrarily small as well. Otherwise, the twirled noise channel is not symmetric enough, resulting in hard post-processing and inaccurate fidelity estimation. In the main text, we have briefly introduced the requirements for twirling groups in RB and proposed Question~\ref{ques:RB}. Below, we will present the requirements for twirling groups in RB more detailedly and completely and rederive Question~\ref{ques:RB}. Moreover, in this section, we will provide the formal versions of the theorems and lemmas in the main text along with their proof. It is worth mentioning that when we write `a channel is diagonal', it always means that `the channel is diagonal up to a unitary transformation'.

\subsection{Requirements for twirling groups in randomized benchmarking}\label{appendssc:rbrequire}
Below, we focus on how to estimate the fidelity of a target gate, $U$, with methods of RB. Same with the notations in the main text and in Appendix~\ref{appendsc:pre}, we express the noisy quantum gate $\widetilde{\mathcal{U}} = \mathcal{U} \Lambda$ as a composite of the noiseless gate $U$ and its noise channel $\Lambda$, where $\mathcal{U}$ denotes the Pauli-Liouville representation of $U$, and $\widetilde{\cdot}$ represents the noisy version of a quantum gate. As mentioned in Appendix~\ref{appendssc:RB}, the key to enable RB is obtaining the powers of the $\mathsf{G}$-twirled noise channel, $\Lambda_G^m$, where $m\in \mathbb{Z}+$, and ensuring $\Lambda_G$ is diagonal up to a unitary transformation, or can be written as Eq.~\eqref{eq:projectordecompose2}. Then, with the technique of character RB, one can obtain $\tr(\Lambda_G)$ and the fidelity of $\Lambda$. Here, $\mathsf{G}$ is the twirling group for tailoring $\Lambda$.

Different from the elaboration in the main text, below we distinguish two concepts of twirling gate set $\mathsf{V}$ and twirling group $\mathsf{G}$. In the main text, we directly select a twirling group, $\mathsf{G}$, and implement a random gate sequence composed of gates from $\mathsf{G}$ interleaved with target gate $U$ to obtain $\Lambda_G^m$. But in fact, we have another way to obtain $\Lambda_G^m$ as shown below.

To obtain $\Lambda_G^m$, we implement $m$ twirling gates $V_i = G_{i}UG_{i-1}^{\dagger}U^{\dagger}$, sampled from the twirling gate set $\mathsf{V} = \mathsf{G}U\mathsf{G}U^{\dagger}$, interleaved with the target gate $U$ as shown in Fig.~\ref{fig:RBappend}. Here, $G_i$ is uniformly and randomly sampled from the group $\mathsf{G}$, and we set $G_0=\mathbb{I}$. The circuit ends with the inverse gate $V_{\text{inv}} = (\prod_{i=1}^m UV_i)^{\dagger} = U^{\dagger m-1}G_m^{\dagger} U^{\dagger}$. The circuit ideally equals the identity, corresponding to $\Lambda_G=I$. Note that the twirling group $\mathsf{G}$ and the twirling gate set $\mathsf{V}$ are in general different and $\mathsf{G}\subseteq \mathsf{V}$. Although we only require $\Lambda$ to be twirled by $\mathsf{G}$, we need to realize gates in $\mathsf{V}$ to achieve that since the circuit involves gate $U$. The twirling gates should first eliminate the effect of the twirled gate $U$ and then influence the noise channel $\Lambda$. Note that the inverse gate $V_{inv}$ can always belong to $\mathsf{V}$ as long as we choose $m$ such that $U^m=\mathbb{I}$. The cost to implement the inverse gate is nearly the same as other twirling gates.

\begin{figure}[htbp!]
\centering
\includegraphics[width=.5\textwidth]{./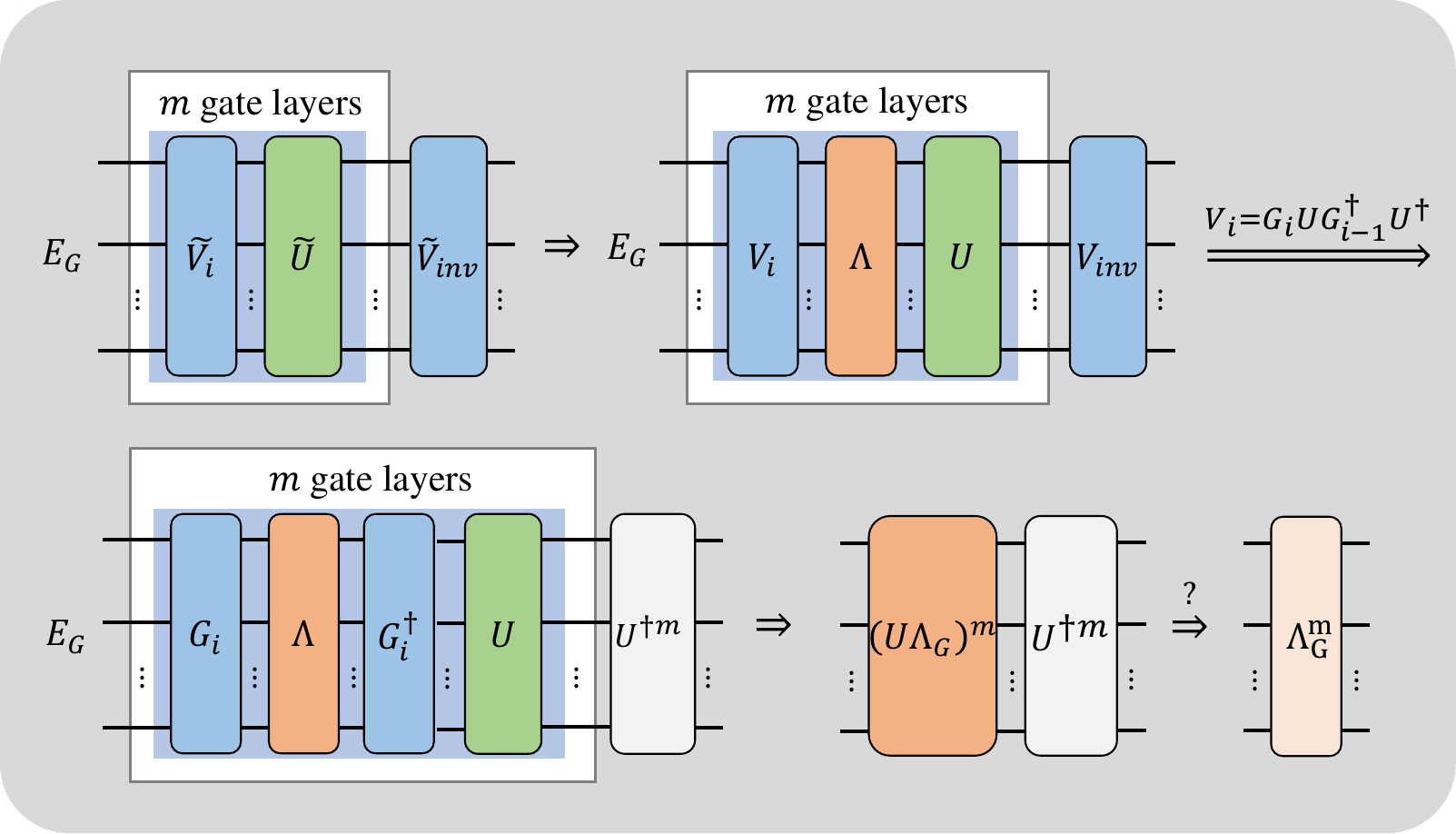}
\caption{Random twirling gates $V_1, V_2,\cdots,V_m$ interleaved with $U$ in RB. The inverse gate $V_{\text{inv}} = (\prod_{i=1}^m UV_{i})^{-1}$. $\widetilde{\cdot}$ represents the noisy version of a quantum gate. $\Lambda$ is the composite noise channel of $V$ and $U$, and $\Lambda_G$ is the $G$-twirled noise channel.}
\label{fig:RBappend}
\end{figure}

Same with the main text and the arguments in Appendix~\ref{appendssc:interleavedRB}, we omit the noise from the twirling gates. Then, the circuit is represented as
\begin{equation}\label{eq:RB2}
\begin{split}
\widetilde{\mathcal{S}}_m &= \mathcal{V}_{\text{inv}}\prod_{i=1}^m \mathcal{U}\Lambda\mathcal{V}_{i}\\
&= \mathcal{U}^{\dagger m-1}\mathcal{G}_m^{\dagger} \mathcal{U}^{\dagger} \prod_{i=1}^m \mathcal{U}\Lambda\mathcal{G}_i\mathcal{U}\mathcal{G}^{\dagger}_{i-1}\mathcal{U}^{\dagger}\\
&= \mathcal{U}^{\dagger m} \prod_{i=1}^m \mathcal{U} \mathcal{G}_{i}^{\dagger}\Lambda\mathcal{G}_{i}.
\end{split}
\end{equation}
In terms of expectation over sampling of $G_i$, Eq.~\eqref{eq:RB2} becomes $\mathcal{U}^{\dagger m} (\mathcal{U}\Lambda_G)^m$. Compared to $\Lambda_G^{m}$, this formulation only requires the commutation relation,
\begin{equation}\label{eq:Ucommute}
\Lambda_G \mathcal{U} = \mathcal{U}\Lambda_G,
\end{equation}
which means that the symmetry of the group $\mathsf{G}$ is preserved under the action of $U$ in the Liouville representation. Thus, to tailor $U$ in RB, our task is finding the twirling group $\mathsf{G}$ satisfying Eq.\eqref{eq:Ucommute} and ensuring the diagonalizability of $\Lambda_G$ while minimizing the size of the twirling gate set $\mathsf{V} = \mathsf{G}U\mathsf{G}U^{\dagger}$. The question is summarized below.

\begin{question}\label{ques:RB2}
Given a gate, $U$, find $\mathsf{V}$ such that $\mathsf{V} = \mathsf{G}U\mathsf{G}U^{\dagger}$ and the twirling group $\mathsf{G}$ satisfies
\begin{equation}
\begin{split}
&\text{for any quantum channel } \Lambda,\\
&\Lambda_G = \mathbb{E}_{G\in \mathsf{G}} \mathcal{G} \Lambda \mathcal{G}^{\dagger} \text{ is diagonal up to a}\\
&\text{unitary transformation independent of } \Lambda,\\
\end{split}
\end{equation}
and,
\begin{equation}
\Lambda_G \mathcal{U} = \mathcal{U}\Lambda_G.
\end{equation}
\end{question}

Question~\ref{ques:RB2} is a more refined problem for finding a twirling group in RB. However, in general, Eq.\eqref{eq:Ucommute} is challenging to characterize, so we substitute it with a more easily handled condition, that is,
\begin{equation}
U\mathsf{G}U^{\dagger} = \mathsf{G}.
\end{equation}
In this case, $\mathsf{V} = \mathsf{G}$ and Question~\ref{ques:RB2} reduces to Question~\ref{ques:RB} in the main text, as shown below.

\begin{repquestion}{ques:RB}
Given a gate, $U$, find a twirling group, $\mathsf{G}$ such that,
\begin{equation}\label{eq:diagappend}
\begin{split}
&\text{for any quantum channel } \Lambda, \Lambda_G = \mathbb{E}_{G\in \mathsf{G}} \mathcal{G} \Lambda \mathcal{G}^{\dagger} \text{ is diagonal}\\
&\text{up to a unitary transformation independent of } \Lambda,\\
\end{split}
\end{equation}
and,
\begin{equation}\label{eq:Uconjgroupappend}
U\mathsf{G} U^{\dagger} = \mathsf{G}.
\end{equation}
\end{repquestion}

As mentioned in the main text, the solution to Question~\ref{ques:RB} allows us to obtain the fidelity of $(\mathcal{U}^{\dagger m}(\mathcal{U}\Lambda_G)^m)^{\frac{1}{m}}$, which is a lower bound of the fidelity of $\Lambda$. Below, we present the proof.

\begin{lemma}
Given a target gate, $U$, and a twirling group, $\mathsf{G}$ satisfying Question~\ref{ques:RB}, then for any positive integer $m$, the fidelity of $(\mathcal{U}^{\dagger m}(\mathcal{U}\Lambda_G)^m)^{\frac{1}{m}}$ is a lower bound of the fidelity of $\Lambda$. Mathematically,
\begin{equation}
F((\mathcal{U}^{\dagger m}(\mathcal{U}\Lambda_G)^m)^{\frac{1}{m}}) \leq F(\Lambda),
\end{equation}
or equivalently,
\begin{equation}
\tr((\mathcal{U}^{\dagger m}(\mathcal{U}\Lambda_G)^m)^{\frac{1}{m}}) \leq \tr(\Lambda).
\end{equation}
\end{lemma}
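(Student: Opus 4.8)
The plan is to peel the $m$-fold composition apart into individual twirled channels that are \emph{all} brought to diagonal form by one and the same unitary transformation, and then reduce the claimed fidelity bound to a scalar arithmetic--geometric-mean inequality. It suffices to prove the trace form. Since the trace is conjugation invariant, $\tr(\Lambda_G)=\mathbb{E}_{G\in\mathsf{G}}\tr(\mathcal{G}\Lambda\mathcal{G}^{\dagger})=\tr(\Lambda)$, so, writing $\mathcal{M}:=\mathcal{U}^{\dagger m}(\mathcal{U}\Lambda_G)^{m}$, the goal becomes $\tr(\mathcal{M}^{1/m})\le\tr(\Lambda_G)$.

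The structural step: inserting factors $\mathcal{U}^{k}\mathcal{U}^{-k}$ between successive copies of $\mathcal{U}\Lambda_G$ telescopes $\mathcal{M}$ into
\begin{equation}
\mathcal{M}=\Xi_{m-1}\Xi_{m-2}\cdots\Xi_{1}\Xi_{0},\qquad \Xi_k:=\mathcal{U}^{-k}\,\Lambda_G\,\mathcal{U}^{k}.
\end{equation}
Iterating Eq.~\eqref{eq:Uconjgroupappend} gives $U^{-k}\mathsf{G}U^{k}=\mathsf{G}$, and since $\mathsf{G}$ is a group the map $G\mapsto U^{-k}GU^{k}$ merely permutes $\mathsf{G}$; hence $\Xi_k=\mathbb{E}_{H\in\mathsf{G}}\mathcal{H}\,\Lambda^{(k)}\,\mathcal{H}^{\dagger}=\big(\Lambda^{(k)}\big)_G$ with $\Lambda^{(k)}:=\mathcal{U}^{-k}\Lambda\mathcal{U}^{k}$, again a quantum channel. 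This is exactly where the ``independent of $\Lambda$'' clause in Eq.~\eqref{eq:diagappend} is needed: one and the same unitary transformation $\mathcal{T}$ diagonalizes $\big(\Lambda^{(k)}\big)_G$ for every $k$, so $\Xi_k=\mathcal{T}D^{(k)}\mathcal{T}^{\dagger}$ with each $D^{(k)}$ diagonal. Multiplying, the interior factors $\mathcal{T}^{\dagger}\mathcal{T}$ cancel, so $\mathcal{M}=\mathcal{T}D\mathcal{T}^{\dagger}$ with $D=\prod_{k}D^{(k)}$ diagonal; consequently $\mathcal{M}$ is similar to the diagonal $D$, and $\tr(\mathcal{M}^{1/m})=\tr(D^{1/m})=\sum_{i}\big(\prod_{k=0}^{m-1}D^{(k)}_{ii}\big)^{1/m}$.

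To close the argument, conjugation invariance applied once more gives, for every $k$, $\sum_{i}D^{(k)}_{ii}=\tr\big((\Lambda^{(k)})_G\big)=\tr(\Lambda^{(k)})=\tr(\Lambda)$. Applying the arithmetic--geometric-mean inequality at each index $i$, $\big(\prod_{k=0}^{m-1}D^{(k)}_{ii}\big)^{1/m}\le\frac1m\sum_{k=0}^{m-1}D^{(k)}_{ii}$, and summing over $i$ yields $\tr(\mathcal{M}^{1/m})\le\frac1m\sum_{k=0}^{m-1}\sum_{i}D^{(k)}_{ii}=\tr(\Lambda)$, which is the assertion. The main obstacle is making this last step rigorous: both the arithmetic--geometric-mean bound and the very definition of the real root $D^{1/m}$ require the diagonal entries $D^{(k)}_{ii}$ --- which are Pauli fidelities of the channels $\big(\Lambda^{(k)}\big)_G$ in the $\mathcal{T}$-frame, hence real numbers in $[-1,1]$ --- to be nonnegative, or at least the products $\prod_{k}D^{(k)}_{ii}$ to have an unambiguous $m$-th root with the correct sign. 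Resolving this is precisely where the adopted convention for $(\cdot)^{1/m}$, the fact that $\mathcal{M}$ is itself a composition of quantum channels, and (where relevant) the choice of a depth with $U^{m}=\mathbb{I}$, must be invoked with care.
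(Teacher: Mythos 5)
Your argument follows essentially the same route as the paper's, with one small but genuine simplification. The paper writes $\Lambda_G=\sum_i p_i\Pi_i$, invokes Lemma~\ref{lemma:Uconjugaterelation} (the consequence of $U\mathsf{G}U^{\dagger}=\mathsf{G}$) to place $\mathcal{U}^{\dagger}\Lambda_G\mathcal{U}$ in the same projector frame $\sum_i q_i\Pi_i$, shows by a spectrum argument that $q_i=p_{\sigma(i)}$ for a permutation $\sigma$, and then applies an AM--GM-type estimate to the diagonal entries $p_i p_{\sigma_1(i)}\cdots p_{\sigma_{m-1}(i)}$. Your telescoping of $\mathcal{M}$ into $\Xi_k=\mathcal{U}^{-k}\Lambda_G\mathcal{U}^k$, the identification $\Xi_k=(\Lambda^{(k)})_G$ via the substitution $G\mapsto U^{-k}GU^k$, and the observation that a single $\mathcal{T}$ simultaneously diagonalizes all $\Xi_k$ is the same structural step re-indexed. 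Where you genuinely improve the presentation is in the closing arithmetic: you skip the permutation argument entirely and use only trace invariance, $\sum_i D^{(k)}_{ii}=\tr\bigl((\Lambda^{(k)})_G\bigr)=\tr(\Lambda^{(k)})=\tr(\Lambda)$, which together with AM--GM at each index $i$ directly gives $\tr(\mathcal{M}^{1/m})\le\tr(\Lambda)$. This is cleaner and requires strictly less information than the paper's ``$q_i$ is a permutation of $p_i$'' observation, yet reaches the same inequality.

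The positivity caveat you raise at the end is a genuine issue, but it is inherited from the paper rather than specific to your variant. The paper's final step is labeled a ``rearrangement inequality'' but is really AM--GM applied to the diagonal entries, and this requires $p_i\ge 0$ (and a well-defined real $m$-th root of the products). Diagonal Liouville entries of an arbitrary quantum channel are real but can be negative --- e.g.\ for a strong bit-flip channel, $\lambda_Y=\lambda_Z=-1$ --- so for a general $\Lambda$ and general $m$ the bound as stated is not proved. The paper passes over this silently; in the physically relevant weak-noise regime all $p_i$ are close to $1$ and positive, which is presumably what the authors have in mind. You did well to flag the gap; a clean fix within the paper's framework would be to restrict the lemma to channels for which $\Lambda_G$ has nonnegative diagonal (or to state the result for the operationally accessible powers $\mathcal{M}$ directly, avoiding the fractional root).
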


\begin{proof}
As $\tr(\Lambda) = \tr(\Lambda_G)$, we only need to prove $\tr((\mathcal{U}^{\dagger m}(\mathcal{U}\Lambda_G)^m)^{\frac{1}{m}}) \leq \tr(\Lambda_G)$.

Note that $\Lambda_G$ has the expression of Eq.~\eqref{eq:projectordecompose} and can be written as
\begin{equation}
\Lambda_G = \sum_{i=1}^k p_i \Pi_i.
\end{equation}
Also, the condition $U\mathsf{G}U^{\dagger} = \mathsf{G}$ ensures that $\mathcal{U}^{\dagger}\Lambda_G \mathcal{U} = (\mathcal{U}^{\dagger}\Lambda_G \mathcal{U})_G$ as proven in Lemma~\ref{lemma:Uconjugaterelation}, which means we can express $\mathcal{U}^{\dagger}\Lambda_G \mathcal{U}$ as
\begin{equation}
\mathcal{U}^{\dagger}\Lambda_G \mathcal{U} = \sum_{i=1}^k q_i \Pi_i.
\end{equation}

As $\mathcal{U}$ is unitary and does not change the spectrum of $\Lambda_G$, the value of $q_i$ must be equal to one of elements from $\{p_i, 1\leq i\leq k\}$. Thus, we can construct a map $f_1: \{p_i, 1\leq i \leq k\} \rightarrow \{q_i, 1\leq i \leq k\}$. Similarly, as $\mathcal{U}(\mathcal{U}^{\dagger}\Lambda_G \mathcal{U}) \mathcal{U}^{\dagger} = \Lambda_G$, we can construct a map $f_2: \{q_i, 1\leq i \leq k\} \rightarrow \{p_i, 1\leq i \leq k\}$ so that the compositions of $f_1$ and $f_2$ are identity maps, $f_1\circ f_2(q_i) = q_i$ and $f_2\circ f_1(p_i) = p_i$. Then, $f_1$ is just a permutation on $\{p_i, 1\leq i \leq k\}$ and $\{q_i, 1\leq i \leq k\}$ is equivalent to $\{p_i, 1\leq i \leq k\}$ after permutation. Thus, the diagonal terms of $\mathcal{U}^{\dagger}\Lambda_G \mathcal{U} \Lambda_G$ are $\{p_ip_{\sigma(i)}\}$ where $\sigma$ is a permutation on $\{1,2,\cdots,k\}$.

Similarly, the diagonal terms of $\mathcal{U}^{\dagger m}(\mathcal{U}\Lambda_G)^m = (\mathcal{U}^{\dagger}\cdots (\mathcal{U}^{\dagger}(\mathcal{U}^{\dagger}\Lambda_G \mathcal{U}) \Lambda_G \mathcal{U})\Lambda_G\cdots\mathcal{U})\Lambda_G$ are $\{p_ip_{\sigma_1(i)}\cdots p_{\sigma_{m-1}(i)}\}$ where $\sigma_1,\cdots,\sigma_{m-1}$ are permutations on $\{1,2,\cdots,k\}$. Thus, with rearrangement inequality,
\begin{equation}
\begin{split}
\tr((\mathcal{U}^{\dagger m}(\mathcal{U}\Lambda_G)^m)^{\frac{1}{m}}) &= \sum_{i=1}^k (p_ip_{\sigma_1(i)}\cdots  p_{\sigma_{m-1}(i)})^{\frac{1}{m}}\\
&\leq \sum_{i=1}^k p_i\\
&= \tr(\Lambda_G).
\end{split}
\end{equation}
Here is the proof.
\end{proof}

Thus, solve Question~\ref{ques:RB} and then we can provide a lower bound of the fidelity of the target gate. It is worth noting that if the target gate $U$ is a multi-qubit controlled phase gate, the twirling group satisfying $U\mathsf{G}U^{\dagger} = \mathsf{G}$ would also satisfy $\Lambda_G \mathcal{U} = \mathcal{U}\Lambda_G$, which we will show in Theorem~\ref{thm:ucommute}. As a consequence, for multi-qubit controlled phase gates, we can accurately estimate their fidelities instead of only providing a lower bound.

\begin{lemma}\label{lemma:Uconjugaterelation}
Given a unitary gate, $U$, and a unitary subgroup, $\mathsf{G}$, if $U\mathsf{G}U^{-1} = \mathsf{G}$, then for any quantum channel $\Lambda$, $(\mathcal{U}\Lambda_G \mathcal{U}^{\dagger})_G = \mathcal{U}\Lambda_G \mathcal{U}^{\dagger}$.
\end{lemma}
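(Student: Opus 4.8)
The plan is to deduce the statement from two ingredients: the idempotence of the $\mathsf{G}$-twirl, and the invariance of $\mathsf{G}$ under conjugation by $U$. First I would record the elementary fact that twirling is a projection, i.e.\ $(\Lambda_G)_G = \Lambda_G$ for every channel $\Lambda$. This follows by writing $(\Lambda_G)_G = \mathbb{E}_{G'\in\mathsf{G}}\mathbb{E}_{G\in\mathsf{G}}\mathcal{G}'\mathcal{G}\,\Lambda\,\mathcal{G}^{\dagger}\mathcal{G}^{\prime\dagger}$, using that the Liouville representation $G\mapsto\mathcal{G}$ is a homomorphism so that $\mathcal{G}'\mathcal{G}$ is the Liouville representation of $G'G$; for each fixed $G'$ the map $G\mapsto G'G$ is a bijection of $\mathsf{G}$, so the inner average is already $\Lambda_G$, and the outer average leaves it unchanged.

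Next I would expand the left-hand side and push $\mathcal{U}$ through the twirl. Since the Pauli--Liouville representation of a unitary channel is invertible with $\mathcal{U}^{-1}=\mathcal{U}^{\dagger}$ corresponding to $U^{-1}$, for every $G'\in\mathsf{G}$ we may write $\mathcal{G}'\mathcal{U} = \mathcal{U}\,(\mathcal{U}^{\dagger}\mathcal{G}'\mathcal{U}) = \mathcal{U}\mathcal{H}$, where $\mathcal{H}$ is the Liouville representation of $H:=U^{-1}G'U$. Hence
\begin{equation}
(\mathcal{U}\Lambda_G\mathcal{U}^{\dagger})_G
= \mathbb{E}_{G'\in\mathsf{G}}\,\mathcal{G}'\,\mathcal{U}\,\Lambda_G\,\mathcal{U}^{\dagger}\,\mathcal{G}^{\prime\dagger}
= \mathcal{U}\Big(\mathbb{E}_{G'\in\mathsf{G}}\,\mathcal{H}\,\Lambda_G\,\mathcal{H}^{\dagger}\Big)\mathcal{U}^{\dagger}.
\end{equation}

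Finally I would invoke the hypothesis $U\mathsf{G}U^{-1}=\mathsf{G}$, equivalently $U^{-1}\mathsf{G}U=\mathsf{G}$: the map $G'\mapsto H=U^{-1}G'U$ is then a bijection of $\mathsf{G}$ onto itself and preserves the uniform average, so $\mathbb{E}_{G'\in\mathsf{G}}\mathcal{H}\,\Lambda_G\,\mathcal{H}^{\dagger} = \mathbb{E}_{H\in\mathsf{G}}\mathcal{H}\,\Lambda_G\,\mathcal{H}^{\dagger} = (\Lambda_G)_G = \Lambda_G$ by the first step. Substituting back gives $(\mathcal{U}\Lambda_G\mathcal{U}^{\dagger})_G = \mathcal{U}\Lambda_G\mathcal{U}^{\dagger}$. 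The only point requiring care — and the main obstacle, such as it is — is the bookkeeping with adjoints: one must verify that $\mathcal{U}^{\dagger}\mathcal{G}'\mathcal{U}$ is genuinely the Liouville representation of the \emph{group} element $U^{-1}G'U$ (which is where $U\mathsf{G}U^{-1}=\mathsf{G}$ enters) and that the ensuing reindexing is measure-preserving; there is no analytic or representation-theoretic difficulty beyond this.
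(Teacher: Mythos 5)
Your proof is correct and follows essentially the same route as the paper's: expand the outer twirl, commute $\mathcal{U}$ past $\mathcal{G}'$ via $\mathcal{G}'\mathcal{U}=\mathcal{U}(\mathcal{U}^{\dagger}\mathcal{G}'\mathcal{U})$, reindex using $U^{-1}\mathsf{G}U=\mathsf{G}$, and invoke idempotence of the $\mathsf{G}$-twirl. The only difference is cosmetic — you spell out the idempotence $(\Lambda_G)_G=\Lambda_G$ as a separate preliminary step, whereas the paper uses it tacitly in the last equality.
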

\begin{proof}
Provided with $U\mathsf{G}U^{-1} = \mathsf{G}$, we have
\begin{equation}
\begin{split}
(\mathcal{U}\Lambda_G \mathcal{U}^{\dagger})_G &= \mathbb{E}_{G\in \mathsf{G}} \mathcal{G} \mathcal{U}\Lambda_G \mathcal{U}^{\dagger} \mathcal{G}^{\dagger}\\
&= \mathbb{E}_{G\in \mathsf{G}} \mathcal{U} (\mathcal{U}^{\dagger}\mathcal{G} \mathcal{U})\Lambda_G (\mathcal{U}^{\dagger} \mathcal{G}^{\dagger}\mathcal{U}) \mathcal{U}^{\dagger}\\
&= \mathbb{E}_{G\in \mathsf{G}} \mathcal{U}\mathcal{G} \Lambda_G  \mathcal{G}^{\dagger}\mathcal{U}^{\dagger}\\
&= \mathcal{U}\Lambda_G \mathcal{U}^{\dagger}.
\end{split}
\end{equation}
The equality in the third line comes from the condition $U\mathsf{G}U^{-1} = \mathsf{G}$.
\end{proof}

\subsection{Proof of Lemma~\ref{lemma:cardconstraint}}\label{appendssc:lemmacardconstraint}
In the main text, we present the lemma that for a finite $n$-qubit unitary subgroup $\mathsf{G}$, if for any quantum channel $\Lambda$, its $\mathsf{G}-$twirled channel, $\Lambda_G$, is diagonal in the Pauli-Liouville representation up to a unitary transformation independent of $\Lambda$, then the Pauli-Liouville representation of $\mathsf{G}$ is multiplicity-free. Also, the cardinality of the twirling group $\abs{\mathsf{G}}\geq 4^n$. Below, we present the proof of this lemma. It is worth mentioning that, in the proof, we consider a more generic scenario: if $\Lambda_G$ is diagonal up to an invertible matrix transformation rather than a unitary transformation, then the Pauli-Liouville representation of $\mathsf{G}$ is multiplicity-free. The result is more generic, and the proof is stronger here.

For convenience, we rewrite the Lemma~\ref{lemma:cardconstraint} below.

\begin{replemma}{lemma:cardconstraint}
For a finite $n$-qubit unitary subgroup, $\mathsf{G}$, if for any quantum channel $\Lambda$, its $\mathsf{G}-$twirled channel, $\Lambda_G$, is diagonal in the Pauli-Liouville representation up to an invertible matrix transformation independent of $\Lambda$, then the Pauli-Liouville representation of $\mathsf{G}$ is multiplicity-free. As a corollary, the cardinality of the twirling group $\abs{\mathsf{G}}\geq 4^n$.
\end{replemma}

\begin{proof}\label{lemmapf:cardconstraint}
Let us begin with the irreducible representation decomposition of the Pauli Liouville representation of $\mathsf{G}$. For each element $G\in \mathsf{G}$, $\mathcal{G}$ can be decomposed as the direct sum of irreducible representations up to an isomorphism $\mathcal{V}$,
\begin{equation}\label{eq:irrepdecompose}
\mathcal{V}\mathcal{G}\mathcal{V}^{-1} = \bigoplus_{i=1}^k \mathbb{I}_{n_{i}\times n_{i}}\otimes \phi_{i}(G),
\end{equation}
where $\mathcal{V}$ is an invertible matrix, $\phi_{i}$ denotes the irreducible representation of $\mathsf{G}$, and $n_{i}$ denotes its multiplicity in $\mathcal{G}$. The term $k$ records the number of inequivalent irreducible representations that $\mathcal{G}$ contains. It is worth mentioning that the basis making $\mathcal{G}$ block-diagonal may not be Pauli operators $\{\mathbb{I},X,Y,Z\}$. That is why we put $\mathcal{V}$ and $\mathcal{V}^{-1}$ in the left side of Eq.~\eqref{eq:irrepdecompose}. As any unitary channel has an invariant subspace $\{\mathbb{I}\}$, the Pauli-Liouville representation of a unitary channel must be block-diagonal as below.
\begin{equation}\label{eq:unitaryptmdecompose}
\mathcal{G} = \begin{pmatrix}
1 & \mathbf{0}\\
\mathbf{0} & \mathbf{T}_\mathcal{G}
\end{pmatrix}.
\end{equation}
Thus, the form of the basis transformation matrix, $\mathcal{V}$, can also be constrained like Eq.~\eqref{eq:unitaryptmdecompose}. That means $\mathcal{V}$ only changes the non-identity basis while keeping the basis $\mathbb{I}$ invariant. In addition, without loss of generality, we set the irreducible representation in the subspace spanned by $\{\mathbb{I}\}$ as the trivial representation, that is, mapping all group elements to 1.

Given a channel, $\Lambda$, its twirling over group $\mathsf{G}$ is
\begin{equation}\label{eq:LambdaGtwirling}
\begin{split}
\Lambda_G &= \mathbb{E}_{G\in \mathsf{G}} \mathcal{G} \Lambda \mathcal{G}^{-1}\\
&= \mathbb{E}_{G\in \mathsf{G}} \mathcal{V}^{-1} (\bigoplus_{i=1}^k \mathbb{I}_{n_{i}\times n_{i}}\otimes \phi_{i}(G)) \mathcal{V}\Lambda \mathcal{V}^{-1}(\bigoplus_{i=1}^k \mathbb{I}_{n_{i}\times n_{i}}\otimes \phi_{i}(G))^{-1} \mathcal{V}\\
&= \mathcal{V}^{-1} [\mathbb{E}_{G\in \mathsf{G}} (\bigoplus_{i=1}^k \mathbb{I}_{n_{i}\times n_{i}}\otimes \phi_{i}(G) ) \mathcal{V}\Lambda \mathcal{V}^{-1}(\bigoplus_{i=1}^k \mathbb{I}_{n_{i}\times n_{i}}\otimes \phi_{i}(G))^{-1}] \mathcal{V}\\
\end{split}
\end{equation}

Focusing on the calculation in the square brackets in Eq.~\eqref{eq:LambdaGtwirling}, we denote $\Lambda' = \mathcal{V}\Lambda \mathcal{V}^{-1}$ and decompose it corresponding to the block partition of the right-hand side in Eq.~\eqref{eq:irrepdecompose}. That is,
\begin{equation}\label{eq:lambda'}
\begin{split}
\Lambda' &= \bigoplus_{i=1}^k \Lambda'^i\\
&= \bigoplus_{i=1}^k \begin{pmatrix}
\Lambda'^i_{11} & \Lambda'^i_{12} & \cdots & \Lambda'^i_{1n_i}\\
\Lambda'^i_{21} & \Lambda'^i_{22} & \cdots & \Lambda'^i_{2n_i}\\
\vdots & \vdots & \ddots & \vdots\\
\Lambda'^i_{n_i1} & \Lambda'^i_{n_i2} & \cdots & \Lambda'^i_{n_in_i}.
\end{pmatrix}
\end{split}
\end{equation}
Then,
\begin{equation}
\begin{split}
&\mathbb{E}_{G\in \mathsf{G}} (\bigoplus_{i=1}^k \mathbb{I}_{n_{i}\times n_{i}}\otimes \phi_{i}(G)) \Lambda'(\bigoplus_{i=1}^k \mathbb{I}_{n_{i}\times n_{i}}\otimes \phi_{i}(G))^{-1}\\
=&\mathbb{E}_{G\in \mathsf{G}} \bigoplus_{i=1}^k
[\begin{pmatrix}
\phi_{i}(G) & 0 & \cdots & 0\\
0 & \phi_{i}(G) & \cdots & 0\\
\vdots & \vdots & \ddots & \vdots\\
0 & 0 & \cdots & \phi_{i}(G)
\end{pmatrix}
\begin{pmatrix}
\Lambda'^i_{11} & \Lambda'^i_{12} & \cdots & \Lambda'^i_{1n_i}\\
\Lambda'^i_{21} & \Lambda'^i_{22} & \cdots & \Lambda'^i_{2n_i}\\
\vdots & \vdots & \ddots & \vdots\\
\Lambda'^i_{n_i1} & \Lambda'^i_{n_i2} & \cdots & \Lambda'^i_{n_in_i}.
\end{pmatrix}
\begin{pmatrix}
\phi_{i}^{-1}(G) & 0 & \cdots & 0\\
0 & \phi_{i}^{-1}(G) & \cdots & 0\\
\vdots & \vdots & \ddots & \vdots\\
0 & 0 & \cdots & \phi_{i}^{-1}(G)
\end{pmatrix}
]\\
=&\mathbb{E}_{G\in \mathsf{G}} \bigoplus_{i=1}^k
\begin{pmatrix}
\phi_{i}(G)\Lambda'^i_{11}\phi_{i}^{-1}(G) & \phi_{i}(G)\Lambda'^i_{12}\phi_{i}^{-1}(G) & \cdots & \phi_{i}(G)\Lambda'^i_{1n_i}\phi_{i}^{-1}(G)\\
\phi_{i}(G)\Lambda'^i_{21}\phi_{i}^{-1}(G) & \phi_{i}(G)\Lambda'^i_{22}\phi_{i}^{-1}(G) & \cdots & \phi_{i}(G)\Lambda'^i_{2n_i}\phi_{i}^{-1}(G)\\
\vdots & \vdots & \ddots & \vdots\\
\phi_{i}(G)\Lambda'^i_{n_i1}\phi_{i}^{-1}(G) & \phi_{i}(G)\Lambda'^i_{n_i2}\phi_{i}^{-1}(G) & \cdots & \phi_{i}(G)\Lambda'^i_{n_in_i}\phi_{i}^{-1}(G).
\end{pmatrix}\\
=&\bigoplus_{i=1}^k
\begin{pmatrix}
\mathbb{E}_{G\in \mathsf{G}}\phi_{i}(G)\Lambda'^i_{11}\phi_{i}^{-1}(G) & \mathbb{E}_{G\in \mathsf{G}}\phi_{i}(G)\Lambda'^i_{12}\phi_{i}^{-1}(G) & \cdots & \mathbb{E}_{G\in \mathsf{G}}\phi_{i}(G)\Lambda'^i_{1n_i}\phi_{i}^{-1}(G)\\
\mathbb{E}_{G\in \mathsf{G}}\phi_{i}(G)\Lambda'^i_{21}\phi_{i}^{-1}(G) & \mathbb{E}_{G\in \mathsf{G}}\phi_{i}(G)\Lambda'^i_{22}\phi_{i}^{-1}(G) & \cdots & \mathbb{E}_{G\in \mathsf{G}}\phi_{i}(G)\Lambda'^i_{2n_i}\phi_{i}^{-1}(G)\\
\vdots & \vdots & \ddots & \vdots\\
\mathbb{E}_{G\in \mathsf{G}}\phi_{i}(G)\Lambda'^i_{n_i1}\phi_{i}^{-1}(G) & \mathbb{E}_{G\in \mathsf{G}}\phi_{i}(G)\Lambda'^i_{n_i2}\phi_{i}^{-1}(G) & \cdots & \mathbb{E}_{G\in \mathsf{G}}\phi_{i}(G)\Lambda'^i_{n_in_i}\phi_{i}^{-1}(G).
\end{pmatrix}
\end{split}
\end{equation}
By Schur's lemma, for any irreducible representation $\phi_{i}$ and matrix $A$, the twirling of $A$ over $\mathsf{G}$ would be proportional to identity,
\begin{equation}
\mathbb{E}_{G\in \mathsf{G}}\phi_{i}(G)A\phi_{i}^{-1}(G) = \tr(A)\mathbb{I}_{d_i},
\end{equation}
where $d_i = \dim \phi_i$. We set
\begin{equation}\label{eq:lambdai'}
\Lambda_i' =
\begin{pmatrix}
\tr(\Lambda'^i_{11}) & \tr(\Lambda'^i_{12}) & \cdots & \tr(\Lambda'^i_{1n_i})\\
\tr(\Lambda'^i_{21}) & \tr(\Lambda'^i_{22}) & \cdots & \tr(\Lambda'^i_{2n_i})\\
\vdots & \vdots & \ddots & \vdots\\
\tr(\Lambda'^i_{n_i1}) & \tr(\Lambda'^i_{n_i2}) & \cdots & \tr(\Lambda'^i_{n_in_i}).
\end{pmatrix}
\end{equation}
Thus we conclude that
\begin{equation}
\mathbb{E}_{G\in \mathsf{G}} (\bigoplus_{i=1}^k \mathbb{I}_{n_{i}\times n_{i}}\otimes \phi_{i}(G)) \Lambda'(\bigoplus_{i=1}^k \mathbb{I}_{n_{i}\times n_{i}}\otimes \phi_{i}(G))^{-1} = \bigoplus_{i=1}^k \Lambda'_i\otimes \mathbb{I}_{d_i},
\end{equation}
and
\begin{equation}
\Lambda_G = \mathcal{V}^{-1} (\bigoplus_{i=1}^k \Lambda'_i\otimes \mathbb{I}_{d_i}) \mathcal{V}.
\end{equation}

Note that $\Lambda_G$ is diagonal up to an invertible matrix transformation independent of $\Lambda$. There exists a fixed invertible matrix transformation $\mathcal{V}'$ such that
\begin{equation}\label{eq:diag2}
\mathcal{V}'\Lambda_G \mathcal{V}^{\prime\dagger} = \sum_i p_i \Pi_i,
\end{equation}
where $\Pi_i$ is a projector independent of $\Lambda$ and $\mathcal{V}'$ is also independent of $\Lambda$.

Notice that Eq.~\eqref{eq:diag2} is a linear function acting on $\Lambda$. If for any channel $\Lambda$, $\Lambda_G$ is diagonal, the linear combination of any set of quantum channels would also be diagonal after $\mathsf{G}$-twirling. Note that the linear span of quantum channels is the set of all trace-preserving (TP) maps~\cite{Robust2014Kimmel}. Thus, for any TP map with form $\Lambda = \begin{pmatrix}
1 & \mathbf{0}\\
\mathbf{t} & \mathbf{T}_{\Lambda}
\end{pmatrix}$, $\mathcal{V}'\Lambda_G \mathcal{V}^{\prime\dagger}$ must be diagonal. It further requires that for any index $i$, $\Lambda'_i$ defined in Eq.~\eqref{eq:lambdai'} is diagonal up to an invertible matrix transformation. Recall that $\Lambda'_i$ is defined from $\Lambda'$, which equals $\mathcal{V}\Lambda \mathcal{V}^{-1}$. As $\mathcal{V}$ has the form
$\begin{pmatrix}
1 & \mathbf{0}\\
\mathbf{0} & \mathbf{T}_\mathcal{V}
\end{pmatrix}$ and  $\Lambda = \begin{pmatrix}
1 & \mathbf{0}\\
\mathbf{t} & \mathbf{T}_{\Lambda}
\end{pmatrix}$ is an arbitrary TP map,
$\Lambda' = \mathcal{V}\Lambda \mathcal{V}^{-1} =
\begin{pmatrix}
1 & \mathbf{0}\\
\mathbf{T}_\mathcal{V}\mathbf{t} & \mathbf{T}_\mathcal{V}\mathbf{T}_{\Lambda}\mathbf{T}_\mathcal{V}^{-1}
\end{pmatrix}
$ is also an arbitrary TP map.  Denote $\mathbf{t}' = \mathbf{T}_\mathcal{V}\mathbf{t}$ and $\mathbf{T}_{\Lambda'} =  \mathbf{T}_\mathcal{V}\mathbf{T}_{\Lambda}\mathbf{T}_\mathcal{V}^{-1}$, the elements of $\mathbf{t}'$ and $\mathbf{T}_{\Lambda'}$ would be arbitrary. Back to considering $\Lambda'_i$, when $i = 1$, $\phi_1$ is the trivial irreducible representation, then $\Lambda_1'=\Lambda'^{i}$ can be written as below,
\begin{equation}
\Lambda_1' =
\begin{pmatrix}
1 & 0 & \cdots & 0\\
\mathbf{t}'_1 & \mathbf{T}_{\Lambda'_{11}} & \cdots & \mathbf{T}_{\Lambda'_{1, n_1-1}}\\
\vdots & \vdots & \ddots & \vdots\\
\mathbf{t}'_{n_1-1} & \mathbf{T}_{\Lambda'_{n_1-1,1}} & \cdots & \mathbf{T}_{\Lambda'_{n_1-1,n_1-1}},
\end{pmatrix}
\end{equation}
where the matrix elements are arbitrary except for the first line. If $n_1 > 1$, there exists a matrix $\begin{pmatrix}
1 & 0\\
1 & 1
\end{pmatrix} \oplus \mathbb{I}_{n_1-2}$ that cannot be diagonalized. Thus, $n_1$ must take $1$. Similarly, when $i \geq 2$,
$\Lambda'_i$ can take any matrix as $\Lambda'$ is an arbitrary TP map. If $n_i > 1$ in this case, we can also find a matrix $\begin{pmatrix}
1 & 0\\
1 & 1
\end{pmatrix} \oplus \mathbb{I}_{n_i-2}$ that cannot be diagonalized. The above arguments indicate that $\forall i$, $n_i = 1$. The irreducible representation decomposition of Pauli-Liouville representation of $\mathsf{G}$ must be multiplicity-free to make $\Lambda_G$ diagonal for any channel $\Lambda$. This completes the proof of the first conclusion in the lemma.

A direct corollary of the non-multiplicity condition of $\mathsf{G}$ in Liouville representation is the cardinality of the twirling group $\abs{\mathsf{G}}\geq 4^n$. This can be obtained by the Burnside theorem, as shown below.
\begin{equation}
\begin{split}
\abs{\mathsf{G}} &= \sum_{\phi_i \in R_G} \dim \phi_i^2\\
&\geq \sum_{i=1}^k \dim \phi_i^2\\
&\geq \sum_{i=1}^k \dim \phi_i\\
&= 4^n,
\end{split}
\end{equation}
where $R_G$ records all inequivalent irreducible representations of $\mathsf{G}$.
\end{proof}

\subsection{Systematic twirling group construction for generic quantum gates}\label{appendssc:construct}
Before we go through the proof of the main theorem, we discuss how to construct twirling groups for generic quantum gates satisfying the conditions in Question~\ref{ques:RB}.

Notice that if $\mathsf{G}$ contains a subgroup $\mathsf{H}$ that would make arbitrary channels diagonal, up to a unitary transformation, via twirling, then $\mathsf{G}$ would also enjoy this property as $\Lambda_G = (\Lambda_G)_H$, which is shown below.

\begin{lemma}\label{lemma:twirlingdiagonal}
If $\mathsf{G}$ contains a subgroup $\mathsf{H}$ that would twirl any noise channel into a diagonal channel, up to a unitary transformation, in the Pauli-Liouville representation, then $\mathsf{G}$ would also enjoy this property. Mathematically,
\begin{equation}
\mathsf{H}\subset \mathsf{G}, \forall \Lambda, \Lambda_H \text{ is diagonal} \Rightarrow \forall \Lambda, \Lambda_G \text{ is diagonal}.
\end{equation}
\end{lemma}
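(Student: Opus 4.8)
The plan is to use the identity $\Lambda_G = (\Lambda_G)_H$ flagged in the statement, together with the fact that $\Lambda_G$ is itself a legitimate quantum channel, so that the hypothesis on $\mathsf{H}$ can be applied to $\Lambda_G$ directly.

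First I would verify $\Lambda_G = (\Lambda_G)_H$. Unfolding the definitions,
\begin{equation}
(\Lambda_G)_H = \mathbb{E}_{H\in\mathsf{H}}\,\mathcal{H}\,\Lambda_G\,\mathcal{H}^{\dagger} = \mathbb{E}_{H\in\mathsf{H}}\,\mathbb{E}_{G\in\mathsf{G}}\,(\mathcal{H}\mathcal{G})\,\Lambda\,(\mathcal{H}\mathcal{G})^{\dagger}.
\end{equation}
Since $\mathsf{H}\subseteq\mathsf{G}$ and $\mathsf{G}$ is a group, for each fixed $H$ the map $G\mapsto HG$ is a bijection of $\mathsf{G}$, so $HG$ is uniform on $\mathsf{G}$ whenever $G$ is; hence the inner expectation equals $\Lambda_G$ for every $H\in\mathsf{H}$, and averaging over $\mathsf{H}$ yields $(\Lambda_G)_H=\Lambda_G$. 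Next I would observe that $\Lambda_G$ is CPTP: each $\mathcal{G}\Lambda\mathcal{G}^{\dagger}$ is a unitary conjugate of a quantum channel and therefore a quantum channel, and the set of CPTP maps is convex, so the uniform average over $\mathsf{G}$ is again a quantum channel.

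Now the hypothesis applies with $\Lambda_G$ in place of $\Lambda$: $(\Lambda_G)_H$ is diagonal in the Pauli-Liouville representation up to a unitary transformation that depends only on $\mathsf{H}$ (it is the common transformation appearing for every input of the $\mathsf{H}$-twirl, cf.\ Eq.~\eqref{eq:projectordecompose2}), hence in particular independent of $\Lambda$. Combining with the first step, $\Lambda_G=(\Lambda_G)_H$ is diagonal up to a $\Lambda$-independent unitary transformation, which is the claim. The proof is short; its only content is the coset-invariance identity $\Lambda_G=(\Lambda_G)_H$, and the only mild subtlety — not really an obstacle — is checking that $\Lambda_G$ belongs to the class of channels to which the hypothesis is stated and that the supplied transformation is genuinely $\Lambda$-independent.
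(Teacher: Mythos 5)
Your argument is correct and follows the same route as the paper: both rest on the coset-invariance identity $(\Lambda_G)_H=\Lambda_G$ established by noting that left multiplication by any $H\in\mathsf{H}\subseteq\mathsf{G}$ permutes $\mathsf{G}$, after which the hypothesis on $\mathsf{H}$ is applied to the channel $\Lambda_G$. You are simply more explicit than the paper about the two implicit checks (that $\Lambda_G$ is itself a quantum channel, and that the diagonalizing transformation is $\Lambda$-independent), both of which are correct.
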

\begin{proof}
\begin{equation}
\begin{split}
(\Lambda_G)_H &= \mathbb{E}_{G_h\in \mathsf{H}} \mathbb{E}_{G\in \mathsf{G}} \mathcal{G}_h\mathcal{G}\Lambda (\mathcal{G}_h\mathcal{G})^{\dagger}\\
&= \mathbb{E}_{G\in \mathsf{G}} \mathcal{G}\Lambda \mathcal{G}^{\dagger}\\
&= \Lambda_G.
\end{split}
\end{equation}
Proof is done.
\end{proof}

As Pauli group $\mathsf{P}_n$ can twirl any channel into a Pauli channel, a simple solution to Question~\ref{ques:RB} is just the smallest group containing Pauli group and normalized by target $U$, which is constructed and spanned by continuously applying $U$ on $\mathsf{P}_n$ until no new element is generated. It is worth noting that, except for Pauli group $\mathsf{P}_n$, local dihedral group $\mathsf{D}^m_{n} = \langle X, Z_{m} \rangle^{\otimes n}$ can also twirl any channel into a Pauli channel. In fact, $\mathsf{P}_n$ is a special case of $\mathsf{D}^m_{n}$ when $m=2$. For specific target gate $U$ like $Z_m$, substituting $\mathsf{P}_n$ with $\mathsf{D}^m_n$ may lead to a smaller twirling group $\mathsf{G}$ though $\mathsf{D}^m_n$ is in general larger than $\mathsf{P}_n$. In reality, one can select an optimal $m$ to obtain a better choice of $\mathsf{G}$. The above discussions can be summarized below.

\begin{corollary}\label{coro:pauli}
If $n$-qubit Pauli group $\mathsf{P}_n\subseteq \mathsf{G}$, then $\Lambda_G$ is a Pauli channel and is diagonal in the Pauli-Liouville representation.
\end{corollary}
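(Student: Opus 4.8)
The plan is to reduce Corollary~\ref{coro:pauli} to the classical fact that a Pauli twirl sends every CPTP map to a Pauli channel, and then to invoke the correspondence between Pauli channels and diagonal Pauli--Liouville matrices recorded in Appendix~\ref{appendssc:PLrep}. First I would note that, since $\mathsf{P}_n\subseteq\mathsf{G}$, the computation in the proof of Lemma~\ref{lemma:twirlingdiagonal} gives more than is stated there: for each fixed $P\in\mathsf{P}_n$ the map $G\mapsto PG$ permutes $\mathsf{G}$, so $\mathbb{E}_{G\in\mathsf{G}}\mathcal{P}\mathcal{G}\Lambda\mathcal{G}^{\dagger}\mathcal{P}^{\dagger}=\Lambda_G$, and averaging over $P$ yields $(\Lambda_G)_{\mathsf{P}_n}=\Lambda_G$. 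Since $\Lambda_G$ is itself CPTP (an average of unitary conjugates of the CPTP map $\Lambda$), it therefore suffices to prove that the Pauli twirl $M_{\mathsf{P}_n}=\mathbb{E}_{P\in\mathsf{P}_n}\mathcal{P}M\mathcal{P}^{\dagger}$ of an arbitrary linear map $M$ on $\mathcal{L}(\mathcal{H})$ is diagonal in the Pauli--Liouville representation; applying this to $M=\Lambda_G$ then shows $\Lambda_G$ is diagonal there, and, being CPTP and diagonal, that it is a Pauli channel.

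For the core claim I would work directly in the Pauli--Liouville representation. For $P\in\mathsf{P}_n$ one has $P\sigma_iP^{\dagger}=(-1)^{\langle P,\sigma_i\rangle}\sigma_i$ with the commutation bracket $\langle\cdot,\cdot\rangle$ of Eq.~\eqref{eq:WalshHtrans}, so $\mathcal{P}$ is diagonal with $\pm1$ entries and $\mathcal{P}^{\dagger}=\mathcal{P}$; hence $(\mathcal{P}M\mathcal{P}^{\dagger})_{ij}=(-1)^{\langle P,\sigma_i\rangle+\langle P,\sigma_j\rangle}M_{ij}$ and, averaging,
\begin{equation}
(M_{\mathsf{P}_n})_{ij}=\Big(\frac{1}{4^{n}}\sum_{P\in\mathsf{P}_n}(-1)^{\langle P,\sigma_i\rangle+\langle P,\sigma_j\rangle}\Big)M_{ij}.
\end{equation}
Because the symplectic (commutation) form on $\mathsf{P}_n\cong\mathbb{Z}_2^{2n}$ is bilinear, $\langle P,\sigma_i\rangle+\langle P,\sigma_j\rangle\equiv\langle P,\sigma_i\sigma_j\rangle\pmod 2$. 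If $i=j$ then $\sigma_i\sigma_j\propto\mathbb{I}$ commutes with every Pauli and the bracket equals $1$; if $i\neq j$ then $\sigma_i\sigma_j$ is a nontrivial Pauli, so nondegeneracy of the commutation form furnishes some $P$ with $\langle P,\sigma_i\sigma_j\rangle=1$, making $P\mapsto(-1)^{\langle P,\sigma_i\sigma_j\rangle}$ a nontrivial character of the abelian group $\mathsf{P}_n$, whose sum over the group vanishes. Thus $(M_{\mathsf{P}_n})_{ij}=\delta_{ij}M_{ii}$, i.e.\ $M_{\mathsf{P}_n}$ is diagonal in the Pauli basis.

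Applying this with $M=\Lambda_G$ and using $(\Lambda_G)_{\mathsf{P}_n}=\Lambda_G$ shows $\Lambda_G$ is diagonal in the Pauli--Liouville representation, which also settles the ``up to a unitary transformation'' clause trivially (the identity). Finally, since $\Lambda_G$ is CPTP and diagonal in Pauli--Liouville, the invertible Walsh--Hadamard correspondence of Eqs.~\eqref{eq:WalshHtrans}--\eqref{eq:invWalshHtrans} forces its $\chi$-matrix to be diagonal, so $\Lambda_G$ is a Pauli channel, completing the proof. There is no genuine obstacle in this argument; the only slightly delicate points are the elementary character-sum/nondegeneracy step showing the off-diagonal Pauli--Liouville entries average to zero, and the bookkeeping needed to apply the twirl-absorption identity to the full group $\mathsf{G}$ rather than only to $\mathsf{P}_n$.
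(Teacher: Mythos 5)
Your core argument is correct and is essentially the paper's own (implicit) proof: the paper invokes Lemma~\ref{lemma:twirlingdiagonal} and the standard fact that Pauli twirling produces Pauli channels, and you simply fill in the details — the twirl-absorption identity $(\Lambda_G)_{\mathsf{P}_n}=\Lambda_G$ plus an explicit character sum showing $\mathcal{P}$ is diagonal with $\pm1$ entries and the off-diagonal matrix elements of the twirl vanish by nondegeneracy of the commutation form. (The phrase ``the bracket equals $1$'' when $i=j$ should read ``the exponent $\langle P,\sigma_i\sigma_j\rangle$ equals $0$''; that is a trivial slip.)

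One step, however, is argued for the wrong reason. You write that ``the invertible Walsh--Hadamard correspondence of Eqs.~\eqref{eq:WalshHtrans}--\eqref{eq:invWalshHtrans} forces its $\chi$-matrix to be diagonal.'' Those equations only relate the \emph{diagonal} entries of the Pauli--Liouville and $\chi$-matrices; they say nothing about whether the off-diagonal $\chi_{kl}$ ($k\neq l$) vanish, so they cannot by themselves yield the Pauli-channel conclusion. The correct route is one you already have in hand: from the diagonality you proved, $\mathcal{P}\Lambda_G\mathcal{P}^{\dagger}=\Lambda_G$ for every $P\in\mathsf{P}_n$, i.e.\ $(\Lambda_G)_{\mathsf{P}_n}=\Lambda_G$, and the Pauli twirl of a CPTP map is a Pauli channel (expand the twirl in the $\chi$-representation; $P^{\dagger}P_kP=(-1)^{\langle P,P_k\rangle}P_k$ and the same character sum annihilates all $\chi_{kl}$ with $k\neq l$, leaving $\Lambda_G(\rho)=\sum_k\chi_{kk}P_k\rho P_k$ with $\chi_{kk}\ge0$ and $\sum_k\chi_{kk}=1$). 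Replace the Walsh--Hadamard appeal with this and the proof is sound.
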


\begin{corollary}\label{coro:dihedral}
If $n$-qubit local dihedral group $\mathsf{D}_n^m \subseteq \mathsf{G}$, then $\Lambda_G$ is a Pauli channel and is diagonal in the Pauli-Liouville representation.
\end{corollary}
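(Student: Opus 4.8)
The plan is to follow the same two-step pattern as Corollary~\ref{coro:pauli}: first reduce the statement to the special case $\mathsf{G} = \mathsf{D}_n^m$, then show directly that twirling an arbitrary channel over the local dihedral group yields a channel diagonal in the Pauli-Liouville basis. For the reduction, since $\mathsf{D}_n^m$ is a subgroup of $\mathsf{G}$, the computation in the proof of Lemma~\ref{lemma:twirlingdiagonal} applies verbatim and gives $\Lambda_G = (\Lambda_G)_{\mathsf{D}_n^m}$ (for a fixed $H \in \mathsf{D}_n^m$, the map $G \mapsto HG$ permutes $\mathsf{G}$, so the $\mathsf{D}_n^m$-average of $\Lambda_G$ returns $\Lambda_G$). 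Hence it suffices to prove that for every channel $\Theta$ the twirl $\Theta_{\mathsf{D}_n^m} = \mathbb{E}_{G \in \mathsf{D}_n^m} \mathcal{G} \Theta \mathcal{G}^{\dagger}$ is a Pauli channel.

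To see that, I would decompose the Pauli-Liouville representation of $\mathsf{D}_n^m$ into irreducibles and check that it is multiplicity-free with irreducible components spanned by Pauli operators. On a single qubit, for $G = X^a Z_m^b$ the conjugation $\mathcal{G}$ fixes $\mathrm{span}\{\mathbb{I}\}$ and $\mathrm{span}\{Z\}$ up to a sign and maps $\mathrm{span}\{X,Y\}$ into itself by a rotation or reflection of the dihedral group of order $2m$; thus $\langle X, Z_m\rangle$ decomposes the four-dimensional single-qubit Pauli-Liouville space as $\mathrm{triv} \oplus \mathrm{sgn} \oplus \rho$, with $\mathrm{sgn}$ carried by $\mathrm{span}\{Z\}$ and $\rho$ the standard two-dimensional dihedral representation on $\mathrm{span}\{X,Y\}$, which is irreducible for $m \ge 3$ and splits into two further inequivalent characters on $\mathrm{span}\{X\}$ and $\mathrm{span}\{Y\}$ when $m = 2$ (the Pauli case). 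In all cases this is a multiplicity-free decomposition into subspaces spanned by Pauli operators.

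Lifting to $n$ qubits, $\mathsf{D}_n^m = \langle X, Z_m\rangle^{\times n}$ is a direct product, and the irreducible representations of a direct product are precisely the external tensor products of irreducible representations of the factors; therefore the $n$-qubit Pauli-Liouville representation decomposes into a direct sum indexed by the choice on each qubit of one of the single-qubit irreducibles, every summand occurring exactly once and being spanned by a tensor product of the single-qubit Pauli spans (total dimension $(1+1+2)^n = 4^n$, as Burnside requires). Applying Proposition~\ref{prop:twirling}, twirling over $\mathsf{D}_n^m$ acts as a scalar on each of these Pauli-spanned irreducible blocks, so $\Theta_{\mathsf{D}_n^m}$ is diagonal in the Pauli-Liouville representation with Pauli fidelities constant on each block --- equivalently, $\Theta_{\mathsf{D}_n^m}$ is a Pauli channel. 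Together with the reduction, $\Lambda_G$ is a Pauli channel, hence diagonal in the Pauli-Liouville representation.

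I expect the main obstacle to be the representation-theoretic bookkeeping rather than any hard estimate: one must verify that the single-qubit representation is multiplicity-free in every case (in particular $m = 2$, and restricting to $m \ge 2$ since $\mathsf{D}_n^1 = \langle X\rangle^{\otimes n}$ is too small), confirm that an external tensor product of multiplicity-free representations over a direct product remains multiplicity-free, and --- most importantly --- keep the irreducible blocks aligned with the Pauli basis, so that ``scalar on each block'' upgrades to the genuine Pauli-channel conclusion rather than merely ``diagonal up to a change of basis'' as in Lemma~\ref{lemma:cardconstraint}.
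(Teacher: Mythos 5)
Your proof is correct and follows the two-step structure the paper intends for this corollary: first reduce to $\mathsf{G} = \mathsf{D}_n^m$ via $\Lambda_G = (\Lambda_G)_{\mathsf{D}_n^m}$, as in Lemma~\ref{lemma:twirlingdiagonal}, and then verify directly that $\mathsf{D}_n^m$-twirling yields a Pauli channel. The paper simply asserts the latter fact (``local dihedral group $\mathsf{D}_n^m$ can also twirl any channel into a Pauli channel'') without argument; your decomposition of the single-qubit Liouville representation into $\mathrm{triv}\oplus\mathrm{sgn}\oplus\mathrm{std}$ with Pauli-aligned blocks, followed by the external-tensor-product lift to $n$ qubits, is the right way to fill that gap. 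This is not entirely cosmetic: for odd $m$ one has $\mathsf{P}_n\not\le\mathsf{D}_n^m$ (indeed $Z\notin\langle X,Z_m\rangle$ modulo global phase), so the corollary genuinely does not follow from Corollary~\ref{coro:pauli} and Lemma~\ref{lemma:twirlingdiagonal} alone, and a direct argument of the kind you give is needed. The one point I would tighten is the parenthetical ``as Burnside requires'': the count $(1+1+2)^n = 4^n$ is just the dimension of the Liouville space, not the Burnside identity $\sum_i d_i^2 = |\mathsf{G}|$; it is a sanity check on the direct-sum decomposition, not an invocation of Lemma~\ref{lemma:Burnsidetheorem}.
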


\begin{example}[Simple twirling group construction]\label{exp:group}
Given an $n$-qubit unitary, $U$, a simple solution to Question~\ref{ques:RB} is the smallest group containing $D_n^m$ and normalized by $U$ where $m \geq 2$ is a positive integer. Concretely, the twirling group $\mathsf{G}$ can be constructed as follows.
\begin{equation}\label{eq:twirlgroup}
\begin{split}
\mathsf{G} &= \langle \bigcup_{l\in \mathbb{N}} U^l\mathsf{D}_n^m (U^{\dagger})^l \rangle\\
&= \{ (U^{l_1}D_1 (U^{\dagger})^{l_1})^{k_1} (U^{l_2}D_2 (U^{\dagger})^{l_2})^{k_2} \cdots, \forall i, l_i, k_i\in \mathbb{Z}, D_i\in \mathsf{D}_n^m\}\\
&= \{ U^{l_1}D_1 (U^{\dagger})^{l_1} U^{l_2}D_2 (U^{\dagger})^{l_2} \cdots, \forall i, l_i\in \mathbb{Z}, D_i\in \mathsf{D}_n^m\}.
\end{split}
\end{equation}
Here, $\mathsf{D}_n^m$ is the $n$-qubit local dihedral group $\langle X, Z_m \rangle^{\otimes n}$. In practice, we select an optimal $m$ to make $\abs{\mathsf{G}}$ as small as possible.
\end{example}

\subsection{Proof of Lemma~\ref{lemma:cosetcardconstraint}}\label{appendssc:lemmacosetcardconstraint}
Below, we focus on the case that the twirling group is a CRU subgroup and prove Lemma~\ref{lemma:cosetcardconstraint} and Theorem~\ref{thm:cnzm} in the main text. We first provide a more formal and more mathematical version of Lemma~\ref{lemma:cosetcardconstraint}.

\begin{replemma}{lemma:cosetcardconstraint}[Formal Version]
For a finite $n$-qubit CRU subgroup $\mathsf{G}$, set the $Z$-basis subgroup of $\mathsf{G}$ as $\mathsf{G}_Z = \{U\in \mathsf{G} | U \text{ is Z basis} \}$, which is a normal subgroup of $\mathsf{G}$ by Lemma~\ref{lemma:ZinvariantCRU2}. If for any quantum channel $\Lambda$, its twirling over group $\mathsf{G}$, $\Lambda_G$, is diagonal up to a unitary transformation in the Pauli-Liouville representation, then the quotient group $\mathsf{G}/\mathsf{G}_Z$ can interchange any two computational basis states. In another word, $\mathsf{G}/\mathsf{G}_Z$ contains set $\mathsf{S} = \{\Pi_\mathbf{i} | \Pi_\mathbf{i} = \Pi^X_\mathbf{i}\Pi^C_\mathbf{i}, \mathbf{i}=i_1i_2...i_n\in \{0,1\}^n, \Pi_\mathbf{i} = X_1^{i_1}X_2^{i_2}...X_n^{i_n} \in \mathsf{X}, \Pi^C_\mathbf{i} \in \mathsf{C}^{[n-1]}\mathsf{X}\}$. Here, $\mathsf{X} = \langle X \rangle$ is the group generated by Pauli $X$ gates on all qubits, $\mathsf{C}^{[n-1]}\mathsf{X} = \langle CX, CCX, C^{n-1} X \rangle$ is the group generated by CNOT and multi-qubit Toffoli gates on all qubits.
\end{replemma}

Note that $\langle \cdot \rangle$ denotes the group generated by $\cdot$. Same with before, we simply use $\langle X \rangle$ to represent group $\langle X_1, X_2, \cdots, X_n \rangle$ and $\langle CX, CCX, C^{n-1} X \rangle$ to represent $\langle CX_{12},\cdots  CCX_{123}, \cdots , C^{n-1} X_{1,2,\cdots n} \rangle$ while subscripts label the qubits acted upon. Below we provide the proof of Lemma~\ref{lemma:cosetcardconstraint}.

\begin{proof}\label{lemmapf:cosetcardconstraint}
From Lemma~\ref{lemma:ZinvariantCRU}, we obtain that the space spanned by $\{\mathbb{I}, Z\}^{\otimes n}$ is an invariant subspace of $\mathsf{G}$. By rearranging Pauli operator bases and put $\{\mathbb{I}, Z\}^{\otimes n}$ forward, the Pauli-Liouville representation of $\mathsf{G}$ would be in a block-diagonal form
\begin{equation}
\mathcal{G} = \mathcal{G}_Z \bigoplus \mathcal{G}_{\perp},
\end{equation}
where $\dim\mathcal{G}_Z = 2^n$ and $\dim\mathcal{G}_{\perp} = 4^n-2^n$. Correspondingly, we represent channel $\Lambda$ in a block-diagonal form in the rearranged basis,
\begin{equation}
\Lambda = \begin{pmatrix}
\Lambda_Z & \Lambda_{Z\perp}\\
\Lambda_{\perp Z} & \Lambda_{\perp}
\end{pmatrix}.
\end{equation}
Thus, the twirling of $\Lambda$ over $\mathsf{G}$ is
\begin{equation}
\begin{split}
\Lambda_G &=
\mathbb{E}_{G\in \mathsf{G}} \mathcal{G} \Lambda \mathcal{G}^{-1}\\
&= \mathbb{E}_{G\in \mathsf{G}}
\begin{pmatrix}
\mathcal{G}_Z\Lambda_Z\mathcal{G}_Z^{-1} & \mathcal{G}_Z\Lambda_{Z\perp}\mathcal{G}_{\perp}^{-1}\\
\mathcal{G}_{\perp}\Lambda_{\perp Z}\mathcal{G}_Z^{-1} & \mathcal{G}_{\perp}\Lambda_{\perp}\mathcal{G}_{\perp}^{-1}
\end{pmatrix}
\end{split}
\end{equation}
Due to the direct sum structure, the irreducible representation decomposition of $\mathcal{G}$ is composed of decompositions of $\mathcal{G}_Z$ and $\mathcal{G}_{\perp}$. From Lemma~\ref{lemma:cardconstraint}, all of the irreducible representations contained in $\mathcal{G}$ must be inequivalent to make $\Lambda_G$ diagonal. Therefore, $\mathcal{G}_Z$ and $\mathcal{G}_{\perp}$ do not have any equivalent irreducible representation in common, which leads to $\mathbb{E}_{G\in \mathsf{G}} \mathcal{G}_Z \Lambda_{Z\perp}\mathcal{G}_{\perp}^{-1} = 0$, $\mathbb{E}_{G\in \mathsf{G}} \mathcal{G}_{\perp} \Lambda_{\perp Z}\mathcal{G}_{Z}^{-1} = 0$, and $\Lambda_G = (\mathbb{E}_{G\in \mathsf{G}} \mathcal{G}_Z \Lambda_Z \mathcal{G}_Z^{-1}) \oplus (\mathbb{E}_{G\in \mathsf{G}} \mathcal{G}_{\perp} \Lambda_{\perp} \mathcal{G}_{\perp}^{-1})$. To make $\Lambda_G$ diagonal, we require that two blocks in it are both diagonal. Below we focus on the first block.

Note that $\mathsf{G}_Z$ is a normal subgroup of $\mathsf{G}$. Consider its coset, or quotient group $\mathsf{G} / \mathsf{G}_Z = \{\Pi_i \mathsf{G}_Z, 1\leq i\leq \abs{\mathsf{G} / \mathsf{G}_Z}\}$, where $\forall i$, $\Pi_i$ is a representative element. Without loss of generality, we set $\Pi_1$ as $\mathbb{I}_{2^n}$. Then the other representative elements must be outside $\mathsf{G}_Z$. Then we separate the twirling of $\Lambda_Z$ over $\mathsf{G}$ into two parts. One is the twirling over $\mathsf{G}_Z$, and the other is the twirling over quotient group $\mathsf{G} / \mathsf{G}_Z$. As $Z$-basis gates are all identity in the basis of $\{\mathbb{I},Z\}^{\otimes n}$ in Liouville representation, they do not influence the twirled channel in that subspace. For instance, the Liouville representation of the CS gate is shown in Fig.~\ref{fig:mapcs}. In the subspace spanned by $\{\mathbb{I},Z_1, Z_2, Z_1Z_2\}$, CS gate is equal to identity. Certainly, if the CS gate is a twirling gate, it contributes nothing to the twirling in this subspace. The same are the gates in $\mathsf{G}_Z$.

\begin{figure}[htbp!]
\centering
\includegraphics[width=.7\textwidth]{./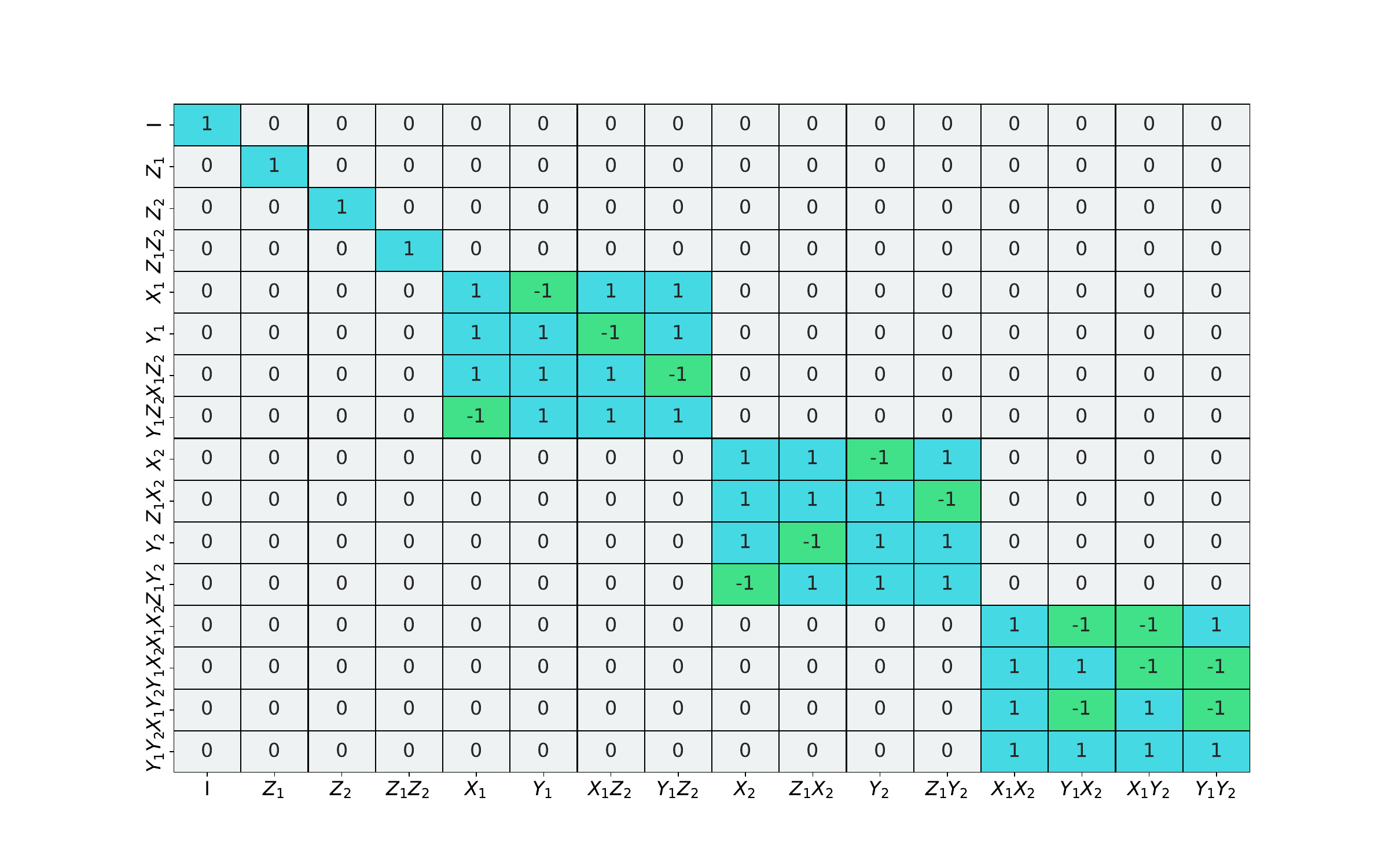}
\caption{Pauli Liouville representation of CS gate.}
\label{fig:mapcs}
\end{figure}

Thus, the only contribution for twirling in this subspace comes from the quotient group. With the similar arguments in Lemma~\ref{lemma:cardconstraint}, we can directly obtain that $\abs{\mathsf{G} / \mathsf{G}_Z} \geq \dim \Lambda_Z = 2^n$, the result in Corollary~\ref{coro:cosetcardconstraint}. Specifically,
\begin{equation}
\begin{split}
\mathbb{E}_{G\in \mathsf{G}}\mathcal{G}_Z\Lambda_Z \mathcal{G}_Z^{-1} &= \mathbb{E}_{1 \leq i \leq \abs{\mathsf{G} / \mathsf{G}_Z}} {\sf \Pi}_{iZ} (\mathbb{E}_{W\in \mathsf{G}_Z} \mathcal{W}_Z\Lambda_Z \mathcal{W}_Z^{-1}) {\sf \Pi}_{iZ}^{-1}\\
&= \mathbb{E}_{1 \leq i \leq \abs{\mathsf{G} / \mathsf{G}_Z}} {\sf \Pi}_{iZ} \Lambda_Z {\sf \Pi}_{iZ}^{-1},
\end{split}
\end{equation}
where the subscript $Z$ denotes the sub-representation of Pauli-Liouville representation in $\{\mathbb{I},Z\}^{\otimes n}$. It can be verified that $\{{\sf \Pi}_{iZ}, 1\leq i \leq \abs{\mathsf{G} / \mathsf{G}_Z}\}$ is a representation of quotient group $\mathsf{G} / \mathsf{G}_Z$. Therefore, one can decompose ${\sf \Pi}_{iZ}$ with irreducible representations of $\mathsf{G} / \mathsf{G}_Z$,
\begin{equation}
{\sf \Pi}_{iZ} = \bigoplus_{j=1}^k \phi_j(\Pi_i).
\end{equation}
The twirled channel $\mathbb{E}_{G\in \mathsf{G}}\mathcal{G}_Z\Lambda_Z \mathcal{G}_Z^{-1}$ would be block diagonal corresponding to the irreducible representation decomposition of ${\sf \Pi}_{iZ}$. With the same arguments in Lemma~\ref{lemma:cardconstraint}, ${\sf \Pi}_{iZ}$ must be multiplicity-free. Then
\begin{equation}
\begin{split}
\abs{\mathsf{G} / \mathsf{G}_Z} &= \sum_{j\in R_{G/Z}} \dim \phi_j^2\\
&\geq \sum_{j=1}^k \dim \phi_j^2\\
&\geq \sum_{j=1}^k \dim \phi_j\\
&= 2^n,
\end{split}
\end{equation}
where $R_{G/Z}$ records all inequivalent irreducible representations of $\mathsf{G} / \mathsf{G}_Z$.

To obtain the result in Lemma~\ref{lemma:cosetcardconstraint}, we further study the irreducible representation decomposition of quotient group $\mathsf{G}/\mathsf{G}_Z$ in the space spanned by $\{\mathbb{I}, Z\}^{\otimes n}$. As mentioned before, ${\sf \Pi}_{iZ}$ must be multiplicity-free, which means each irreducible representation can appear at most once in ${\sf \Pi}_{iZ}$. Focusing on the trivial irreducible representation, its multiplicity in ${\sf \Pi}_{iZ}$ can be obtained via Lemma~\ref{lemma:multiplicity} and is given by
\begin{equation}\label{eq:trimultiplicity}
m_t = \mathbb{E}_{\Pi \in \mathsf{G}/\mathsf{G}_Z} 1\cdot \tr({\sf \Pi}_{Z}),
\end{equation}
where $1$ and $\tr({\sf \Pi}_{Z})$ are characters of trivial irreducible representation and representation in space $\{\mathbb{I}, Z\}^{\otimes n}$, respectively. Through direct calculation, Eq.~\eqref{eq:trimultiplicity} can be simplified to
\begin{equation}\label{eq:trimultiplicity2}
\begin{split}
m_t &= \mathbb{E}_{\Pi} \frac{1}{2^n} \sum_{W\in \{\mathbb{I, Z}^{\otimes n}\}} \tr(W\Pi W\Pi^{\dagger})\\
&= \frac{1}{2^n} \mathbb{E}_{\Pi} \sum_{i_1, i_2, \cdots , i_n = 0}^1 \tr( (\prod_{j=1}^n Z_{j}^{i_j} )\Pi (\prod_{j=1}^n Z_{j}^{i_j} )\Pi^{\dagger})\\
&= \frac{1}{2^n} \mathbb{E}_{\Pi} \sum_{i_1, i_2, \cdots , i_n = 0}^1 \tr( (\prod_{j=1}^n Z_{j}^{i_j} ) (\prod_{j=1}^n \Pi Z_{j}^{i_j}\Pi^{\dagger} ))\\
&= \frac{1}{2^n} \mathbb{E}_{\Pi} \sum_{i_1, i_2, \cdots , i_n = 0}^1 \tr( (\prod_{j=1}^n Z_{j}^{i_j} ) (\prod_{j=1}^n \Pi Z_{j}^{i_j}\Pi^{\dagger} ))\\
&= \frac{1}{2^n} \mathbb{E}_{\Pi} \sum_{i_1, i_2, \cdots , i_n = 0}^1 \tr( (\prod_{j=1}^n Z_{j}^{i_j} \Pi Z_{j}^{i_j}\Pi^{\dagger} ))\\
&= \frac{1}{2^n} \mathbb{E}_{\Pi} \tr( (\prod_{j=1}^n \sum_{i_j = 0}^1 Z_{j}^{i_j} \Pi Z_{j}^{i_j}\Pi^{\dagger} ))\\
&= \frac{1}{2^n} \mathbb{E}_{\Pi} \tr( \prod_{j=1}^n (\mathbb{I} + Z_{j} \Pi Z_{j}\Pi^{\dagger}) ).\\
\end{split}
\end{equation}
As $\Pi \in \mathsf{G}/\mathsf{G}_Z$ is a permutation matrix, $\Pi Z_j \Pi^{\dagger}$, $Z_j\Pi Z_j \Pi^{\dagger}$, and $\mathbb{I} + Z_j\Pi Z_j \Pi^{\dagger}$ are all diagonal matrices. Therefore, Eq.~\eqref{eq:trimultiplicity2} can be further simplified to
\begin{equation}
\begin{split}
m_t &= \frac{1}{2^n} \mathbb{E}_{\Pi} \sum_{\mathbf{i}\in \{0,1\}^n} \prod_{j=1}^n  \bra{\mathbf{i}}\mathbb{I} + Z_{j} \Pi Z_{j}\Pi^{\dagger}\ket{\mathbf{i}} \\
&= \frac{1}{2^n} \mathbb{E}_{\Pi} \sum_{\mathbf{i}\in \{0,1\}^n} \prod_{j=1}^n (1 + \bra{\mathbf{i}} Z_{j} \Pi Z_{j}\Pi^{\dagger}\ket{\mathbf{i}}) \\
&= \frac{1}{2^n} \mathbb{E}_{\Pi} \sum_{\mathbf{i}\in \{0,1\}^n} \prod_{j=1}^n (1 + (-1)^{\mathbf{i}_j}\bra{\mathbf{i}} \Pi Z_{j}\Pi^{\dagger}\ket{\mathbf{i}}) \\
&= \frac{1}{2^n} \mathbb{E}_{\Pi} \sum_{\mathbf{i}\in \{0,1\}^n} \prod_{j=1}^n (1 + (-1)^{\mathbf{i}_j+\pi(\mathbf{i})_j})\\
&= \frac{1}{2^n} \mathbb{E}_{\Pi} \sum_{\mathbf{i}\in \{0,1\}^n} 2^n \delta_{\pi(\mathbf{i})=\mathbf{i}}\\
&= \mathbb{E}_{\Pi} N_{\Pi}\\
&= O_{\mathsf{G}/\mathsf{G}_Z \rightarrow \{0,1\}^n}.\\
\end{split}
\end{equation}
In the fourth line, we utilize that a permutation matrix would transform a bit string into another bit string and denote $\ket{\pi(\mathbf{i})} = \Pi^{\dagger}\ket{\mathbf{i}}$. Here, $\pi$ is a permutation acting on $\{0,1\}^n$ and can be viewed as a representation of $\Pi$. In the sixth line, $N_{\Pi} = \sum_{\mathbf{i}\in \{0,1\}^n} \delta_{\pi(\mathbf{i})=\mathbf{i}}$ denotes the number of fixed points for $\pi$ acting on $\{0,1\}^n$. Via Burnside lemma, $\mathbb{E}_{\Pi} N_{\Pi}$ equals to the number of orbits for group $\mathsf{G}/\mathsf{G}_Z$ acting on $\{0,1\}^n$ which we denote as $O_{\mathsf{G}/\mathsf{G}_Z \rightarrow \{0,1\}^n}$.

As trivial irreducible representation can appear at most once, $m_t$ cannot be larger than 1, or the number of orbits cannot be larger than 1. It implies that gates of $\mathsf{G}/\mathsf{G}_Z$ can interchange any two computational states or any two bit strings in $\{0,1\}^n$. Then, we obtain the results in the informal version of Lemma~\ref{lemma:cosetcardconstraint}.

Now, we obtain that $\mathsf{G}/\mathsf{G}_Z$ can transform $0^n$ to any other bit string in $\{0,1\}^n$. Recall that $\mathsf{G}/\mathsf{G}_Z$ is a subgroup of $\langle X, C^{n-1}X \rangle $, each element in $\mathsf{G}/\mathsf{G}_Z$ can be written as $\Pi^X_i\Pi^C_i$, where $\Pi^X_i\in \mathsf{X} = \langle X\rangle$, $\Pi^C_i \in \langle CX, CCX, \cdots , C^{n-1} X \rangle$. As any element in $\langle CX, CCX, \cdots , C^{n-1} X \rangle$ has no effect on bit string $0^n$, $\Pi^X_i\Pi_i^C$ would simply transform $\ket{0^n}$ to $\Pi^X_i \ket{0^n}$. Thus, $\Pi^X_i$ must take over all elements in $\mathsf{X}$ to transform $0^n$ to all bit strings in $\{0,1\}^n$. Here we complete the proof of Lemma~\ref{lemma:cosetcardconstraint}.
\end{proof}

In the main text, we have discussed some advantages of choosing the CRU subgroup as a twirling group for tailoring a diagonal gate on the computational bases. Below we discuss that it might be sufficient only to consider CRU twirling gates for tailoring diagonal gates. We conjecture that if any finite group $\mathsf{G}$ can make arbitrary noise channels diagonal via twirling, then under the equivalence of unitary transformation, it has a CRU subgroup $\mathsf{G}_C$ also achieving that. A concrete example is $\mathsf{G}$ as the Clifford group and $\mathsf{G}_C$ as the Pauli group. Thus, for diagonal gate $U$ in the computational basis, considering $\mathsf{G}$ in CRU highly likely suffices to find the optimal solution. It is worth noting that the above discussion is only conjecture, and we hope in the future, people can find the smallest group in the whole unitary group for tailoring generic quantum gates.

From Lemma~\ref{lemma:cosetcardconstraint}, we can directly obtain a lower bound for the cardinality of the quotient group, as shown in the following corollary.

\begin{corollary}\label{coro:cosetcardconstraint}
For a finite $n$-qubit CRU subgroup $\mathsf{G}$, set the computational basis subgroup of $\mathsf{G}$ as $\mathsf{G}_Z = \{U\in \mathsf{G} | U \text{ is computational basis} \}$. By Lemma~\ref{lemma:ZinvariantCRU}, $\mathsf{G}_Z$ is a normal subgroup of $\mathsf{G}$. If for any quantum channel $\Lambda$, its twirling over group $\mathsf{G}$, $\Lambda_G$, is diagonal up to a unitary transformation in the Pauli-Liouville representation, then the cardinality of the quotient group $\abs{\mathsf{G}/\mathsf{G}_Z}\geq 2^n$.
\end{corollary}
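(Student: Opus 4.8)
The plan is to extract this bound directly from the machinery already assembled in the proof of Lemma~\ref{lemma:cosetcardconstraint}, so only the cardinality estimate need be isolated. First I would recall the block structure of the twirl. By Lemma~\ref{lemma:ZinvariantCRU}, the span of $\{\mathbb{I},Z\}^{\otimes n}$ is an invariant subspace for every gate in the CRU subgroup $\mathsf{G}$, so after reordering the Pauli basis the Pauli-Liouville representation splits as $\mathcal{G} = \mathcal{G}_Z \oplus \mathcal{G}_{\perp}$ with $\dim \mathcal{G}_Z = 2^n$. Restricting $\Lambda_G$ to this $2^n$-dimensional subspace gives $\mathbb{E}_{G\in\mathsf{G}} \mathcal{G}_Z \Lambda_Z \mathcal{G}_Z^{-1}$, and since $\Lambda_G$ is required to be diagonal up to a fixed unitary transformation, so is this restriction, for every block $\Lambda_Z$ arising from a channel $\Lambda$; as argued in the proof of Lemma~\ref{lemma:cardconstraint}, such blocks exhaust all matrices of the relevant trace-preserving form.

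Next I would use that $\mathsf{G}_Z$, the computational-basis (diagonal) subgroup of $\mathsf{G}$, acts as the identity on $\{\mathbb{I},Z\}^{\otimes n}$ and hence contributes nothing to the twirl in this subspace; consequently $\mathbb{E}_{G\in\mathsf{G}}\mathcal{G}_Z(\cdot)\mathcal{G}_Z^{-1}$ coincides with the average over the quotient group $\mathsf{G}/\mathsf{G}_Z$ acting through its $2^n$-dimensional representation $\{\mathsf{\Pi}_{iZ}\}$. Running the multiplicity-free argument of Lemma~\ref{lemma:cardconstraint} on this representation forces it to decompose into pairwise inequivalent irreducibles $\phi_1,\dots,\phi_k$, and Burnside's theorem (Lemma~\ref{lemma:Burnsidetheorem}) then yields
\[
\abs{\mathsf{G}/\mathsf{G}_Z} = \sum_{\phi_j \in R_{G/Z}} (\dim\phi_j)^2 \;\geq\; \sum_{j=1}^{k} (\dim\phi_j)^2 \;\geq\; \sum_{j=1}^{k} \dim\phi_j \;=\; 2^n ,
\]
where $R_{G/Z}$ records all inequivalent irreducible representations of $\mathsf{G}/\mathsf{G}_Z$. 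This is precisely the claimed bound.

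The only point warranting care — and the reason not simply to assert the inequality — is the reduction from twirling over $\mathsf{G}$ to twirling over $\mathsf{G}/\mathsf{G}_Z$ on the $Z$-subspace: one must invoke Lemma~\ref{lemma:ZinvariantCRU2} to know $\mathsf{G}_Z$ is normal, so that $\mathsf{G}/\mathsf{G}_Z$ is a group and $\{\mathsf{\Pi}_{iZ}\}$ is a well-defined representation, and one must check that passing to coset representatives leaves the average unchanged. All of this is carried out inside the proof of Lemma~\ref{lemma:cosetcardconstraint}, so the corollary follows with no additional work; it merely records the intermediate estimate $\abs{\mathsf{G}/\mathsf{G}_Z}\geq 2^n$ for independent reference.
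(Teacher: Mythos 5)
Your proposal is correct and follows essentially the same route the paper takes: block-decompose the Pauli-Liouville representation into $\mathcal{G}_Z \oplus \mathcal{G}_\perp$ via Lemma~\ref{lemma:ZinvariantCRU}, observe that $\mathsf{G}_Z$ acts trivially on the $\{\mathbb{I},Z\}^{\otimes n}$ subspace so the twirl there reduces to an average over $\mathsf{G}/\mathsf{G}_Z$, apply the multiplicity-free argument of Lemma~\ref{lemma:cardconstraint} to the resulting $2^n$-dimensional representation $\{\sf\Pi_{iZ}\}$, and invoke Burnside's theorem. The paper derives this bound verbatim as an intermediate step inside the proof of Lemma~\ref{lemma:cosetcardconstraint}, so your observation that the corollary needs no additional work is exactly right.
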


\subsection{Proof of Theorem~\ref{thm:cnzm}}\label{appendssc:thmcnzm}
Below, we provide the proof of Theorem~\ref{thm:cnzm}. The key point is utilizing that $\mathsf{G}$ contains specific permutation gates and $\mathsf{G}$ is normalized by target gate $U$. We first present a lemma, telling us what is the smallest group containing a given permutation group and normalized by a given computational basis diagonal gate $U$.

\begin{lemma}\label{lemma:smallestgroup}
Given a permutation group, ${\sf \Pi}$, and a diagonal gate, $U$, in the computational basis, the smallest group containing ${\sf \Pi}$ and normalized by $U$ is given by
\begin{equation}
\mathsf{G} = {\sf \Pi} \ltimes \mathsf{W},
\end{equation}
where $\ltimes$ denotes semi-product and $\mathsf{W} = \langle\{\Pi^{\dagger} U \Pi U^{\dagger}, \Pi \in {\sf \Pi}\} \rangle$.
\end{lemma}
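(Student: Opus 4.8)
The plan is to verify directly the three things that characterize ``the smallest group containing $\mathsf{\Pi}$ and normalized by $U$'': that $\mathsf{G}=\mathsf{\Pi}\ltimes\mathsf{W}$ is a genuine (internal semidirect-product) group, that it contains $\mathsf{\Pi}$ and satisfies $U\mathsf{G}U^{\dagger}=\mathsf{G}$, and that it sits inside every group with those two properties; since the intersection of all groups normalized by $U$ and containing $\mathsf{\Pi}$ is again such a group, this identifies $\mathsf{G}$ with the minimum. The observation driving everything is that $U$ is diagonal and each $\Pi\in\mathsf{\Pi}$ is a permutation matrix, so every generator $w_\Pi:=\Pi^{\dagger}U\Pi U^{\dagger}$ of $\mathsf{W}$ is a product of the two diagonal matrices $\Pi^{\dagger}U\Pi$ and $U^{\dagger}$, hence diagonal; thus $\mathsf{W}$ is an abelian group of diagonal unitaries, and in particular every $w\in\mathsf{W}$ commutes with $U$ and with $U^{\dagger}$.

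First I would establish the semidirect-product structure. Since a matrix that is simultaneously a permutation matrix and a diagonal matrix must equal $\mathbb{I}$, we get $\mathsf{\Pi}\cap\mathsf{W}=\{\mathbb{I}\}$. To see $\mathsf{\Pi}$ normalizes $\mathsf{W}$ it is enough to conjugate the generators: for $\Pi,\Pi'\in\mathsf{\Pi}$ one inserts $U^{\dagger}U=\mathbb{I}$ and checks the identity
\begin{equation}
\Pi' w_\Pi (\Pi')^{\dagger}=\bigl((\Pi(\Pi')^{\dagger})^{\dagger}U(\Pi(\Pi')^{\dagger})U^{\dagger}\bigr)\bigl(U\Pi'U^{\dagger}(\Pi')^{\dagger}\bigr)=w_{\Pi(\Pi')^{\dagger}}\,w_{(\Pi')^{\dagger}}^{\dagger},
\end{equation}
where both factors lie in $\mathsf{W}$ precisely because $\mathsf{\Pi}$ is a group and hence closed under inversion (so $\Pi(\Pi')^{\dagger}$ and $(\Pi')^{\dagger}$ are again in $\mathsf{\Pi}$). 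Therefore $\Pi'\mathsf{W}(\Pi')^{\dagger}=\mathsf{W}$ for all $\Pi'\in\mathsf{\Pi}$, so $\mathsf{\Pi}\mathsf{W}$ is a subgroup in which $\mathsf{W}$ is normal, and by the trivial intersection it is the internal semidirect product $\mathsf{\Pi}\ltimes\mathsf{W}$, which I take as the definition of $\mathsf{G}$.

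Next I would confirm the defining properties of $\mathsf{G}$. Clearly $\mathsf{\Pi}\subseteq\mathsf{G}$. For normalization by $U$, write a general element as $\Pi w$ with $\Pi\in\mathsf{\Pi}$, $w\in\mathsf{W}$; then $U\Pi wU^{\dagger}=(U\Pi U^{\dagger})(UwU^{\dagger})=\Pi(\Pi^{\dagger}U\Pi U^{\dagger})w=\Pi\,w_\Pi w\in\mathsf{\Pi}\mathsf{W}=\mathsf{G}$, using $UwU^{\dagger}=w$, and the same computation with $U^{\dagger}$ in place of $U$ gives $U^{\dagger}\Pi wU=\Pi\,w_\Pi^{\dagger}w\in\mathsf{G}$; hence $U\mathsf{G}U^{\dagger}=\mathsf{G}$. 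For minimality, let $\mathsf{G}'$ be any group with $\mathsf{\Pi}\subseteq\mathsf{G}'$ and $U\mathsf{G}'U^{\dagger}=\mathsf{G}'$. For each $\Pi\in\mathsf{\Pi}$ both $\Pi^{\dagger}$ and $U\Pi U^{\dagger}$ belong to $\mathsf{G}'$, hence so does $w_\Pi=\Pi^{\dagger}(U\Pi U^{\dagger})$; thus $\mathsf{W}=\langle\{w_\Pi:\Pi\in\mathsf{\Pi}\}\rangle\subseteq\mathsf{G}'$, and therefore $\mathsf{G}=\mathsf{\Pi}\mathsf{W}\subseteq\mathsf{G}'$. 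This exhibits $\mathsf{G}$ as the required smallest group.

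The one genuinely fiddly step is the normalization-of-$\mathsf{W}$ computation: one must insert $U^{\dagger}U$ in the correct place and repeatedly use closure of $\mathsf{\Pi}$ under inversion so that the intermediate diagonal matrices are recognizable as generators of $\mathsf{W}$ (or their adjoints). Everything else reduces to bookkeeping once the ``$\mathsf{W}$ is diagonal, hence commutes with $U$'' remark is in place.
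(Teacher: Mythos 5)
Your proof is correct and follows essentially the same route as the paper's: you verify that $\mathsf{\Pi}$ normalizes $\mathsf{W}$ by conjugating generators (your identity $\Pi' w_\Pi (\Pi')^{\dagger}=w_{\Pi(\Pi')^{\dagger}}\,w_{(\Pi')^{\dagger}}^{\dagger}$ is the same manipulation as the paper's $\Pi'^{\dagger}w_\Pi\Pi'=w_{\Pi\Pi'}w_{\Pi'}^{\dagger}$, with different bookkeeping), you show $U\mathsf{G}U^{\dagger}=\mathsf{G}$ by the same computation using that $\mathsf{W}$ consists of diagonal matrices commuting with $U$, and you establish minimality identically by noting any candidate group must contain $\Pi^{\dagger}$ and $U\Pi U^{\dagger}$, hence $w_\Pi$. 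The only addition you make is explicitly checking $\mathsf{\Pi}\cap\mathsf{W}=\{\mathbb{I}\}$ (permutation-and-diagonal implies identity), which makes the internal-semidirect-product claim airtight and which the paper leaves implicit.
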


\begin{proof}
We first clarify the meaning of semi-product. It means that
\begin{enumerate}
\item  Any element in ${\sf \Pi} \ltimes \mathsf{W}$ has an expression of $\Pi W$ where $\Pi\in {\sf \Pi}$ and $W\in \mathsf{W}$.

\item $\mathsf{W}$ is a normal subgroup of ${\sf \Pi} \ltimes \mathsf{W}$.
\end{enumerate}
It is easy to verify the second condition that $\mathsf{W}$ is a normal subgroup with the first condition. We only need to verify that for any $\Pi$ and $\Pi'$, $\Pi^{\prime \dagger}\Pi^{\dagger} U \Pi U^{\dagger} \Pi' \in \mathsf{W}$. This can be seen via the following equation.
\begin{equation}
\Pi^{\prime \dagger}\Pi^{\dagger} U \Pi U^{\dagger} \Pi' = (\Pi\Pi')^{\dagger} U (\Pi \Pi') U^{\dagger} (\Pi^{\prime \dagger} U \Pi' U^{\dagger})^{\dagger}.
\end{equation}
As $(\Pi\Pi')^{\dagger} U (\Pi \Pi') U^{\dagger}$ and $\Pi^{\prime \dagger} U \Pi' U^{\dagger}$ are both elements in $\mathsf{W}$, we successfully show the soundness of the second condition.

Now we turn to the proof of the lemma. Obviously, ${\sf \Pi} \ltimes \mathsf{W}$ contains ${\sf \Pi}$. It is also normalized by $U$:

Given an element $\Pi W$ where $\Pi\in {\sf \Pi}$ and $W\in \mathsf{W}$,
\begin{equation}
\begin{split}
U\Pi W U^{\dagger} &= U\Pi U^{\dagger} W\\
&= \Pi (\Pi^{\dagger}U\Pi U^{\dagger}) W.
\end{split}
\end{equation}
The first line comes from the fact that $W$ and $U$ are both diagonal gates. As $\Pi^{\dagger}U\Pi U^{\dagger}$ and $W$ both belong to $\mathsf{W}$, $U\Pi W U^{\dagger}$ has an expression of $\Pi W'$ where $W'\in \mathsf{W}$. Combining the condition 1, we obtain that ${\sf \Pi} \ltimes \mathsf{W}$ is normalized by $U$.

Meanwhile, for any group $\mathsf{G}$ containing ${\sf \Pi}$ and normalized by $U$, $U \Pi U^{\dagger}\in \mathsf{G}$ and $\Pi^{\dagger}U \Pi U^{\dagger}\in \mathsf{G}$. Then we know ${\sf \Pi}$ and $\mathsf{W}$ both belong to $\mathsf{G}$. As a consequence, $\mathsf{G}$ must contain ${\sf \Pi} \ltimes \mathsf{W}$. Combining with the arguments before, we prove that ${\sf \Pi} \ltimes \mathsf{W}$ is the smallest group containing ${\sf \Pi}$ and normalized by $U$.
\end{proof}

With Lemma~\ref{lemma:smallestgroup}, we can easily prove Theorem~\ref{thm:cnzm} in the main text. For convenience, we rewrite the theorem below.

\begin{reptheorem}{thm:cnzm}
The optimal twirling group $\mathsf{G}$ in CRU for the multi-qubit controlled phase gate, $U = C^n Z_m$ with $n\geq 1, m\geq 2$ is the smallest group containing $\mathsf{X}$ and normalized by $U$, given by
\begin{equation}
\mathsf{G} = \{\Pi  (\prod_{i=1}^t (\Pi_i^{\dagger} U\Pi_i U^{\dagger})^{l_i} ) | \Pi\in\mathsf{X}, t\in \mathbb{Z}_+, \forall i, l_i\in {\pm 1}, \Pi_i \in \mathsf{X} \}.
\end{equation}
\end{reptheorem}

\begin{proof}
From Lemma~\ref{lemma:cosetcardconstraint}, we can deduce that $\mathsf{G}$ at least contains permutation group $\langle \mathsf{S}\rangle$ where $\mathsf{S} = \{\Pi^X_i\Pi^C_i, \forall \Pi^X_i\in \mathsf{X}, \exists \Pi^C_i \in \mathsf{C}^{[n-1]}\mathsf{X}\}$. Using Lemma~\ref{lemma:smallestgroup}, we obtain that any twirling group $\mathsf{G}$ as a solution to Question~\ref{ques:RB} must satisfy
\begin{equation}
\langle \mathsf{S}\rangle \ltimes \langle \{\Pi^{\dagger} U \Pi U^{\dagger}, \Pi \in \langle \mathsf{S}\rangle\} \rangle \leq \mathsf{G}.
\end{equation}
If $U=C^{n-1}Z_{m} =  \begin{pmatrix}
\mathbb{I}_{2^{n}-1} & \mathbf{0}\\
\mathbf{0} & e^{i\frac{2\pi}{m}}
\end{pmatrix}$, we can show that $\langle \{\Pi^{\dagger} U \Pi U^{\dagger}, \Pi \in \langle \mathsf{X}\rangle\}\rangle\leq \langle \{\Pi^{\dagger} U \Pi U^{\dagger}, \Pi \in \langle \mathsf{S}\rangle\}\rangle$. For brevity, we denote $\mathsf{W} = \langle \{\Pi'^{\dagger} U \Pi' U^{\dagger}, \Pi' \in \langle \mathsf{S}\rangle\} \rangle$ and $\mathsf{W}_X = \langle \{\Pi^{\dagger} U \Pi U^{\dagger}, \Pi \in \langle \mathsf{X}\rangle\}\rangle$. For any generator $\Pi^{\dagger} U \Pi U^{\dagger}$ in $\mathsf{W}_X$ where $\Pi \in \mathsf{X}$, $\Pi^{\dagger} U \Pi$ is a diagonal matrix while only one diagonal element is not 1 but equals $e^{i\frac{2\pi}{m}}$. As the permutation elements in $\mathsf{S}$ can interchange any two computational bases, there must exist an element $\Pi' \in  \mathsf{S} $ such that
\begin{equation}
\Pi^{\prime \dagger} U \Pi' = \Pi^{\dagger} U \Pi.
\end{equation}
Then,
\begin{equation}
\Pi^{\prime \dagger} U \Pi' U^{\dagger} = \Pi^{\dagger} U \Pi U^{\dagger}.
\end{equation}
Thus, any generator of $\mathsf{W}_X$ belongs to $\mathsf{W}$. Then, we obtain that $\mathsf{W}_X \leq \mathsf{W}$ for $C^{n-1}Z_m$. 
This result also applies to the diagonal matrix in which only one diagonal element differs from the others.

Note that the smallest group containing $\mathsf{X}$ and normalized by $U$ is $\mathsf{X}\ltimes \mathsf{W}_X$. As $\mathsf{X}$ is no larger and no global than $\langle \mathsf{S} \rangle$ and $\mathsf{W}_X$ is a subset of $\mathsf{W}$, $\mathsf{X}\ltimes \mathsf{W}_X$ is obviously smaller and more easily implementable than $\langle \mathsf{S}\rangle \ltimes \langle \{\Pi^{\dagger} U \Pi U^{\dagger}, \Pi \in \langle \mathsf{S}\rangle\} \rangle$, which implies that the twirling group cannot be better than $\mathsf{X}\ltimes \mathsf{W}_X$. Below, we would show that for $U=C^nZ_m$ with $n\geq 1, m\geq 2$, $\mathsf{X}\ltimes \mathsf{W}_X$ suffices to tailor $C^nZ_m$. Combining the two sides, we would obtain that the optimal twirling group for $C^nZ_m$ is $\mathsf{X}\ltimes \mathsf{W}_X$, just shown in Theorem~\ref{thm:cnzm}.

In the following, we show $\mathsf{X}\ltimes \mathsf{W}_X$ satisfies Question~\ref{ques:RB} for multi-qubit controlled phase gate $U=C^nZ_{m}$. From the definition of $\mathsf{X}\ltimes \mathsf{W}_X$, we know that $\mathsf{X}\ltimes \mathsf{W}_X$ is normalized by $U$. Thus, we only need to verify that $\mathsf{X}\ltimes \mathsf{W}_X$ can make arbitrary channels diagonal via twirling. With corollaries~\ref{coro:pauli} and~\ref{coro:dihedral}, it is sufficient to show that Pauli $Z$ gate or $Z_m$ gate on each qubit belongs to $\mathsf{W}_X$. Below we show that this is right for $U=C^nZ_{m}$ with $n\geq 1, m\geq 2$.

Recall that $\mathsf{W}_X = \langle \{\Pi^{\dagger} U \Pi U^{\dagger}, \Pi \in \langle \mathsf{X}\rangle\}\rangle$, in the following we analyze the generators of $\mathsf{W}_X$, or $\Pi^{\dagger} U \Pi U^{\dagger}$, in detail. For certain $\Pi\in\langle X\rangle$, let us define its pattern to be a $0$-$1$ bit string, $s_j^{\Pi}$, such that $s_j^{\Pi}=1$ if the $j$-th qubit of $\Pi$ is $I$ and $s_j^{\Pi}=0$ if the $j$-th qubit of $\Pi$ is $X$. Then, the matrix representation of $\Pi^{\dagger}U\Pi U^{\dagger}$ for $U=C^{n}Z_{m}$ is
\begin{equation}
\Pi^{\dagger}C^{n}Z_{m}\Pi C^{n}Z_{m}^{\dagger}=\begin{pmatrix}
1\\&\ddots\\&&e^{i\frac{2\pi}{m}}\\&&&\ddots\\&&&&e^{-i\frac{2\pi}{m}}
\end{pmatrix},
\end{equation}
where the $(s^{\Pi},s^{\Pi})$ entry is $e^{i\frac{2\pi}{m}}$, the $(2^{n+1}-1,2^{n+1}-1)$ entry is $e^{-i\frac{2\pi}{m}}$, and other diagonal entries are $1$. As generators are all diagonal, all elements in $\mathsf{W}_X$ are diagonal in computational basis. Moreover, the diagonal elements would be power of $e^{i\frac{2\pi}{m}}$.  Define an injective map $\phi:\mathsf{W}_X\rightarrow\mathbb{Z}_{m}^{2^{n+1}}$ where $\mathbb{Z}_m = \{0,1,\cdots, m-1\}$,
\begin{equation}\label{eq:Wxmap}
\phi(w)= -i \frac{m}{2\pi} \begin{pmatrix}
\ln [w]_{0,0}\\[10pt]
\vdots\\[10pt]
\ln [w]_{2^{n+1}-1,2^{n+1}-1}
\end{pmatrix}.
\end{equation}

Note that the entries of $\phi(w)$ will always belong to $\{0,1,\cdots, m-1\}$. With map $\phi$, we express the gate in $\mathsf{W}_X$ with a vector. For example, $\phi(\Pi^{\dagger}C^{n}Z_m\Pi C^{n}Z_{m}^{\dagger})=e_{s^{\Pi}}-e_{2^{n+1}-1}$, where we define $e_i$ to be the basis vector whose $i$-th entry is $1$ and other entries are $0$. In this way, the group multiplication is turned into integer vector addition, and the group generation is equivalent to the linear combination of vectors. As $\Pi$ can take any element in $\mathsf{X}$, vectors $v_0=e_0-e_{2^{n+1}-1},\cdots,v_{2^{n+1}-2}=e_{2^{n+1}-2}-e_{2^{n+1}-1}$, are all basis vectors. Note that the overall phase of a quantum gate is not important, so any vectors differing by multiples of $v_{2^{n+1}-1}=\begin{pmatrix}1&\cdots&1\end{pmatrix}^T$ are equivalent. In the next, we show how to construct vectors associated with $Z$ or $Z_m$ gates with vectors $v_0,\cdots,v_{2^{n+1}-1}$.

As different qubits are symmetric under arbitrary permutation, we only need to construct $Z$ gate or $Z_m$ gate on first qubit. If $m$ is odd, we show $Z_m\in \mathsf{W}_X$. Gate $Z_m$ on first qubit corresponds to vector $\phi((Z_{m})_1) = \begin{pmatrix} 0 & \cdots & 0 & 1 & \cdots & 1 \end{pmatrix}^T$. Define
\begin{equation}
u=-(v_0+v_1+\cdots+v_{2^{n+1}-2})+(2^{n+1}-1)v_{2^{n+1}-1}=\begin{pmatrix}
0&\cdots&2^{n+1}
\end{pmatrix}^T.
\end{equation}
When $m$ is odd, $\gcd(2^{n+1},m)=1$, vector $u$ is equivalent to $\begin{pmatrix}0&\cdots&1\end{pmatrix}^T = e_{2^{n+1}-1}$. Then, we can also obtain $e_i = v_i+u$ for $0\leq i\leq 2^{n+1}-2$. With $e_i$ for $0\leq i\leq 2^{n+1}-1$, $\phi((Z_{m})_1)$ can certainly be constructed by
\begin{equation}
\phi((Z_{m})_1) = e_{2^{n+1}-2^n} + e_{2^{n+1}-2^n+1} + \cdots + e_{2^{n+1}-1}.
\end{equation}
Thus, if $m$ is odd, we can ensure that phase gates $Z_m$ on all qubits belong to $\mathsf{W}_X$. In this case, $\mathsf{G}=\mathsf{X}\ltimes \mathsf{W}_X$ definitely satisfies the requirements in Question~\ref{ques:RB}.

When $m$ is even, we show $Z\in \mathsf{W}_X$. We first express $m=q2^k$ where $q\geq 1$ is odd and $k$ is a positive integer. Gate $Z$ on first qubit corresponds to vector $\phi(Z_1) = \begin{pmatrix} 0 & \cdots & 0 & q2^{k-1} & \cdots & q2^{k-1} \end{pmatrix}^T$. As $\gcd(2^{n+1},m)=2^{\min(n+1, k)}$, $u$ can be expressed as $\begin{pmatrix}
0&\cdots&2^{\min(n+1, k)}
\end{pmatrix}^T$. Then, $\phi(Z_1)$ can be constructed by
\begin{equation}
\begin{split}
\phi(Z_1) &= q2^{k-1}(e_{2^{n+1}-2^n} + e_{2^{n+1}-2^n+1} + \cdots + e_{2^{n+1}-1})\\
&=q2^{k-1}(v_{2^{n+1}-2^n} + v_{2^{n+1}-2^n+1} + \cdots + v_{2^{n+1}-2}) + q2^{k-1}(2^n-1)e_{2^{n+1}-1} + q2^{k-1}e_{2^{n+1}-1}\\
&=q2^{k-1}(v_{2^{n+1}-2^n} + v_{2^{n+1}-2^n+1} + \cdots + v_{2^{n+1}-2}) + q2^{n+k-1}e_{2^{n+1}-1}\\
&= q2^{k-1}(v_{2^{n+1}-2^n} + v_{2^{n+1}-2^n+1} + \cdots + v_{2^{n+1}-2}) + q2^{n+k-1-\min(n+1, k)}u.
\end{split}
\end{equation}
As $n\geq 1$, $n+k-1-\min(n+1, k) = \max(n-1,k-2)$ is always a non-negative integer. Thus, $q2^{n+k-1-\min(n+1, k)}$ is an integer. As a consequence, $\phi(Z_1)$ can be constructed with linear combination of $v_0,\cdots,v_{2^{n+1}-2}$, and $u$, which is equivalent to $Z_1\in \mathsf{W}_X$. Thus, if $m$ is even, we can ensure that Pauli $Z$ gates on all qubits belong to $\mathsf{W}_X$. In this case, $\mathsf{G}=\mathsf{X}\ltimes \mathsf{W}_X$ also satisfies the requirements in Question~\ref{ques:RB}. Proof is done.
\end{proof}

\subsection{Twirling groups for multi-qubit controlled phase gates}\label{appendssc:uconjchannel}
Below we prove the following theorem, showing the twirling groups constructed for multi-qubit controlled phase gates in the previous subsection not only satisfies $U\mathsf{G}U^{\dagger} = \mathsf{G}$, but also satisfies $\mathcal{U}\Lambda_G\mathcal{U}^{\dagger}=\Lambda_G$.

\begin{theorem}\label{thm:ucommute}
For $U=C^nZ_{m}$ with $n\geq 1, m\geq 2$ and $nm\neq 2$, the twirling group $\mathsf{G} = \mathsf{X}\ltimes \mathsf{W}_X$ satisfies $\mathcal{U}\Lambda_G\mathcal{U}^{\dagger}=\Lambda_G$.
\end{theorem}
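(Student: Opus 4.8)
The plan is to turn the identity $\mathcal{U}\Lambda_G\mathcal{U}^{\dagger}=\Lambda_G$ into a purely combinatorial statement about which Pauli operators the diagonal gate $U$ — and the diagonal generators of $\mathsf{W}_X$ — couple in the Pauli--Liouville picture, and then to check that the coupling pattern of $\mathcal{U}$ is no finer than the one $\mathsf{W}_X$-invariance already imposes on $\Lambda_G$. First, from the proof of Theorem~\ref{thm:cnzm} the subgroup $\mathsf{W}_X$ contains $Z$ on every qubit when $m$ is even and $Z_m$ on every qubit when $m$ is odd, so $\mathsf{G}=\mathsf{X}\ltimes\mathsf{W}_X$ contains $\mathsf{P}_{n+1}$ or $\mathsf{D}_{n+1}^m$; by Corollaries~\ref{coro:pauli} and~\ref{coro:dihedral}, $\Lambda_G$ is a Pauli channel, i.e. literally diagonal in the Pauli--Liouville basis, $\Lambda_G=\sum_P\lambda_P\lketbra{\sigma_P}{\sigma_P}$. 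Hence $\mathcal{U}\Lambda_G\mathcal{U}^{\dagger}=\Lambda_G$ is equivalent to $[\mathcal{U},\Lambda_G]=0$, which holds iff $\lambda_P=\lambda_{P'}$ whenever $\lbra{\sigma_P}\mathcal{U}\lket{\sigma_{P'}}\neq 0$. Moreover, since every $W\in\mathsf{W}_X\subseteq\mathsf{G}$ satisfies $\mathcal{W}\Lambda_G\mathcal{W}^{\dagger}=\Lambda_G$, the same argument gives $\lambda_P=\lambda_{P'}$ whenever $\lbra{\sigma_P}\mathcal{W}\lket{\sigma_{P'}}\neq 0$; because $W\mapsto\mathcal{W}$ is a homomorphism, $\lambda$ is constant on the connected components of the graph obtained, under transitive closure, from the generators $g_a:=X^a U X^a U^{\dagger}$.

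Both $\mathcal{U}$ and every $\mathcal{W}$ represent diagonal unitaries, hence are block diagonal with respect to the ``X-support'' $s$ of a Pauli (the set of qubits carrying $X$ or $Y$) and act as the identity on the $\{\mathbb{I},Z\}^{\otimes(n+1)}$ block; in particular they commute trivially with $\Lambda_G$ there. Fix $s\neq 0$ and write the $2^{n+1}$ Paulis with X-support $s$ as $P_sZ^b$, $b\in\{0,1\}^{n+1}$, with $P_s=\prod_{j\in s}X_j$. Using $U=\mathbb{I}+(\omega-1)\ketbra{1^{n+1}}{1^{n+1}}$ with $\omega=e^{2\pi i/m}$, and $g_a=\mathbb{I}+(\omega-1)\ketbra{1^{n+1}\oplus a}{1^{n+1}\oplus a}+(\bar\omega-1)\ketbra{1^{n+1}}{1^{n+1}}$, a direct computation of the relevant matrix elements (tracking the four modified diagonal phases and the sign $(-1)^{b\cdot(x\oplus s)}$ of $P_sZ^b$) shows that for $b\neq b'$, writing $\Delta:=b\oplus b'$: $\lbra{\sigma_{P_sZ^b}}\mathcal{U}\lket{\sigma_{P_sZ^{b'}}}\neq 0$ iff $\Delta\cdot s=0$ or ($\Delta\cdot s=1$ and $m\geq 3$); and $\lbra{\sigma_{P_sZ^b}}\mathcal{G}_a\lket{\sigma_{P_sZ^{b'}}}\neq 0$ iff $\Delta\cdot a=\Delta\cdot s$ and ($\Delta\cdot s=0$ or $m\geq 3$).

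Composing two $g_a$-edges is again realizable, so the set of $Z$-shifts $\Delta$ reachable within the X-support-$s$ block is a subgroup of $\mathbb{Z}_2^{n+1}$; a short check of the conditions above (using that a hyperplane in $\mathbb{Z}_2^{n+1}$ with $n+1\geq 2$ contains points other than $0$ and $s$, with the count improving for $n+1\geq 3$) gives that this subgroup is all of $\mathbb{Z}_2^{n+1}$ when $m\geq 3$, and equals $\{\Delta:\Delta\cdot s=0\}$ when $m=2$, provided $n+1\geq 3$. Thus for $m\geq 3$ the value $\lambda_P$ depends only on the X-support of $P$, and for $m=2$ with $n\geq 2$ only on the pair (X-support, the bit $b\cdot s$). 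In either case this partition is no finer than the one read off from $\mathcal{U}$ (trivially for $m\geq 3$; for $m=2$ because $\lbra{\sigma_P}\mathcal{U}\lket{\sigma_{P'}}\neq 0$ forces $\Delta\cdot s=0$, i.e. $b\cdot s=b'\cdot s$), so $\lambda_P=\lambda_{P'}$ whenever $\lbra{\sigma_P}\mathcal{U}\lket{\sigma_{P'}}\neq 0$; hence $[\mathcal{U},\Lambda_G]=0$ and the theorem follows. The one case left out is $m=2,\ n=1$, i.e. $U=CZ$: there $\mathsf{W}_X=\langle Z_1,Z_2\rangle$, $\mathsf{G}$ is the two-qubit Pauli group, $\lambda_{XX}$ and $\lambda_{YY}$ may differ, while $\mathcal{U}_{CZ}$ sends $XX$ to $-YY$; this is precisely what the hypothesis $nm\neq 2$ excludes.

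The main obstacle is the phase-counting of the second step that pins down exactly when a diagonal gate couples two Paulis in the Pauli--Liouville representation — in particular the emergence of the parity $\Delta\cdot s$ and the special degeneracy at $m=2$ — together with the combinatorial bookkeeping of the third step showing that the connected-component structures of the $g_a$-graph and of $\mathcal{U}$ match for every X-support $s$ and every $(n,m)$ with $nm\neq 2$, including the small case $n+1=2$ where the hyperplane counts are tight.
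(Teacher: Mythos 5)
Your argument is a genuinely different, and in some ways more uniform, route to the result than the paper's. The paper argues case-by-case: for odd $m\geq 3$ it shows $U\in\mathsf{W}_X\leq\mathsf{G}$ so the commutation is automatic; for even $m$ it invokes a sufficient condition (Lemma~\ref{lemma:czdihedral}, ``$\langle X,CZ,S\rangle\leq\mathsf{G}$ implies commutation'') and tracks, via the map $\phi$, whether $S$ and $CZ$ belong to $\mathsf{W}_X$, with separate treatment of the residual classes $k=1$ and $(n,k)=(1,2)$; and for $m=2$ it redoes the block analysis directly, exhibiting the finer partition $X_{\mathbf a}\mathsf{Z}_e/X_{\mathbf a}\mathsf{Z}_o$ of $C^nZ$'s Liouville blocks. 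You instead reduce everything to the same combinatorial statement: since $\Lambda_G$ is a Pauli channel (because $\mathsf{P}_{n+1}$ or $\mathsf{D}_{n+1}^m$ sits inside $\mathsf{G}$), $\mathcal{U}\Lambda_G\mathcal{U}^\dagger=\Lambda_G$ is exactly ``$\lambda_P=\lambda_{P'}$ whenever $\mathcal{U}_{PP'}\neq 0$,'' and the same for each $\mathcal{W}$ with $W\in\mathsf{W}_X$; then you compute the couplings $\mathcal{U}_{PP'}$ and $(\mathcal{G}_a)_{PP'}$ within each X-support block and compare the resulting partitions. This is cleaner and closer in spirit to what the paper does only in its $m=2$ subcase; it makes the exceptional status of $CZ$ ($nm=2$) and the role of the parity $\Delta\cdot s$ completely transparent.

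One caveat worth flagging: your stated coupling formula for $g_a=X^aUX^aU^\dagger$ is correct only for $a\neq s$. When $a=s$ the cross terms $E_sQ'E_0$ and $E_0Q'E_s$ survive, and the bracket becomes $2\mathrm{Re}(\omega^2)-2$ when $\Delta\cdot s=0$ and $(\omega-\bar\omega)(\omega+\bar\omega)$ when $\Delta\cdot s=1$; in particular it vanishes at $a=s$ for $m=2$ (all $\Delta$) and for $m=4$ with $\Delta\cdot s=1$, contrary to your stated rule. This matters precisely in the tight $n+1=2$ cases that you wave at with ``the count improving for $n+1\geq 3$'': when $N=2$ and $\Delta\cdot s=0$, the hyperplane $\{a:\Delta\cdot a=0\}$ contains only $\{0,s\}$, so you are forced to take $a=s$, and you need the (true) value $2\mathrm{Re}(\omega^2)-2\neq 0$ for $m\geq 3$ rather than your formula. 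Conversely, for $\Delta\cdot s=1$ the affine hyperplane $\{a:\Delta\cdot a=1\}$ always contains a nonzero $a\neq s$, so the $m=4$, $a=s$ zero is harmless. Once this is spelled out, your connectivity argument closes correctly for all $nm\neq 2$, so the gap is one of rigor/bookkeeping in the $a=s$ subcase rather than a wrong idea; but as written, the proof of the $N=2$ case rests on an incorrect intermediate claim.
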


For such quantum gates, our protocol can provide their fidelity estimation accurately instead of providing lower bounds. To prove Theorem~\ref{thm:ucommute}, we present the following lemma.

\begin{lemma}\label{lemma:czdihedral}
As long as $\langle X, CZ, S\rangle\leq \mathsf{G}$ and $U$ is diagonal in the computational basis, the equation $\mathcal{U}\Lambda_G\mathcal{U}^{\dagger}=\Lambda_G$ holds.
\end{lemma}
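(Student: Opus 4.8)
The plan is to combine two structural facts about the twirled channel $\Lambda_G$ with an explicit description of how a diagonal gate acts on Pauli operators. First, since the hypothesis $\langle X, CZ, S\rangle \leq \mathsf{G}$ already forces $\mathsf{G}$ to contain the full Pauli group $\mathsf{P}_n$ (because each single-qubit factor $\langle X_j, S_j\rangle$ contains $X_j$, $Z_j = S_j^{2}$, and $Y_j = S_j X_j S_j^{\dagger}$), Corollary~\ref{coro:pauli} tells us $\Lambda_G$ is a genuine Pauli channel: it is literally diagonal in the Pauli-Liouville basis, $\Lambda_G = \sum_{P\in\mathsf{P}_n}\lambda_P\lketbra{\sigma_P}{\sigma_P}$, with no auxiliary basis change needed. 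Second, since $\mathsf{G}$ is a group, $\Lambda_G$ is invariant under conjugation by $\mathcal{G}$ for every $G\in\mathsf{G}$; applying this to $G$ in $\langle X, CZ, S\rangle$ (understood as these gates on all qubits and all pairs) and using that Clifford conjugation maps $\sigma_P$ to $\sigma_{GPG^{\dagger}}$ up to a sign that the projector ignores, we obtain $\lambda_P = \lambda_{GPG^{\dagger}}$. Thus the Pauli fidelities $\lambda_P$ are constant on the orbits of $\mathsf{P}_n$ under conjugation by $\langle X, CZ, S\rangle$.

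Next I would reduce the claim $\mathcal{U}\Lambda_G\mathcal{U}^{\dagger} = \Lambda_G$ to a statement about these orbits. Writing the Liouville matrix element $\mathcal{U}_{PQ} = \lbra{\sigma_P}\mathcal{U}\lket{\sigma_Q} = \tfrac{1}{2^n}\tr(PUQU^{\dagger})$ and using that $\Lambda_G$ is diagonal, the $(P,Q)$ entry of $\mathcal{U}\Lambda_G - \Lambda_G\mathcal{U}$ is $\mathcal{U}_{PQ}(\lambda_Q - \lambda_P)$, so it suffices to show that $\tr(PUQU^{\dagger})\neq 0$ implies $\lambda_P = \lambda_Q$. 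Parametrize $P = X^{c}Z^{d}$ and $Q = X^{a}Z^{b}$ up to phase. Since $U = \sum_x e^{i\theta_x}\ketbra{x}{x}$ is diagonal, $U X^{a} U^{\dagger} = \sum_x e^{i(\theta_{x\oplus a}-\theta_x)}\ketbra{x\oplus a}{x}$ and $U Z^{b} U^{\dagger} = Z^{b}$, and a short trace computation shows $\tr(PUQU^{\dagger}) = 0$ unless $c = a$, i.e. unless $P$ and $Q$ have the same $X$-part. If that common $X$-part is $\mathbf{0}$, both $P$ and $Q$ are diagonal and $\tr(PUQU^{\dagger}) = \tr(Z^{d}Z^{b}) = 2^n\delta_{bd}$, which forces $P = Q$ and hence $\lambda_P = \lambda_Q$ trivially.

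The remaining case $P = X^{a}Z^{d}$, $Q = X^{a}Z^{b}$ with $a\neq\mathbf{0}$ carries the real content, and proving it is the step I expect to be the main obstacle: I must show that any two Paulis sharing the same nonzero $X$-support lie in a single $\langle X, CZ, S\rangle$-conjugation orbit. I would do this with explicit symplectic moves that fix the $X$-part and act transitively on the $Z$-part. Conjugating by $S_j$ for $j$ in the support of $a$ sends $X_j\mapsto Y_j$, which toggles the $j$-th bit of the $Z$-part. Fixing any $j_0$ in the support of $a$ and conjugating by $CZ_{j_0 j}$ for $j$ outside the support of $a$ multiplies the $X_{j_0}$ factor by $Z_j$, toggling the $j$-th bit of the $Z$-part while leaving the $X$-part and all other $Z$-components untouched (since $a_j = 0$). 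Composing these, every bit of the $Z$-part can be toggled, so $(a,d)$ and $(a,b)$ are conjugate and $\lambda_P = \lambda_Q$. Collecting the three cases yields $\mathcal{U}_{PQ}(\lambda_Q - \lambda_P) = 0$ for all $P, Q$, hence $\mathcal{U}\Lambda_G = \Lambda_G\mathcal{U}$, equivalently $\mathcal{U}\Lambda_G\mathcal{U}^{\dagger} = \Lambda_G$.
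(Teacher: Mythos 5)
Your proposal is correct and follows essentially the same route as the paper's proof: both reduce the commutation to the observation that for a diagonal $U$, $\mathcal{U}_{PQ}$ vanishes unless $P$ and $Q$ share the same $X$-part, and both then show that the Pauli fidelities of $\Lambda_G$ are constant on each class $X_{\mathbf{a}}\mathsf{Z}$ by exhibiting $\langle CZ, S\rangle$-conjugations that realize every $Z$-part. The only cosmetic difference is that the paper builds a single conjugator $G = WV$ with $W = \prod_{j\neq i}CZ_{ij}^{b_j}$ and $V = S_i^{\sum_j a_j b_j}$ hitting the target $Z_{\mathbf{b}}$ in one shot, whereas you toggle each $Z$-bit independently using $S_j$ on $X$-support qubits and $CZ_{j_0 j}$ off the support; these are the same idea presented in different granularity, and both are valid.
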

\begin{proof}
In Pauli-Liouville representation, $\mathcal{U}$ is block-diagonal. Below we investigate the diagonal blocks of $\mathcal{U}$. By definition,
\begin{equation}\label{eq:PauliOfU}
\begin{split}
\mathcal{U}_{ij}=&\tr(\sigma_iU(\sigma_j))\\
=&d\tr(\sigma_iU\sigma_jU^{\dagger}\sigma_j^{\dagger}\sigma_j)\\
=&d\tr(\sigma_j\sigma_iU\sigma_jU^{\dagger}\sigma_j^{\dagger}).
\end{split}
\end{equation}
As $U$ is diagonal in the computational basis or $Z$-basis and $\sigma_j$ is proportional to a Pauli operator, the right part $U\sigma_jU^{\dagger}\sigma_j^{\dagger}$ is also diagonal in the computational basis. Thus, when the left part $\sigma_j\sigma_i$ is not diagonal, the trace of the above expression is $0$. In general, we can express that $\sigma_i = \frac{1}{\sqrt{d}} X_{\mathbf{a}_i}Z_{\mathbf{b}_i}$ and $\sigma_j = \frac{1}{\sqrt{d}} X_{\mathbf{a}_j}Z_{\mathbf{b}_j}$ where $X_{\mathbf{a}_i}, X_{\mathbf{a}_j}\in \{\mathbb{I}, X\}^{\otimes n}$ and $Z_{\mathbf{b}_i}, Z_{\mathbf{b}_j}\in \{\mathbb{I}, Z\}^{\otimes n}$. The notation $\mathbf{a}_i$ is a bit string from $\{0,1\}^n$ meaning $X_{\mathbf{a}_i} = \prod_{k=1}^{n} X_k^{(\mathbf{a}_i)_k}$. The same are for $\mathbf{a}_j$, $\mathbf{b}_i$, and $\mathbf{b}_j$. If $X_{\mathbf{a}_i}\neq X_{\mathbf{a}_j}$, then $\sigma_j\sigma_i$ is not diagonal and hence $\mathcal{U}_{ij}=0$. Thus, for any operator $X_{\mathbf{a}}\in \{\mathbb{I},X\}^{\otimes n}$, there is a corresponding block with dimension $2^n\times 2^n$ in the Liouville representation of the diagonal matrix $U$. The linear space associated with the block is spanned by bases $\{X_{\mathbf{a}}\cdot I^{\otimes n},X_{\mathbf{a}}\cdot(I^{\otimes n-1}\otimes Z),\cdots, X_{\mathbf{a}}\cdot Z^{\otimes n}\}$ and we denote the set of this bases as $X_{\mathbf{a}}\mathsf{Z}$. It is worth noting that the block spanned by $\mathsf{Z}$ is equal to identity with dimension $2^n$, which means this block can be further decomposed into the direct sum of $2^n$ small blocks with dimensions $1$.

Recall that $\Lambda_{G}$ is also diagonal in Liouville representation and all of its blocks are proportional to identities. If $\Lambda_G$ has the same blocks as $\mathcal{U}$, they will commute with each other and hence $\mathcal{U}\Lambda_G\mathcal{U}^{-1}=\Lambda_G$ holds. Now we show that $\langle X,CZ,S\rangle \leq \mathbf{G}$ is enough for $\Lambda_G$ to have the same blocks as $\mathcal{U}$. Since $\langle X,Z\rangle\leq\langle X,CZ,S\rangle \leq \mathsf{G}$, $\Lambda_G$ must be diagonal in Pauli-Liouville representation, $\Lambda_G = \sum_i \lambda_i \lketbra{\sigma_i}{\sigma_i}$. Note that if there exists an element $G\in \mathsf{G}$ such that $\sigma_i=G\sigma_jG^{\dagger}$, then $\sigma_i$ and $\sigma_j$ are symmetric under the group action of $\mathsf{G}$, which results in $\lambda_i=\lambda_j$, or equivalently, $\sigma_i$ and $\sigma_j$ are in the same diagonal block~\cite{Cross2016dihedral}. Note that any $G\in \langle CZ,S\rangle \leq \mathsf{G}$ can be decomposed as $G= WV$ where $W\in\langle CZ\rangle, V\in\langle S\rangle$, then for any $X_{\mathbf{a}}\in \{\mathbb{I},X\}^{\otimes n} / \mathbb{I}^{\otimes n}$,
\begin{equation}\label{eq:ULambdaU=Lambda}
\begin{split}
GX_{\mathbf{a}} G^{\dagger}=& WVX_{\mathbf{a}}V^{\dagger}W^{\dagger}\\
=&X_{\mathbf{a}}  (X_{\mathbf{a}}^{\dagger} WX_{\mathbf{a}}) (X_{\mathbf{a}}^{\dagger}VX_{\mathbf{a}})V^{\dagger}W^{\dagger}\\
=&X_{\mathbf{a}}  (X_{\mathbf{a}}^{\dagger} WX_{\mathbf{a}}W^{\dagger}) (X_{\mathbf{a}}^{\dagger}VX_{\mathbf{a}}V^{\dagger}).
\end{split}
\end{equation}
Below we express $X_{\mathbf{a}} = \bigotimes_{i=1}^n X_i^{a_i}$ where $a_i\in \{0,1\}$ and we simply denote $\mathbf{a} = (a_1, a_2, \cdots, a_n)^T \in \{0,1\}^n$. As $X_{\mathbf{a}}$ is not equal to identity $\mathbb{I}^{\otimes n}$, there is at least an element in $\{a_1, a_2, \cdots, a_n\}$ to be nonzero. Suppose $a_i=1$. For any $Z_{\mathbf{b}}\in \{\mathbb{I},Z\}^{\otimes n}$ where $\mathbf{b} = (b_1, b_2, \cdots, b_n)^T \in \{0,1\}^n$, we can choose $W = \prod_{1\leq j\leq n, j\neq i} CZ_{ij}^{b_j}$, $V = S_i^{\sum_{1\leq j\leq n} a_jb_j}$ such that
\begin{equation}
\begin{split}
X_{\mathbf{a}}(X_{\mathbf{a}}^{\dagger} WX_{\mathbf{a}}W^{\dagger}) (X_{\mathbf{a}}^{\dagger}VX_{\mathbf{a}}V^{\dagger}) &= X_{\mathbf{a}}(\prod_{1\leq j\leq n, j\neq i} Z_j^{b_j}Z_i^{a_jb_j})  (Z_i^{\sum_{1\leq j\leq n} a_jb_j})\\
&= X_{\mathbf{a}}(\prod_{1\leq j\leq n, j\neq i} Z_j^{b_j}) (Z_i^{a_ib_i+ 2\sum_{1\leq j\leq n, j\neq i} a_jb_j})\\
&= X_{\mathbf{a}}\prod_{1\leq j\leq n} Z_j^{b_j}\\
&= X_{\mathbf{a}}Z_{\mathbf{b}}.
\end{split}
\end{equation}
Here we utilize the following two identities,
\begin{align}
X_iCZ_{i,j}X_iCZ_{i,j}^{\dagger}&=Z_j,\\
X_iS_iX_iS_i^{\dagger}&=Z_i.
\end{align}
Then, for any $X_{\mathbf{a}}\in \{\mathbb{I},X\}^{\otimes n} / \mathbb{I}^{\otimes n}$ and $Z_{\mathbf{b}}\in \{\mathbb{I},Z\}^{\otimes n}$, we can find a unitary $G=WV$, such that
\begin{equation}
GX_{\mathbf{a}} G^{\dagger} = X_{\mathbf{a}}Z_{\mathbf{b}}.
\end{equation}
It means that the linear space spanned by $X_{\mathbf{a}}\mathsf{Z}$ is a diagonal block for $\Lambda_G$, which is the same as $\mathcal{U}$. In the space spanned by $\mathsf{Z}$, $\Lambda_G$ and $\mathcal{U}$ are both fully diagonal with $2^n$ one-dimensional blocks. Thus, we prove that $\langle X,CZ,S\rangle \leq \mathsf{G}$ suffices to make $\Lambda_G$ commute with the diagonal matrix $\mathcal{U}$ and complete the proof of Lemma~\ref{lemma:czdihedral}.
\end{proof}

Now we present the proof of Theorem~\ref{thm:ucommute}.
\begin{proof}
The situation of $m = 2$ and that of $m\geq 3$ are different. Below we first discuss the case that $m\geq 3$. The main technique is following the discussion in the proof of Theorem~\ref{thm:cnzm} and mapping $\mathsf{W}_X$ to $\mathbb{Z}_{m}^{2^{n+1}}$ via injective map $\phi$ in Eq.~\eqref{eq:Wxmap}.

Suppose $m\geq 3$. In the case that $m$ is odd for $U = C^nZ_m$, from the proof of Theorem~\ref{thm:cnzm}, we obtain that $\phi(\mathsf{W}_X)$ is spanned by $\{e_i, 0\leq i\leq 2^{n+1}-1\}$. It means that $\phi(\mathsf{W}_X)$ is equal to $\mathbb{Z}_{m}^{2^{n+1}}$. Note that $\phi(C^nZ_m) = e_{2^{n+1}-1}$, so in this case the target gate $U$ itself belongs to $\mathsf{W}_X$ and hence $U$ belongs to the twirling group $\mathsf{G} = \mathsf{X}\ltimes \mathsf{W}_X$. When $U\in \mathsf{G}$, then
\begin{equation}
\begin{split}
\mathcal{U}\Lambda_G \mathcal{U}^{\dagger} &= \mathbb{E}_{G\in \mathsf{G}} \mathcal{U}\mathcal{G}\Lambda \mathcal{G}^{\dagger}\mathcal{U}^{\dagger}\\
&= \mathbb{E}_{G\in \mathsf{G}} \mathcal{G}\Lambda \mathcal{G}^{\dagger}\\
&= \Lambda_G.
\end{split}
\end{equation}
The commutation between $\mathcal{U}$ and $\Lambda_G$ is satisfied automatically.

Below we discuss the case that $m$ is even and $m\geq 3$ for $U=C^nZ_m$. Based on Lemma~\ref{lemma:czdihedral}, we study whether $CZ$ and $S$ gates belong to $\mathsf{W}_X$ or not. By symmetry, we only need to study that for the phase gate on the first qubit $S_1$ and the controlled-phase gate on the first two qubits $CZ_{12}$. Note that the phase gate $S$ can be expressed as $S=Z_m^{m/4}$ and $CZ$ can be expressed as $CZ=CZ_{m}^{m/2}$. We express $m=q2^k$ where $q\geq 1$ is odd and $k\geq 1$ is a positive integer. Note that the bases of $\phi(\mathsf{W}_X)$ are $\{v_i=e_i-e_{2^{n+1}-1}, 0\leq i\leq 2^{n+1}-2\}$ along with $u = 2^{\min(n+1,k)}e_{2^{n+1}-1}$. Then, $\phi(S_1)$ can be constructed by
\begin{equation}\label{eq:S1}
\begin{split}
\phi(S_1) &= q2^{k-2}(e_{2^{n+1}-2^n} + e_{2^{n+1}-2^n+1} + \cdots + e_{2^{n+1}-1})\\
&=q2^{k-2}(v_{2^{n+1}-2^n} + v_{2^{n+1}-2^n+1} + \cdots + v_{2^{n+1}-2}) + q2^{k-2}(2^n-1)e_{2^{n+1}-1} + q2^{k-2}e_{2^{n+1}-1}\\
&=q2^{k-2}(v_{2^{n+1}-2^n} + v_{2^{n+1}-2^n+1} + \cdots + v_{2^{n+1}-2}) + q2^{n+k-2}e_{2^{n+1}-1}\\
&= q2^{k-2}(v_{2^{n+1}-2^n} + v_{2^{n+1}-2^n+1} + \cdots + v_{2^{n+1}-2}) + q2^{n+k-2-\min(n+1, k)}u\\
&= q2^{k-2}(v_{2^{n+1}-2^n} + v_{2^{n+1}-2^n+1} + \cdots + v_{2^{n+1}-2}) + q2^{\max(k-3, n-2)}u.
\end{split}
\end{equation}
Meanwhile, $\phi(CZ_{12})$ can be constructed by
\begin{equation}\label{eq:CZ12}
\begin{split}
\phi(CZ_{12}) &= q2^{k-1}(e_{2^{n+1}-2^{n-1}} + e_{2^{n+1}-2^{n-1}+1} + \cdots + e_{2^{n+1}-1})\\
&=q2^{k-1}(v_{2^{n+1}-2^{n-1}} + v_{2^{n+1}-2^{n-1}+1} + \cdots + v_{2^{n+1}-2}) + q2^{k-1}(2^{n-1}-1)e_{2^{n+1}-1} + q2^{k-1}e_{2^{n+1}-1}\\
&=q2^{k-1}(v_{2^{n+1}-2^{n-1}} + v_{2^{n+1}-2^{n-1}+1} + \cdots + v_{2^{n+1}-2}) + q2^{n+k-2}e_{2^{n+1}-1}\\
&= q2^{k-1}(v_{2^{n+1}-2^{n-1}} + v_{2^{n+1}-2^{n-1}+1} + \cdots + v_{2^{n+1}-2}) + q2^{n+k-2-\min(n+1, k)}u\\
&= q2^{k-1}(v_{2^{n+1}-2^{n-1}} + v_{2^{n+1}-2^{n-1}+1} + \cdots + v_{2^{n+1}-2}) + q2^{\max(k-3, n-2)}u.
\end{split}
\end{equation}
The constructions Eq.~\eqref{eq:S1} and Eq.~\eqref{eq:CZ12} are valid only when the coefficients of the bases are integers. Obviously, $k\geq 3$, and $k = 2$ and $n\geq 2$ can make the coefficients to be integers. In this case, $S_1$ and $CZ_{12}$ belong to $\mathsf{W}_X$ and furthermore, $\langle CZ, X, S\rangle \leq \mathsf{G}$. Based on Lemma~\ref{lemma:czdihedral}, $\mathcal{U}\Lambda_G \mathcal{U}^{\dagger} = \Lambda_G$ is satisfied. The exceptions are the case that $k = 1$ and that $n = 1$ and $k = 2$. Below we discuss the two kinds of exceptions in detail.

When $k = 1$, the target gate $U = C^nZ_m$ and $m=2q$. In this case, $u=2e_{2^{n+1}-1}$, and $CZ_{m/2}$ and $Z_m$ gates belong to $\mathsf{G}$, which can be derived through the following equations.
\begin{equation}
\begin{split}
\phi((Z_m)_1) &= e_{2^{n+1}-2^n} + e_{2^{n+1}-2^n+1} + \cdots + e_{2^{n+1}-1}\\
&= v_{2^{n+1}-2^n} + v_{2^{n+1}-2^n+1} + \cdots + v_{2^{n+1}-2} + (2^n-1)e_{2^{n+1}-1} + e_{2^{n+1}-1}\\
&= v_{2^{n+1}-2^n} + v_{2^{n+1}-2^n+1} + \cdots + v_{2^{n+1}-2} + 2^ne_{2^{n+1}-1}\\
&= v_{2^{n+1}-2^n} + v_{2^{n+1}-2^n+1} + \cdots + v_{2^{n+1}-2} + 2^{n-1}u.
\end{split}
\end{equation}
\begin{equation}
\begin{split}
\phi((CZ_{m/2})_{12}) &= 2(e_{2^{n+1}-2^{n-1}} + e_{2^{n+1}-2^{n-1}+1} + \cdots + e_{2^{n+1}-1})\\
&=2(v_{2^{n+1}-2^{n-1}} + v_{2^{n+1}-2^{n-1}+1} + \cdots + v_{2^{n+1}-2}) + 2(2^{n-1}-1)e_{2^{n+1}-1} + 2e_{2^{n+1}-1}\\
&=2(v_{2^{n+1}-2^{n-1}} + v_{2^{n+1}-2^{n-1}+1} + \cdots + v_{2^{n+1}-2}) + 2^ne_{2^{n+1}-1}\\
&= 2(v_{2^{n+1}-2^{n-1}} + v_{2^{n+1}-2^{n-1}+1} + \cdots + v_{2^{n+1}-2}) + 2^{n-1}u.
\end{split}
\end{equation}
Thus, when $k=1$, $\langle CZ_{m/2}, Z_m \rangle \leq \mathsf{W}_X$ and $\mathsf{X}\ltimes \langle CZ_{m/2}, Z_m \rangle \leq \mathsf{G}$. Notice that $m\geq 3$ and $m/2$ is odd, so $m/2\geq 3$ holds. It can be verified that the irreducible representation decomposition of the Liouville representation for group $\langle CZ_m, Z_m, X \rangle$, where $m\geq 3$ and $m$ is odd, and group $\langle CZ, S, X \rangle$ are the same. As $\mathsf{X}\ltimes \langle CZ_{m/2}, Z_{m/2} \rangle \leq \mathsf{X}\ltimes \langle CZ_{m/2}, Z_m \rangle \leq \mathsf{G}$ and $m/2\geq 3$, after being twirled by $\mathsf{G}$, the twirled noise channel $\Lambda_G$ would have the same blocks as $\mathcal{U}$, and $\mathcal{U}\Lambda_G \mathcal{U}^{\dagger} = \Lambda_G$ is satisfied.

When $n = 1$ and $k = 2$, the target gate $U = CZ_m$ and $m=4q$. The bases for $\phi(\mathsf{W}_X)$ are $v_0 = (1, 0, 0, -1)^T$, $v_1 = (0, 1, 0, -1)^T$, $v_2 = (0, 0, 1, -1)^T$, and $u = (0, 0, 0, 4)$. Note that $\phi(CZ\cdot S_1) = (0, 0, q, 3q)^T = v_2+u$ and $\phi(CZ\cdot S_2) = (0, q, 0, 3q)^T = v_1+u$. Thus, in this case, $\langle CZ\cdot S_1, CZ\cdot S_2 \rangle \in \mathsf{W}_X \leq \mathsf{G}$. In the proof of Theorem~\ref{thm:cnzm}, we have already shown that Pauli group $\mathsf{P}_2\in \mathsf{G}$. Thus, the twirled noise channel would be diagonal in Liouville representation, $\Lambda_G = \sum_i \lambda_i \lketbra{\sigma_i}{\sigma_i}$. Also, we have the following identities,
\begin{align}
CZ\cdot S_1 X_1 (CZ\cdot S_1)^{\dagger} &= X_1 Z_1 Z_2;\\
CZ\cdot S_2 X_1 (CZ\cdot S_2)^{\dagger} &= X_1 Z_2;\\
(CZ\cdot S_2 \cdot CZ\cdot S_1) X_1 (CZ\cdot S_2 \cdot CZ\cdot S_1)^{\dagger} &= X_1 Z_1.
\end{align}
Under the twirling of $\langle CZ\cdot S_1, CZ\cdot S_2 \rangle$, $X_1$, $X_1 Z_1$, $X_1 Z_2$, and $X_1 Z_1Z_2$ would be symmetric. Their corresponding Pauli fidelities $\lambda_i$ would be the same after the twirling. As the twirling group contains $\langle CZ\cdot S_1, CZ\cdot S_2 \rangle$, the twirled noise channel $\Lambda_G$ would be diagonal and proportional to identity in the space spanned by $X_1\mathsf{Z}$. The cases for spaces spanned by $X_2\mathsf{Z}$ or $X_1X_2\mathsf{Z}$ are the same. In summary, $\Lambda_G$ has diagonal blocks in spaces spanned by $X_1\mathsf{Z}$, $X_2\mathsf{Z}$ and $X_1X_2\mathsf{Z}$, respectively. Thus, in this case, $\mathcal{U}\Lambda_G \mathcal{U}^{\dagger} = \Lambda_G$ is also satisfied.

At last, we analyze the case that $m=2$ and $n\geq 2$. The case that $m=2$ and $n=1$ is just CZ and obviously in this case $\mathcal{U}\Lambda_G \mathcal{U}^{\dagger} \neq \Lambda_G$. When the target gate is $C^{n}Z$ with $n\geq 2$, the twirling group is $\mathsf{G}=X\ltimes W_X=\langle C^{n-1}Z,\cdots, Z, X\rangle$. Different from a generic diagonal gate, the diagonal blocks spanned by $X_{\mathbf{a}}\mathsf{Z}$ of $C^{n}Z$ gate in Pauli-Liouville representation can be further divided into two small blocks. We will show that any channel twirled by group $\langle CZ, Z, X\rangle$ would have the same diagonal blocks as $C^{n}Z$ in Liouville representation. As $\langle C^{n-1}Z,\cdots, Z, X\rangle$ contains $\langle CZ, Z, X\rangle$, $\Lambda_{\langle C^{n-1}Z,\cdots, Z, X\rangle}$ would commute with $C^nZ$.

Similar to Eq.~\eqref{eq:PauliOfU}, by setting $\sigma_i = \frac{1}{\sqrt{d}} X_{\mathbf{a}_i}Z_{\mathbf{b}_i}$ and $\sigma_j = \frac{1}{\sqrt{d}} X_{\mathbf{a}_j}Z_{\mathbf{b}_j}$ where $X_{\mathbf{a}_i}, X_{\mathbf{a}_j}\in \{\mathbb{I}, X\}^{\otimes n}$ and $Z_{\mathbf{b}_i}, Z_{\mathbf{b}_j}\in \{\mathbb{I}, Z\}^{\otimes n}$, we get the matrix element of the Liouville representation of $U=C^nZ$,
\begin{equation}
\begin{split}
\mathcal{U}_{ji}&=d\tr(\sigma_i\sigma_jC^{n}Z\sigma_iC^{n}Z\sigma_i^{\dagger})\\
&=\frac{1}{d}\tr(X_{\mathbf{a}_i}Z_{\mathbf{b}_i}X_{\mathbf{a}_j}Z_{\mathbf{b}_j}W)\\
&\propto \frac{1}{d}\tr(X_{\mathbf{a}_i}X_{\mathbf{a}_j}Z_{\mathbf{b}_i}Z_{\mathbf{b}_j}W);\\
W&=C^{n}Z X_{\mathbf{a}_i} C^{n}Z X_{\mathbf{a}_i}.
\end{split}
\end{equation}
Notice that to make $\mathcal{U}_{ji}$ nonzero, we at least require $X_{\mathbf{a}_i}=X_{\mathbf{a}_j}$. Define support of $X_{\mathbf{a}_i}$ to be the set, $\text{supp}(X_{\mathbf{a}_i})=\{i|X_{\mathbf{a}_i}\text{ on qubit $i$ is }X\}$, which is essentially the location of $1$ in $\mathbf{a}_i$. Notice that two of the diagonal elements of $W$ are $-1$ while others are all $1$. Suppose these two elements are $W_{i_1,i_1}$ and $W_{i_2,i_2}$. Without loss of generality, $i_2$ can be set as $2^{n+1}-1$ as the element of the last row and last column is $-1$. To make $\mathcal{U}_{ji}$ nonzero, we furthermore require that $(Z_{\mathbf{b}_i}Z_{\mathbf{b}_j})_{i_1,i_1}=(Z_{\mathbf{b}_i}Z_{\mathbf{b}_j})_{i_2,i_2}$, or $\bra{i_1}Z_{\mathbf{b}_i}Z_{\mathbf{b}_j}\ket{i_1}=\bra{i_2}Z_{\mathbf{b}_i}Z_{\mathbf{b}_j}\ket{i_2}$. Note that $\ket{i_2} = X_{\mathbf{a}_i}\ket{i_1}$. Thus, to make $\mathcal{U}_{ji}$ nonzero, we require that $X_{\mathbf{a}_i}$ and $Z_{\mathbf{b}_i}Z_{\mathbf{b}_j}$ commute, which is equivalent to that $Z_{\mathbf{b}_i}Z_{\mathbf{b}_j}$ has even number of $Z$ gates on qubits of $\text{supp}(X_{\mathbf{a}_i})$. Hence, the block spanned by $X_{\mathbf{a}_i}\mathsf{Z}$ can be further split into two blocks when $m=2$. One block is spanned by $X_{\mathbf{a}_i}\mathsf{Z}_e = \{X_{\mathbf{a}_i}Z_{\mathbf{b}_k} , \mod(\abs{\mathbf{a}_i\cap \mathbf{b}_k}, 2)=0 \}$ and the other block is spanned by $X_{\mathbf{a}_i}\mathsf{Z}_o = \{X_{\mathbf{a}_i}Z_{\mathbf{b}_k} , \mod(\abs{\mathbf{a}_i\cap \mathbf{b}_k}, 2)=1\}$. Here, $\mathbf{a}_i\cap \mathbf{b}_k$ is the bitwise and operation between $\mathbf{a}_i$ and $\mathbf{b}_k$, $\abs{\mathbf{a}_i\cap \mathbf{b}_k}$ is the weight of $\mathbf{a}_i\cap \mathbf{b}_k$, and $\mod(\abs{\mathbf{a}_i\cap \mathbf{b}_k}, 2)=0$ means that there are even number of $Z$ gates of $Z_{\mathbf{b}_k}$ acting on qubits of $\text{supp}(X_{\mathbf{a}_i})$. Notice that $\mod(\abs{\mathbf{a}_i\cap \mathbf{b}_k}, 2)=0$ is equivalent to that $\sum_{1\leq j\leq n} (\mathbf{a}_i)_j (\mathbf{b}_k)_j$ is even and $\mod(\abs{\mathbf{a}_i\cap \mathbf{b}_k}, 2)=1$ is equivalent to that $\sum_{1\leq j\leq n} (\mathbf{a}_i)_j (\mathbf{b}_k)_j$ is odd. These relations are useful in the following proof.

Now we obtain that the Liouville representation of $\mathcal{U}$ is spanned by blocks with bases $X_{\mathbf{a}}\mathsf{Z}_e$ and $X_{\mathbf{a}}\mathsf{Z}_o$. To show $\mathcal{U}\Lambda_G \mathcal{U}^{\dagger} = \Lambda_G$, we also need to check whether $\Lambda_G$ has the same diagonal blocks with $\mathcal{U}$ and whether $\Lambda_G$ is proportional to identity when restricting in these blocks. The proof is similar to that in Lemma~\ref{lemma:czdihedral}. Recall that $\Lambda_G = \sum_i \lambda_i \lketbra{\sigma_i}{\sigma_i}$ in Liouville representation. Also, if there exists an element $G\in \mathsf{G}$ such that $\sigma_i=G\sigma_jG^{\dagger}$, then $\lambda_i=\lambda_j$, and $\sigma_i$ and $\sigma_j$ would be in the same blocks. Then, we only need to check whether any element in $X_{\mathbf{a}}\mathsf{Z}_e$ can be generated from $X_{\mathbf{a}}$ by conjugate action of elements in $\mathsf{G}$.

Given $X_{\mathbf{a}}\in \{\mathbb{I},X\}^{\otimes n} / \mathbb{I}^{\otimes n}$, choose $i$ from $\text{supp}(X_{\mathbf{a}})$. For any $Z_{\mathbf{b}}\in \{\mathbb{I},Z\}^{\otimes n}$ where $X_{\mathbf{a}}Z_{\mathbf{b}}\in X_{\mathbf{a}}\mathsf{Z}_e$, we can choose $W=\prod_{j\neq i}CZ_{ij}^{b_j}$, such that
\begin{equation}
\begin{split}
WX_{\mathbf{a}}W^{\dagger}=&X_{\mathbf{a}}(X_{\mathbf{a}}^{\dagger}WX_{\mathbf{a}}W^{\dagger})\\
=&X_{\mathbf{a}}(\prod_{1\leq j\leq n, j\neq i}Z_j^{b_j}) Z_i^{\sum_{1\leq j\leq n, j\neq i} a_jb_j} \\
=&X_{\mathbf{a}}(\prod_{1\leq j\leq n, j\neq i}Z_j^{b_j}) Z_i^{(\sum_{1\leq j\leq n} a_jb_j)-a_ib_i}\\
=&X_{\mathbf{a}}(\prod_{1\leq j\leq n, j\neq i}Z_j^{b_j}) Z_i^{-a_ib_i}\\
=&X_{\mathbf{a}}(\prod_{1\leq j\leq n, j\neq i}Z_j^{b_j}) Z_i^{b_i}\\
=&X_{\mathbf{a}}Z_{\mathbf{b}}.
\end{split}
\end{equation}
Here, in the fourth line, we utilize $X_{\mathbf{a}}Z_{\mathbf{b}}\in X_{\mathbf{a}}\mathsf{Z}_e$, which means $\sum_{1\leq j\leq n} a_jb_j$ is even. In the fifth line, we use the condition that $a_i=1$. Thus, we show that $\Lambda_G$ is proportional to identity when restricting in the block spanned by $X_{\mathbf{a}}\mathsf{Z}_e$. From a similar argument, it can be verified that this is also true for block spanned by $X_{\mathbf{a}}\mathsf{Z}_o$. Then, we successfully show that $\mathcal{U}$ and $\Lambda_G$ have the same diagonal blocks, and $\mathcal{U}\Lambda_G\mathcal{U}^{-1}=\Lambda_G$ when $m=2$.

As a conclusion, the twirling group $X\ltimes \mathsf{W}_X$ satisfies $\mathcal{U}\Lambda_G\mathcal{U}^{\dagger}=\Lambda_G$ for $U=C^nZ_{m}$ with $n=1, m\geq 3$ or $n\geq 2, m\geq 2$. Proof is done.
\end{proof}

\subsection{Structure of $\mathsf{W}_X$}
In this part, we present the concrete form of $\mathsf{W}_X$. Following the discussion of $\mathsf{W}_X$ in previous subsections, we map $\mathsf{W}_X$ into $\mathbb{Z}_{m}^{2^{n+1}}$ via Eq.~\eqref{eq:Wxmap}.

We first discuss the cardinality of $\mathsf{W}_X$. The image
$\phi(\mathsf{W}_X)$ is the linear span of bases $\{v_0=e_0-e_{2^{n+1}-1},\cdots,v_{2^{n+1}-2}=e_{2^{n+1}-2}-e_{2^{n+1}-1}\}$. If considering the factor of global phase, the image would be the linear span of bases $\{v_0=e_0-e_{2^{n+1}-1},\cdots,v_{2^{n+1}-2}=e_{2^{n+1}-2}-e_{2^{n+1}-1}, u = 2^{\min(n+1, k)}e_{2^{n+1}-1}\}$ and we denote this enlarged image to be $\phi'(\mathsf{W}_X)$. In $\phi'(\mathsf{W}_X)$, any two vectors differing by $v_{2^{n+1}-1}=\begin{pmatrix}1&\cdots&1\end{pmatrix}^T$ correspond to the same gate. If $n+1\geq k$, $u$ is $\begin{pmatrix}0&\cdots&2^{k}\end{pmatrix}^T$. Now we consider the cardinality of $\phi'(\mathsf{W}_X)$. One can take arbitrary values from $\mathbb{Z}_m$ to fix the coefficients of the first $2^{n+1}-1$ bases. After that, the coefficient of the last basis $u$ only has $\frac{m}{2^k}$ inequivalent choices from $\mathbb{Z}_m$ as $(a+\frac{m}{2^k})u=au$. Hence, the cardinality of $\phi'(\mathsf{W}_X)$ is
\begin{equation}
|\phi'(\mathsf{W}_X)|=m^{2^{n+1}-1}\cdot\frac{m}{2^k}=m^{2^{n+1}}/2^k.
\end{equation}
For any $\phi(w)$ belongs to $\phi'(\mathsf{W}_X)$, we can also find $\phi(w)+\begin{pmatrix}1&\cdots&1\end{pmatrix}^T,\cdots,\phi(w)+(m-1)\cdot\begin{pmatrix}1&\cdots&1\end{pmatrix}^T$ in $\phi'(\mathsf{W}_X)$. All these $m$ terms correspond to the same gate. Thus, the cardinality of $\phi(\mathsf{W}_X)$ is just $|\phi'(\mathsf{W}_X)|$ divided by $m$, excluding the redundacy of the global phase,
\begin{equation}\label{eq:cardWx}
|\mathsf{W}_X|=|\phi(\mathsf{W}_X)|=\frac{1}{m}|\phi'(\mathsf{W}_X)|=m^{2^{n+1}-1}/2^k.
\end{equation}

Now we investigate the structure of $\mathsf{W}_X=\langle\Pi^{\dagger} C^{n}Z_{m}\Pi C^{n}Z_{m}^{\dagger}\rangle$. We first analyze the simple case of $m=q$ and $m=2q$ where $q$ is an odd number. And then we use induction to analyze the general case that $m=q2^k$.

\begin{enumerate}[1)]
\item
Case 1: $m=q$.
\\
In this case,  $k = 0$, $\gcd(m, 2^{n+1})=1$, the vector $u=e_{2^{n+1}-1}$, and $\phi'(\mathsf{W}_X) = \mathbb{Z}_{m}^{2^{n+1}}$. Clearly, for any $0\leq l\leq n$, $\phi(C^l Z_m)\in \phi'(\mathsf{W}_X)$, so we can obtain that $\langle C^{n}Z_{m}, C^{n-1}Z_{m}, \cdots, CZ_{m}, Z_{m} \rangle \leq \mathsf{W}_X$. Since $\langle C^{n}Z_{m}, C^{n-1}Z_{m}, \cdots, CZ_{m}, Z_{m} \rangle = \langle C^{n}Z_{m}\rangle\times\langle C^{n-1}Z_{m}\rangle\times\cdots\times \langle Z_{m}\rangle$, $\abs{\langle C^{n}Z_{m}, C^{n-1}Z_{m}, \cdots, CZ_{m}, Z_{m} \rangle} = m^{\binom{n+1}{n+1}}\times m^{\binom{n+1}{n}}\times\cdots\times m^{\binom{n+1}{1}}=m^{2^{n+1}-1} = \abs{\mathsf{W}_X}$. The cardinalities of $\mathsf{W}_X$ and its subgroup $\langle C^{n}Z_{m}, C^{n-1}Z_{m}, \cdots, CZ_{m}, Z_{m} \rangle$ are the same, hence, we can conclude that
\begin{equation}
\mathsf{W}_X = \langle C^{n}Z_{m}, C^{n-1}Z_{m}, \cdots, CZ_{m}, Z_{m} \rangle = \langle C^{n}Z_{m}\rangle\times\langle C^{n-1}Z_{m}\rangle\times\cdots\times \langle Z_{m}\rangle.
\end{equation}

\item\label{item:case2}
Case 2: $m=2q$.
\\
In this case, $k=1$, $\gcd(m, 2^{n+1})=2$, and $u=2e_{2^{n+1}-1}$. We can use the following formula to construct all $C^lZ_{m}$ with $0\leq l\leq n-1$,
\begin{equation}
C^lZ_{m}=\phi^{-1}\Bigl(v_{(2^{l+1}-1)*2^{n-l}}+v_{(2^{l+1}-1)*2^{n-l}+1}+\cdots+v_{2^{n+1}-2}+2^{n-l-1}\cdot u\Bigr).
\end{equation}
But this construction is not valid for $C^{n}Z_{m}$ as it requires $\frac{1}{2}u$. We can only construct $C^{n}Z_{m}^2=C^{n}Z_{m/2}$. This implies that $\langle C^{n}Z_{m/2},C^{n-1}Z_{m},C^{n-2}Z_{m},\cdots\rangle\leq \mathsf{W}_X$. Since $\langle C^{n}Z_{m/2},C^{n-1}Z_{m},\cdots\rangle=\langle C^{n}Z_{m/2}\rangle\times\langle C^{n-1}Z_{m}\rangle\times\cdots\times \langle Z_{m}\rangle$, $\abs{\langle C^{n}Z_{m/2}, C^{n-1}Z_{m}, \cdots, CZ_{m}, Z_{m} \rangle} = (m/2)^{\binom{n+1}{n+1}}\times m^{\binom{n+1}{n}}\times\cdots\times m^{\binom{n+1}{1}}=m^{2^{n+1}-1}/2 = \abs{\mathsf{W}_X}$. Thus, in this case,
\begin{equation}
\mathsf{W}_X = \langle C^{n}Z_{m/2}, C^{n-1}Z_{m}, \cdots, CZ_{m}, Z_{m} \rangle = \langle C^{n}Z_{m/2}\rangle\times\langle C^{n-1}Z_{m}\rangle\times\cdots\times \langle Z_{m}\rangle.
\end{equation}

\item
Case 3.1: $m=q2^k$ with $k\leq n$.
\\
We first give the result and prove it by induction.
\begin{theorem}
If $m=q2^k$ and $k\leq n$, the structure of $\mathsf{W}_X$ can be expressed in the following recursion,
\begin{equation}
\begin{split}
\mathsf{W}_X&=\langle C^{n-k}Z_{m}, C^{n-k-1}Z_{m}, \cdots, Z_m \rangle\times \langle A_k\rangle;\\
A_0&=\{\mathbb{I}\}, A_1=\{C^{n}Z_{m/2}\},\\ \forall 2\leq k\leq n, A_k&=\{C^{n-k+1} Z_{m/2}, gC^{n-k+1}Z_m| g\in A_{k-1} \}.
\end{split}
\end{equation}
\end{theorem}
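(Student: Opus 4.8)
The plan is to transport the statement into the additive setting through the injective map $\phi\colon\mathsf{W}_X\hookrightarrow\mathbb{Z}_m^{2^{n+1}}$ of Eq.~\eqref{eq:Wxmap}, under which composition of diagonal gates becomes addition of integer vectors. Modulo the global-phase vector $(1,\dots,1)^T$, the image of the whole group is $\phi'(\mathsf{W}_X)=\mathrm{span}_{\mathbb{Z}_m}\{v_0,\dots,v_{2^{n+1}-2},\,u\}$ with $v_i=e_i-e_{2^{n+1}-1}$ and $u=2^{\min(n+1,k)}e_{2^{n+1}-1}=2^k e_{2^{n+1}-1}$ (here $k\le n$), and from Eq.~\eqref{eq:cardWx} the order is $|\mathsf{W}_X|=m^{2^{n+1}-1}/2^k$. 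Writing $\mathsf{H}_k:=\langle C^{n-k}Z_m,\dots,Z_m\rangle\times\langle A_k\rangle$ for the claimed group, the argument reduces to two comparisons: $\mathsf{H}_k\le\mathsf{W}_X$ and $|\mathsf{H}_k|=|\mathsf{W}_X|$, after which $\mathsf{H}_k=\mathsf{W}_X$ is immediate and the asserted direct-product structure follows because the cardinality bookkeeping is exactly tight.

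For $\mathsf{H}_k\le\mathsf{W}_X$ I first handle the generators $C^lZ_m$ with $0\le l\le n-k$. Fixing a set of $l+1$ acted qubits, $\phi(C^lZ_m)=\sum_{j\in S}e_j$ where $S$ is the set of the $2^{n-l}$ computational-basis indices whose chosen $l+1$ bits are all $1$; since $2^{n+1}-1\in S$, this equals $\sum_{j\in S\setminus\{2^{n+1}-1\}}v_j+2^{n-l}e_{2^{n+1}-1}=\sum_{j\in S\setminus\{2^{n+1}-1\}}v_j+2^{\,n-l-k}u$, an integer combination of the generators of $\phi'(\mathsf{W}_X)$ precisely because $l\le n-k$. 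The elements of $A_k$ are treated by induction on $k$; the cases $k=0,1$ are Cases~1 and~2 already established. In the inductive step an element of $A_k$ is either $C^{n-k+1}Z_{m/2}$ or $g\,C^{n-k+1}Z_m$ with $g\in A_{k-1}$, and in both cases I expand its $\phi$-image along the $v_i$ and check that the leftover coefficient of $e_{2^{n+1}-1}$ is an integer multiple of $2^k$; the key mechanism is that a factor of $C^{n-k+1}Z_m$ contributes exactly $2^{k-1}e_{2^{n+1}-1}$ (plus $v_i$ terms), cancelling the residual half-unit defect that the inductive hypothesis leaves attached to the $A_{k-1}$ part. Equivalently, one may exploit that conjugation plus the squaring identity $(\Pi^\dagger C^nZ_m\Pi\, C^nZ_m^\dagger)^2=\Pi^\dagger C^nZ_{m/2}\Pi\, C^nZ_{m/2}^\dagger$ yields $\mathsf{W}_X(m/2)\le\mathsf{W}_X(m)$, which feeds the induction.

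For the cardinality, $\langle C^{n-k}Z_m,\dots,Z_m\rangle$ is the internal direct product over all placements of $C^lZ_m$, $0\le l\le n-k$, each a cyclic group of order $m$, hence of order $m^{\sum_{l=0}^{n-k}\binom{n+1}{l+1}}=m^{\sum_{j=1}^{n-k+1}\binom{n+1}{j}}$. A second induction on $k$, tracking the recursion for $A_k$ inside the image lattice, gives $|\langle A_k\rangle|=m^{\sum_{i=0}^{k-1}\binom{n+1}{i}}/2^k$, and the two factors meet trivially because their $\phi$-images occupy complementary coordinate blocks (patterns with at most $n-k+1$ qubits set to $1$ versus the remaining top-weight patterns). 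Multiplying, $|\mathsf{H}_k|=m^{\sum_{j=1}^{n+1}\binom{n+1}{j}}/2^k=m^{2^{n+1}-1}/2^k=|\mathsf{W}_X|$, which together with the inclusion forces $\mathsf{H}_k=\mathsf{W}_X$.

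The main obstacle I anticipate is the coupled induction on $k$ for the $A_k$ factor: one must verify \emph{simultaneously} that the recursively defined generators $g\,C^{n-k+1}Z_m$ stay inside $\phi'(\mathsf{W}_X)$ and that they remain independent enough to realize exactly the size $m^{\sum_{i<k}\binom{n+1}{i}}/2^k$, i.e.\ that the recursion accounts for the $1/2^k$ defect created by $u=2^ke_{2^{n+1}-1}$ with neither over- nor under-counting, and that $\langle C^{n-k}Z_m,\dots,Z_m\rangle$ and $\langle A_k\rangle$ genuinely form a direct product. Carefully bookkeeping which pattern-blocks of $\mathbb{Z}_m^{2^{n+1}}$ each generator touches is the delicate combinatorial core; everything else is the routine dictionary between diagonal gates and their $\phi$-vectors.
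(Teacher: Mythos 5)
Your proposal follows the paper's proof essentially step for step: you pass to the additive picture via $\phi$, establish the inclusion $\mathsf{H}_k\le\mathsf{W}_X$ by writing $\phi(C^lZ_m)$ as an integer combination of the $v_i$ and $u$ for $l\le n-k$ and by running the same coupled induction in $k$ (and in the power of $2$ in $m$) for the $A_k$ generators, and then match cardinalities $|\mathsf{H}_k|=m^{2^{n+1}-1}/2^k=|\mathsf{W}_X|$ using the recursion $|\langle A_k\rangle|=\tfrac12|\langle A_{k-1}\rangle|\,|\langle C^{n-k+1}Z_m\rangle|$. The one place you and the paper both pass over lightly is why $\langle C^{n-k}Z_m,\dots,Z_m\rangle\cap\langle A_k\rangle=\{\mathbb{I}\}$ so that the internal product is direct and the size bookkeeping multiplies cleanly; your appeal to "complementary coordinate blocks" of the $\phi$-images is the right intuition (the first factor is generated by placements on at most $n-k+1$ qubits, the second by placements on at least $n-k+2$), but neither write-up spells it out.
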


\begin{proof}
The proof idea is showing $|\mathsf{W}_X| = |\langle C^{n-k}Z_{m}, C^{n-k-1}Z_{m}, \cdots, Z_m \rangle\times \langle A_k\rangle|$ and $\langle C^{n-k}Z_{m}, C^{n-k-1}Z_{m}, \cdots, Z_m \rangle\times \langle A_k\rangle \leq \mathsf{W}_X$.

Let us start with studying the relation between $\langle A_{k-1}\rangle$ and $\langle A_{k}\rangle$. For any $g\in \langle A_{k-1}\rangle$, it can be decomposed as $g=g_1^{\lambda_1}g_2^{\lambda_2}\cdots$ where $g_1,g_2,\cdots$ all belong to $A_{k-1}$ and $\lambda_1,\lambda_2,\cdots$ are integers. Since $g_1C^{n-k+1}Z_{m},g_2C^{n-k+1}Z_{m},\cdots\in A_k$, we can conclude that $g'=(g_1C^{n-k+1}Z_{m})^{\lambda_1}(g_2C^{n-k+1}Z_{m})^{\lambda_2}\cdots=gC^{n-k+1}Z_{m}^{\sum_{i}\lambda_i}\in\langle A_{k}\rangle$. Suppose $\sum_{i}\lambda_i$ is an even number, since $C^{n-k+1}Z_{m/2}=C^{n-k+1}Z_{m}^2\in A_{k}$, the set $\{g, gC^{n-k+1}Z_{m}^2, gC^{n-k+1}Z_{m}^4,\cdots\}\subseteq\langle A_k\rangle$. Now we want to argue that any gate has the form of $gC^{n-k+1}Z_{m}^{2p+1}$ cannot be in $\langle A_{k}\rangle$. This claim can be proved by contradiction. If $gC^{n-k+1}Z_{m}^{2p+1}\in\langle A_k\rangle$, combining with $gC^{n-k+1}Z_{m}^{2p}\in\langle A_k\rangle$, it is clear that $C^{n-k+1}Z_{m}=(gC^{n-k+1}Z_{m}^{2p})^{\dagger}gC^{n-k+1}Z_{m}^{2p+1}\in\langle A_k\rangle$. It leads to a contradiction as $C^{n-k+1}Z_{m}$ cannot be constructed with $\{v_i, 0\leq i\leq 2^{n+1}-2\}$ and $u = 2^ke_{2^{n+1}-1}$. Similarly, if $\sum_{i}\lambda_i$ is an odd number, we can see that all elements has the form $gC^{n-k+1}Z_{m}^{2p+1}$ will belong to $\langle A_k\rangle$ while elements within form of $gC^{n-k+1}Z_{m}^{2p}$ will not. In either case, one element $g$ in $\langle A_{k-1}\rangle$ will contribute to $|\langle C^{n-k+1}Z_{m/2}\rangle|=\frac{1}{2}|\langle C^{n-k+1}Z_{m}\rangle|$ elements in $\langle A_k\rangle$, which implies that $|\langle A_k\rangle|=\frac{1}{2}|\langle A_{k-1}\rangle|\times|\langle C^{n-k+1}Z_{m}\rangle|$. By iteration, the cardinality of $\langle C^{n-k}Z_{m}, C^{n-k-1}Z_{m},\cdots\rangle\times \langle A_k\rangle$ is
\begin{equation}\label{eq:cardequal}
\begin{split}
|\langle C^{n-k}Z_{m}, C^{n-k-1}Z_{m},\cdots\rangle\times \langle A_k\rangle|=&\frac{1}{2}|\langle C^{n-k+1}Z_{m}, C^{n-k}Z_{m},\cdots\rangle\times \langle A_{k-1}\rangle|\\
=&\frac{1}{2^k}|\langle C^{n}Z_{m}, C^{n-1}Z_{m},\cdots\rangle|\\
=&m^{2^{n+1}-1}/2^k.
\end{split}
\end{equation}
This is exactly equal to $|\mathsf{W}_X|$. To complete the proof, we only need to argue that $\langle C^{n-k}Z_{m}, C^{n-k-1}Z_{m},\cdots\rangle\times \langle A_k\rangle\leq \mathsf{W}_X$. Note that $\phi'(\mathsf{W}_X)$ is spanned by bases ${v_i, 1\leq i\leq 2^{n+1}-2}$ and $u$. We only need to investigate whether the generators of $\langle C^{n-k}Z_{m}, C^{n-k-1}Z_{m},\cdots\rangle\times \langle A_k\rangle$ belong to the preimage set of $\phi$.

For $C^{n-k-l}Z_{m}$ with $0\leq l\leq n-k$, it can be expressed in the form of
\begin{equation}
\begin{split}
C^{n-k-l}Z_{m} &=\phi^{-1}\Bigl(2^{l}\begin{pmatrix}0&\cdots&2^{k}\end{pmatrix}^T+\sum_{i=(2^{n+1-k-l}-1)\cdot2^{k+l}}^{2^{n+1}-2}v_i\Bigr)\\
&=\phi^{-1}\Bigl(2^{l}u+\sum_{i=(2^{n+1-k-l}-1)\cdot2^{k+l}}^{2^{n+1}-2}v_i\Bigr).
\end{split}
\end{equation}
Thus, $C^{n-k-l}Z_{m}$ belongs to $\mathsf{W}_X$. Next we prove $\langle A_k \rangle \leq \mathsf{W}_X$, or equivalently, $A_k \subseteq \mathsf{W}_X$, by induction. Denote $\mathsf{W}_X^k = \langle\Pi^{\dagger} C^{n}Z_{q2^k}\Pi C^{n}Z_{q2^k}^{\dagger}\rangle$. Suppose $A_{k-1}\subseteq \mathsf{W}_X^{k-1}$. Then, any element $g\in A_{k-1}$ has the following expression,
\begin{equation}
g=\phi^{-1}\Bigl(\begin{pmatrix}0&\cdots& 2^{k-1}\end{pmatrix}^T+\sum_{i=0}^{2^{n+1}-2}c_iv_i\Bigr).
\end{equation}
Note that $C^{n-k+1}Z_{m}$ can also be expressed in the above form as
\begin{equation}
C^{n-k+1}Z_{m}=\phi^{-1}\Bigl(\begin{pmatrix}0&\cdots&2^{k-1}\end{pmatrix}^T+\sum_{i=(2^{n-k+2}-1)\cdot 2^{k-1}}^{2^{n+1}-2}v_i\Bigr).
\end{equation}
This implies that $C^{n-k+1}Z_{m}^2$ and $C^{n-k+1}Z_{m}g$ can be expressed as below.
\begin{align}
C^{n-k+1}Z_{m}^2=&\phi^{-1}\Bigl(\begin{pmatrix}0&\cdots&2^{k}\end{pmatrix}^T+\sum_{i=(2^{n-k+2}-1)\cdot 2^{k-1}}^{2^{n+1}-2}2v_i\Bigr);\\
gC^{n-k+1}Z_{m}=&\phi^{-1}\Bigl(\begin{pmatrix}0&\cdots&2^{k}\end{pmatrix}^T+\sum_{i=0}^{(2^{n-k+2}-1)\cdot 2^{k-1}-1}c_iv_i+\sum_{i=(2^{n-k+2}-1)\cdot 2^{k-1}}^{2^{n+1}-2}(c_i+1)v_i\Bigr).
\end{align}
Therefore, any element $g'\in A_k$ has the expression,
\begin{equation}
g'=\phi^{-1}\Bigl(\begin{pmatrix}0&\cdots& 2^{k}\end{pmatrix}^T+\sum_{i=0}^{2^{n+1}-2}c_iv_i\Bigr).
\end{equation}
It means that all elements of $A_{k}$ can be constructed with $u$ and $v_i$, and belong to $\mathsf{W}_X^k$. 
Now we have shown that $A_{k-1}\subseteq \mathsf{W}_X^{k-1}$ implies $A_{k}\subseteq \mathsf{W}^k_X$. For the induction argument to hold, we only need to check the initial condition, i.e., $A_1 \subseteq \mathsf{W}_X^{1}$. This reduces to Case~\ref{item:case2}. Thus, we prove that $A_k\subseteq \mathsf{W}_X$ and $\langle A_k\rangle\leq \mathsf{W}_X$. Hence, $\langle C^{n-k}Z_{m}, C^{n-k-1}Z_{m},\cdots,Z_m\rangle\times \langle A_k\rangle\leq \mathsf{W}_X$. Combining with Eq.~\eqref{eq:cardequal}, we obtain that $\mathsf{W}_X = \langle C^{n-k}Z_{m}, C^{n-k-1}Z_{m},\cdots\rangle\times \langle A_k\rangle$ and proof is done.
\end{proof}

\item
Case 3.2: $m=q2^k$ with $k>n$.
\\
In this case, $\gcd(2^{n+1},m)=2^{n+1}$ and $\mathsf{W}_X=\langle A_{n+1}\rangle$ as described in the previous case.

Below we present examples of the twirling groups for target gate $U$ to be $C^nZ$ and $CZ_m$, which have been mentioned in the main text. In the next subsection, we will analyze their sample complexity and computational complexity.

\end{enumerate}

\begin{example}\label{exp:cnz}
The optimal twirling group $\mathsf{G}$ in CRU for multi-qubit controlled $Z$ gate $C^n Z$ is
\begin{equation}
\langle C^{n-1}Z, C^{n-2}Z, \cdots, CZ, Z, X \rangle = \mathsf{X}\ltimes \langle C^{n-1}Z, C^{n-2}Z, \cdots, CZ, Z\rangle.
\end{equation}
\end{example}

\begin{example}\label{exp:czm}
The optimal twirling group $\mathsf{G}$ in CRU for controlled phase gate $CZ_m$ is
\begin{equation}
\langle CZ_m, Z_m, X \rangle = \mathsf{X}\ltimes \langle CZ_m, Z_m\rangle,
\end{equation}
if $m$ is odd, and
\begin{equation}
\langle CZ_{m/2}, Z_m, X \rangle = \mathsf{X}\ltimes \langle CZ_{m/2}, Z_m\rangle,
\end{equation}
if $m$ is even and $m/2$ is odd, and
\begin{equation}
\langle CZ_{m/4}, Z_{m/2}, X, CZ_{m/2}Z^1_{m}, CZ_{m/2}Z^2_{m} \rangle = \mathsf{X}\ltimes \langle CZ_{m/4}, Z_{m/2}, CZ_{m/2} Z^1_{m}, CZ_{m/2}Z^2_{m} \rangle,
\end{equation}
if $m/2$ is even. It is worth mentioning that $\langle CZ_{m/4}, Z_{m/2}, CZ_{m/2} Z^1_{m}, CZ_{m/2}Z^2_{m} \rangle$ is a subgroup of $\langle CZ_{m/2}, Z_m\rangle$.
\end{example}

\subsection{Complexity analysis}\label{appendssc:complexity}
To enable randomized benchmarking or any other quantum information tasks with twirling groups, one needs to sample from the group and also needs to compute the multiplication of group elements. It is necessary to analyze the sample complexity and the computational complexity of the twirling group. The sample complexity is directly related to the cardinality of the corresponding twirling group. The computational complexity is related to both the group structure and the algorithm for computing the multiplication. We will provide a group multiplication algorithm and present its complexity as the upper bound of the group computational complexity. In the discussion below, we distinguish the number of qubits $N$ and the number of controlled qubits $n$. In general, $n\le N-1$.

To benchmark $C^{n}Z$ gate, the twirling group is $\mathsf{G}_{C^{n}Z} = \langle C^{n-1}Z,C^{n-2}Z\cdots,CZ,Z,X\rangle$ shown in Example~\ref{exp:cnz}. Note that this group has a semi-direct product structure,
\begin{equation}
\langle C^{n-1}Z, C^{n-2}Z, \cdots, CZ, Z, X\rangle=\langle X\rangle \rtimes(\langle C^{n-1}Z\rangle\times\langle C^{n-2}Z\rangle\times\cdots\langle Z\rangle).
\end{equation}
This means that any element in $\mathsf{G}_{C^{n}Z}$ can be written in the form of $\Pi W_{n-1}W_{n-2}\cdots W_{1}W_{0}$ where $\Pi\in\langle X\rangle$ and $W_{l}\in\langle C^{l}Z\rangle$. To sample from this twirling group $\mathsf{G}_{C^{n}Z}$, we just need to sample independently from $\langle C^{n-1}Z\rangle$, $\langle C^{n-2}Z\rangle$, $\cdots$, $\langle Z\rangle$, and $\langle X\rangle$. Since the orders of all generators are $2$, we can use a binary string whose length equals the number of generators of the group to represent an arbitrary group element. Then, we can sample the group element by sampling the binary string. For instance, the generators of $\langle CZ\rangle$ are $\{CZ_{1,2},CZ_{1,3},\cdots,CZ_{1,N},\cdots,CZ_{2,3},\cdots,CZ_{N-1,N}\}$. So a binary string with length $\binom{N}{2}$ is enough for sampling from $\langle CZ\rangle$. The total length needed to sample $\mathsf{G}_{C^{n}Z}$ is given by
\begin{equation}
\log |\mathsf{G}_{C^{n}Z}| = \log{\bigl|\langle X\rangle\rtimes\bigl(\langle C^{n-1}Z\rangle\times\cdots\times\langle CZ\rangle\times\langle Z\rangle\bigr)\bigr|}=n+\sum_{i=1}^{n}\binom{N}{i}=O(N^{n}),
\end{equation}
which is indeed the sample complexity. For computing the inverse gates, we will utilize the following identity. Given a $C^{n}Z$ gate acting on qubits $i_1,\cdots,i_{n+1}$ and a subset of $\{i_1,\cdots,i_{n+1}\}$, $I$, then
\begin{equation}\label{eq:swapidentity}
C^{n}Z_{i_1,i_2,\cdots,i_{n+1}}\Pi_{i\in I}X_i=\Pi_{i\in I}X_i\Pi_{S\subseteq I}C^{n-|S|}Z_{\{i_1,\cdots,i_{n+1}\}\backslash S}.
\end{equation}
So the multiplication of two elements in the group is calculated by
\begin{equation}
\begin{split}
&\Pi^{(1)}W_{n-1}^{(1)}\cdots W_{0}^{(1)} \Pi^{(2)}W_{n-1}^{(2)}\cdots W_{0}^{(2)}\\
=&(-1)^{\Pi^{(2)}\cdot W_{0}^{(1)}}\Pi^{(1)}W_{n-1}^{(1)}\cdots W_{1}^{(1)}\Pi^{(2)}W_{0}^{(1)}W_{n-1}^{(2)}\cdots W_{0}^{(2)}\\
=&(-1)^{\Pi^{(2)}\cdot W_{0}^{(1)}}\Pi^{(1)}W_{n-1}^{(1)}\cdots W_{2}^{(1)}\Pi^{(2)}W'_{0}W_{1}^{(1)}W_{0}^{(1)}W_{n-1}^{(2)}\cdots W_{0}^{(2)}\\
=&(-1)^{\Pi^{(2)}\cdot W_{0}^{(1)}}\Pi^{(1)}\Pi^{(2)} W'_{n-2}W_{n-1}^{(1)}\cdots W'_{0}W_{1}^{(1)}W_{0}^{(1)}W_{n-1}^{(2)}W_{n-2}^{(2)}\cdots W_{1}^{(2)}W_{0}^{(2)}.
\end{split}
\end{equation}
The $W'_0, \cdots, W'_{n-2}$ in the third and fourth lines are indeed the additional controlled-Z gates and multi-qubit controlled-Z gates in Eq.~\eqref{eq:swapidentity}.

Now we give a brief analysis of the total complexity. The key point is utilizing Eq.~\eqref{eq:swapidentity} to shift $\Pi^{(2)}$ gate to the left. We first fix an integer $l$ and focus on $C^{l}Z$. Now we try to study the swapping between $C^lZ$ and $\Pi^{(2)}$. By Eq.~\eqref{eq:swapidentity}, the total complexity for shifting $\Pi^{(2)}$ across a $C^{l}Z$ gate is $O(2^{l+1})$ since the cardinality of subset $I$ is at most $l+1$ and the number of subsets of $I$ is at most $2^{l+1}$. As the number of $C^{l}Z$ gates on $N$ qubits is at most $\binom{N}{l+1}$, the total complexity for shifting $\Pi^{(2)}$ to the leftmost is
\begin{equation}\label{eq:multiplycomplexity}
\text{group multiplication complexity}=\sum_{l=0}^{n-1}\binom{N}{l+1}\cdot 2^{l+1}.
\end{equation}
The right part is bounded by $N^{n}$. Hence the total group multiplication complexity is $O(N^{n})$. Notice that this bound is quite untight. For example if we let $n=N$, the right part of Eq.~\eqref{eq:multiplycomplexity} is indeed $3^{n}-1$.

The inverse of an element $\Pi W_{n-1}\cdots W_{0}$ is $W_{n-1}\cdots W_{0}\Pi$, which equals the multiplication of element $W_{n-1}\cdots W_{1}1_{\langle X\rangle}$ and $1_{\langle C^{n-1}Z\rangle}\cdots 1_{\langle Z\rangle}\Pi$, where $1_{\mathsf{G}}$ represents the identity in group $\mathsf{G}$ and is indeed the all-zero string in the binary string representation of group elements. The computational complexity for computing inverse gate is then equal to the complexity for group element multiplication, i.e. $O(N^{n})$.

To benchmark $CZ_m$ gate, the twirling group by our protocol would always be a subgroup of $\mathsf{G}_{CZ_m}=\langle CZ_m, Z_m, X\rangle$. For simplicity, we only analyze the complexity of $\langle CZ_m, Z_m, X\rangle$. This complexity would be the upper bound of the complexity of the twirling group shown in Example~\ref{exp:czm}. $\langle CZ_m, Z_m, X\rangle$ has a similar direct product structure like $\langle C^{n-1}Z, C^{n-2}Z, \cdots, CZ, Z, X\rangle$,
\begin{equation}
\mathsf{G}_{CZ_{m}}=\langle X\rangle\rtimes(\langle CZ_m\rangle\times\langle Z_m\rangle).
\end{equation}
Thus, each element in $\mathsf{G}_{CZ_m}$ can be expressed as $W_2W_1\Pi$ where $W_2\in\langle CZ_m\rangle,W_1\in\langle Z_m\rangle$ and $\Pi\in\langle X\rangle$. In this case, the orders of generators $CZ_m$ and $Z_m$ are both $m$. We need to use $\lceil\log{m}\rceil$ bits to record their multiplicities and to uniquely identify one group element. Then, to represent an element in $\langle CZ_m\rangle$, we should use a bit string with length $\lceil\log{m}\rceil N(N-1)/2$. The total number of bits to express a group element is
\begin{equation}
\log\bigl|\langle X\rangle\rtimes(\langle CZ_m\rangle\times\langle Z_m\rangle)\bigr|=\left(\frac{N(N-1)}{2}+N\right)\lceil\log{m}\rceil+N=O(N^2\log{m}),
\end{equation}
which is indeed the sample complexity. The equation Eq.~\eqref{eq:swapidentity} can be extended for $CZ_m$ in the following way
\begin{equation}	CZ^{k}_{m}X_i=X_iCZ_{m}^{-k}(Z_{m})_j^{k},
\end{equation}
where the $CZ_{m}$ here is on qubit $i,j$. By similar argument like $C^nZ$, we can see that the group multiplication complexity, and thus inverse computation complexity, is $O(N^2\log m)$.

The group size and computational complexity for the CNOT dihedral group have been elaborated in \cite{Cross2016dihedral}. So, we omit the details here. Note that for $CZ_m$, their results only apply to $m=2^k$, and ours admit $m$ taking arbitrary positive integers. The scaling of the complexity results has been summarized and listed in the main text. In Fig.~\ref{fig:card}, we also show an accurate result of the size comparison between our group and the CNOT dihedral group by considering $N = n + 1$. The size of our group increases slower with respect to the qubit number than that of the CNOT dihedral group.

\begin{figure}[htbp!]
\centering
\includegraphics[width=.48\textwidth]{./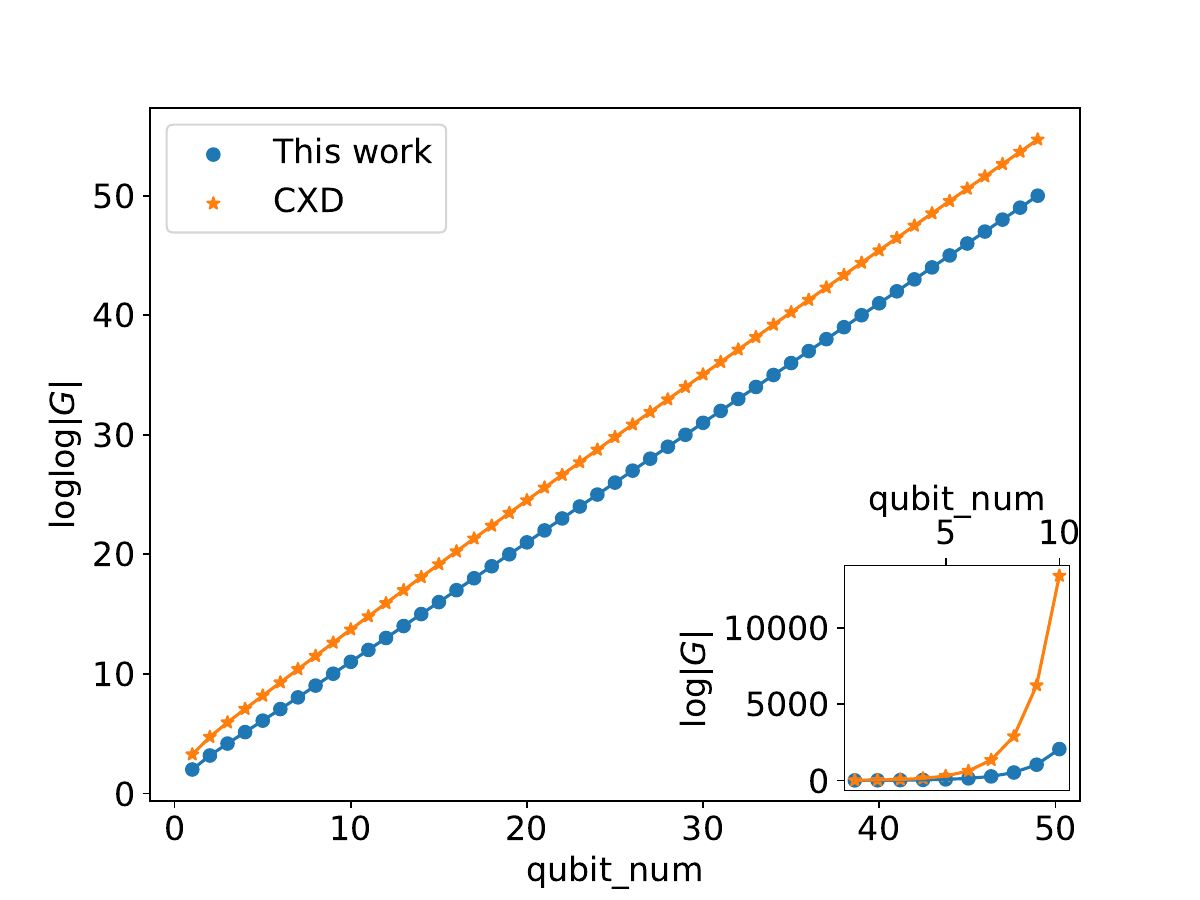}
\caption{Size comparison between the twirling groups for $C^nZ$ in our method and in~\cite{Cross2016dihedral}, associated with blue and orange patterns, respectively. $\abs{\textsf{G}}$ represents the group size. If taking a double logarithmic scale, both two kinds of groups increased nearly linearly with respect to the qubit number, but the CNOT dihedral group increases faster. The smaller figure demonstrates the results in the logarithm scale.}
\label{fig:card}
\end{figure}


Below, we discuss the hierarchy of twirling groups, which may be additionally interesting. In the main text, we have mentioned that by choosing the twirled gate to be $C^nZ_m$ with increasing $n$ and $m$, the size of the twirling group and the computational complexity of the group multiplication would keep increasing. The two complexities are closely related to the classical simulability of the twirling group, and the hierarchy of the groups actually forms a computational complexity hierarchy. The results may be useful in other fields like quantum complexity theory and quantum simulation.

Here, we discuss the hierarchy of groups themselves rather than the hierarchy of their sample and computational complexities. Specifically, we focus on the group of $\langle C^nZ, C^{n-2}Z, \cdots, CZ, X, Z_{2^k}\rangle$. Starting from Pauli group $\langle X, Z\rangle $, there are two ways to build up the hierarchy. One way is adding the number of sides of dihedral groups, i.e., increasing the number $k$ in $\langle X, Z_{2^k}\rangle$. The other way is adding multi-qubit controlled $Z$ gates $C^n Z$ or even adding multi-qubit controlled $X$ gates $C^n X$. Interestingly, $C^n Z$ can be contained in the CNOT-dihedral group $\langle CX, X, Z_{2^k}\rangle$ by choosing $k = n+1$~\cite{Cross2016dihedral}. The hierarchy is summarized below.
\begin{align}
\langle C^{n-1}Z, C^{n-2}Z, \cdots, CZ, X, Z_{2^k}\rangle &\leq \langle C^{n-1}Z, C^{n-2}Z, \cdots, CZ, X, Z_{2^{k+1}}\rangle \cdots \leq \langle C^{n-1}Z, C^{n-2}Z, \cdots, CZ, X, Z(\theta)\rangle \cdots;\\
\langle C^{n-1}X, X, Z_{2^k}\rangle &\leq \langle C^{n-1}X, X, Z_{2^{k+1}}\rangle \cdots \leq \langle C^{n-1}X, X, Z(\theta)\rangle \cdots \leq \mathrm{CRU}_n;\\
\langle X, Z_{2^k}\rangle &\leq \langle CZ, X, Z_{2^k}\rangle \leq \langle CCZ, CZ, X, Z_{2^k}\rangle \cdots;\\
\langle X, Z_{2^k}\rangle &\leq \langle CX, X, Z_{2^k}\rangle \leq \langle CCX, X, Z_{2^k}\rangle \cdots \leq \mathrm{CRU};\\
\langle C^{k-1}Z, C^{k-2}Z, \cdots, CZ, X, Z_{2^k}\rangle &\leq \langle CX, X, Z_{2^k}\rangle.
\end{align}
Here, $\mathrm{CRU}_n$ represents the CRU on $n$ qubits and $\mathrm{CRU}$ represents $\cup_{n} \mathrm{CRU}_n$. As shown in the above equations, the upper limit of the hierarchy is CRU. The key point is that $\langle C^{n-1}X, X\rangle$ contains all permutations over the computational bases $\{0,1\}^n$ based on Lemma~\ref{lemma:permutation}. Thus, $\langle C^{n-1}X, X, Z(\theta)\rangle$ contains all quantum gates like $\Pi W$ where $\Pi$ is an arbitrary permutation gate and $W$ is an arbitrary diagonal gate. This is exactly the necessary and sufficient condition for CRU. Also, from the group hierarchy, it is clearer that the twirling group of our protocol is smaller than that in~\cite{Cross2016dihedral} for multi-qubit controlled-$Z$ gates $C^nZ$. It is straightforward by investigating $\langle C^{n-1}Z, C^{n-2}Z, \cdots, CZ, X, Z\rangle \leq \langle C^{n-1}Z, C^{n-2}Z, \cdots, CZ, X, Z_{2^k}\rangle \leq \langle CX, X, Z_{2^k}\rangle$.


\section{Random compiling}\label{appendsc:RC}
In this part, we discuss the application of our results in random compiling. Recall that in the task of random compiling, we need to turn the noisy quantum gate $\widetilde{\mathcal{U}} = \mathcal{U} \Lambda$ to $\mathcal{U} \Lambda_G$, where the twirled noise channel $\Lambda_G$ is supposed to be a Pauli channel. Here, we consider a more general case than that in the main text. Assuming we implement a quantum circuit, $\widetilde{C} = \cdots \widetilde{\mathcal{U}}_2 \widetilde{\mathcal{U}}_1 = \cdots \mathcal{U}_2 \Lambda_2 \mathcal{U}_1 \Lambda_1$, as shown in Fig.~\ref{fig:RCappend}(a), the task of random compiling is tailoring $\widetilde{C}$ into $ \cdots \mathcal{U}_2 \Lambda_{2G} \mathcal{U}_1 \Lambda_{1G}$ as shown in Fig.~\ref{fig:RBappend}(c). The quantum gates $U_i, i\in \mathbb{Z}_+$ are tailored gates and, in general, are very noisy in experiments. They are always non-local, acting on several qubits. To tailor them, we add twirling gates $G_i, G_i'$ beside them as shown in Fig.~\ref{fig:RCappend}(b). The twirling gates are always noiseless and more easily implementable than the tailored gates.

\begin{figure}[htbp!]
\centering
\includegraphics[width=.5\textwidth]{./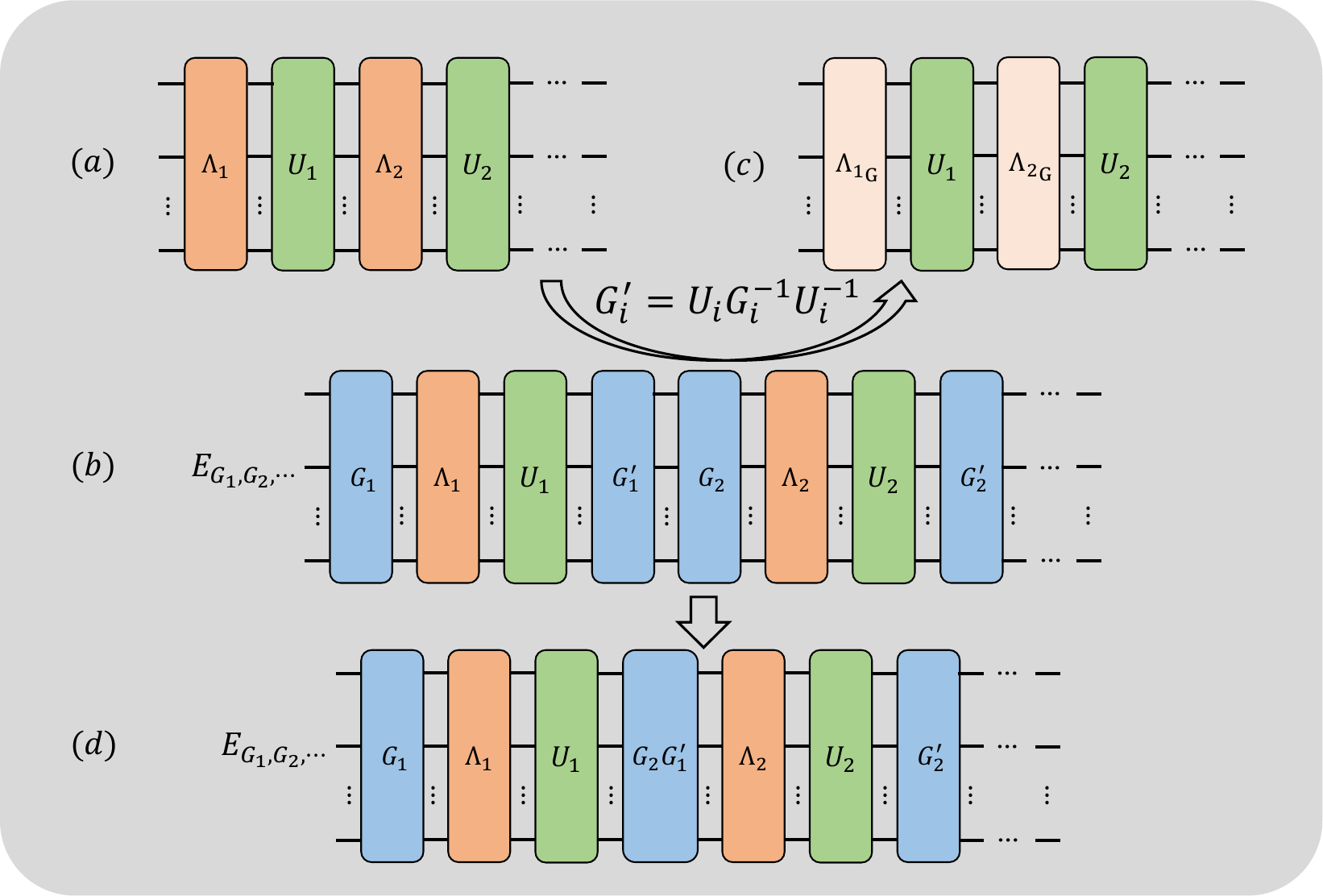}
\caption{Random compiling for $U_1$, $U_2$, $\cdots$, by inserting twirling gates $G_i$ and $G'_i = U_iG_i^{-1}U_i^{-1}$ beside $U_i$ with $i\in \mathbb{Z}_+$. $\Lambda_{iG}$ is the $\mathsf{G}$-twirled noise channel required to be a Pauli channel. In reality, $G_i$ and $G'_{i-1}$ are merged and implemented together as quantum gate $G_iG'_{i-1}$.}
\label{fig:RCappend}
\end{figure}

In the original work of~\cite{Wallman2016compiling}, the tailored multi-qubit quantum gates are restricted to be Clifford gates, and the twirling gates are chosen as Pauli gates. Nonetheless, in experiments, some multi-qubit non-Clifford gates, like CS and CCZ, are also important and native. They can be directly realized without gate composition. A random compiling scheme for these non-Clifford gates is conducive to their applications in quantum information tasks. Below, we show that our twirling groups constructed for benchmarking target gate $U$ can also be used to tailor $U$ in the setting of random compiling.

Similar to the discussion in RB, we investigate the requirements of the twirling gates for tailoring a gate set $\mathsf{U} = \{U_1, U_2, \cdots\}$. Note that in Figs.~\ref{fig:RCappend}(a),~\ref{fig:RCappend}(b), and~\ref{fig:RCappend}(c) we have shown the process to tailor $\cdots \mathcal{U}_2 \Lambda_2 \mathcal{U}_1 \Lambda_1$ into $ \cdots \mathcal{U}_2 \Lambda_{2G} \mathcal{U}_1 \Lambda_{1G}$. The key point is inserting twirling gates $G_i$ and $G'_i=U_iG_i^{-1}U_i^{-1}$ beside $U_i$ where $G_i$ is randomly sampled from twirling group $\mathsf{G}$. In reality, the gates $G_i$ and $G'_{i-1}$ are merged and implemented together to reduce quantum resource consumption. Thus, the twirling gates in general belong to $\mathsf{V} = \bigcup_{U\in \mathsf{U}} \mathsf{G}U\mathsf{G}U^{\dagger}$. Thus, in random compiling to tailor gate set $\mathsf{U}$, we should optimize the choice of $\mathsf{G}$ to simplify $\mathsf{V}$ as much as possible. It is equivalent to solving the following question.

\begin{question}\label{ques:RC}
Given a gate set, $\mathsf{U} = \{U_1, U_2, \cdots\}$, minimize $\abs{\mathsf{V}}$, such that $\mathsf{V} = \bigcup_{U\in \mathsf{U}} \mathsf{G}U\mathsf{G}U^{\dagger}$ and the twirling group $\mathsf{G}$ satisfies
\begin{equation}
\Lambda_G = \mathbb{E}_{G\in \mathsf{G}} \mathcal{G} \Lambda \mathcal{G}^{\dagger} \ \text{is a Pauli channel.}
\end{equation}
\end{question}

It can be seen that if the gate set $\mathsf{U}$ only contains a single gate $U$, then the solution to Question~\ref{ques:RB} is at least a feasible solution to Question~\ref{ques:RC}. The solution may not be optimal but will provide an upper bound on $\abs{\mathsf{V}}$. Now we suppose that the gate set $\mathsf{U}$ only contains a multi-qubit controlled phase gate $U=C^nZ_m$. Question~\ref{ques:RC} reduces to finding the smallest set of $\mathsf{G}U\mathsf{G} U^{\dagger}$ while $\Lambda_G$ is a Pauli channel. Similar with the results in RB, if we assume $\mathsf{G}$ belonging to CRU, then $\mathsf{G}U \mathsf{G} U^{\dagger}$ cannot be smaller than $\mathsf{X}U\mathsf{X}U^{\dagger}$ for $U=C^nZ_m$ based on Lemma~\ref{lemma:cosetcardconstraint}. Meanwhile, by choosing $\mathsf{G}$ as Pauli group $\mathsf{P}_n$, we would obtain a solution $\mathsf{V}$ to Question~\ref{ques:RC}, which is $\mathsf{P}_nU\mathsf{P}_n U^{\dagger} = \mathsf{X}U\mathsf{X}U^{\dagger} \mathsf{Z}$. Note that $\log |\mathsf{V}| - \log |\mathsf{X}U\mathsf{X}U^{\dagger}| \leq \log |\mathsf{Z}| = n$. It means that the size of $\mathsf{P}_nU\mathsf{P}_n U^{\dagger}$ would not be much larger than the optimal solution since the optimal solution contains $\mathsf{X}U\mathsf{X}U^{\dagger}$. Then, we found the nearly optimal solution for tailoring a multi-qubit controlled phase gate, $C^nZ_m$, in the setting of random compiling. It is simpler than the twirling group associated with $C^nZ_m$ in the setting of RB. Generally speaking, the requirements for the twirling gates in random compiling are much simpler than those in RB. We leave the problem of tailoring more kinds of quantum gates in the setting of random compiling to the future.

\section{Simulation}\label{appendsc:simulation}
In this section, we present the details of the simulation, including the setting of the noise model, the benchmarking protocol using the proposed twirling groups, and an additional benchmarking result for CS gate, which equals $\begin{pmatrix}
1  & 0\\
0 & i
\end{pmatrix}$. We denote the total number of qubits as $N$, which is set to be $2$ for the CS gate and $n+1$ for the $C^{n}Z$ gate.

\subsection{Error model}
In our simulation, we assume the twirling gates are always noiseless and only consider the noise of the target gate since the noise of twirling groups is easy to measure and eliminate by various methods that we mentioned before.
We generate this noise channel by composing three kinds of noise channels:
\begin{enumerate}[1.]
\item
Depolarize channel $\Lambda_d(\rho) = p\rho + (1-p)\frac{\mathbb{I}}{2^N}$, in which $\mathbb{I}$ represents the identity operator with dimension $2^N\times 2^N$. In our simulation experiments, $p$ is prefixed to be a constant.

\item
Local amplitude damping channel $\Lambda_a$. The amplitude damping channel on qubit $i$, where $1\leq i\leq N$, is defined as
\begin{equation}
\Lambda_a^i(\rho) = K_0^i\rho K_0^{i\dagger} + K_1^i\rho K_1^{i\dagger}.
\end{equation}
Here,
\begin{equation}
K_0^i \equiv \begin{pmatrix}
1 &  0\\
0 & \sqrt{p_i}
\end{pmatrix},
K_1^i \equiv \begin{pmatrix}
0 & \sqrt{1-p_i} \\
0 & 0
\end{pmatrix}.
\end{equation}
The parameter $p_i$ denotes the noise strength of the amplitude damping channel and is uniformly and randomly sampled from $[0, 0.02]$. Then, the local amplitude damping noise on $N$ qubits is defined as $\Lambda_a = \bigotimes_{i=1}^N \Lambda_a^i$.

\item
Unitary channel $\Lambda_u$. We consider that there exist two types of unitary noise for the target gate at the same time. One is SWAP-coupling noise, and the other is multi-Z-coupling noise. The SWAP-coupling noise is defined as
\begin{equation}
U_{\text{SWAP}}\equiv e^{i \sum_{1 \leq j < k \leq N} \alpha_{jk}\text{SWAP}_{jk}},
\end{equation}
where $\text{SWAP}_{jk}$ denotes the SWAP gate between qubits $j$ and $k$. The parameter $\alpha_{jk}$ is used to adjust the noise strength of SWAP-coupling noise and is uniformly and randomly sampled from $[0, 0.01]$. The ZZ-coupling noise is defined as
\begin{equation}
U_{\text{ZZ}}\equiv e^{i \sum_{z\in \{0,1\}^N, \abs{z}\geq 2} \beta_z \ketbra{z}}.
\end{equation}
The parameter $\beta_{z}$ represents the coupling strength and is uniformly and randomly sampled from $[0, 0.1]$. This noise records the phase fluctuation of the computational basis with a weight of no less than 2. The total unitary noise is set as $U = U_{\text{ZZ}}U_{\text{SWAP}}$, and the corresponding channel representation is $\Lambda_u$.
\end{enumerate}
Based on the three kinds of noise, the noise of the target gate is generated by $\Lambda = \Lambda_u \circ \Lambda_a \circ \Lambda_d$.

Besides the noise of the gate, we also set errors for state preparation and measurement (SPAM). In our simulation experiments, the initial states only contain $\ket{0}^{\otimes N}$ and $\ket{+}^{\otimes N} = H^{\otimes N} \ket{0}^{\otimes N}$ where $H$ is the Hadamard gate. In the simulation, we suppose the Hadamard gate is noiseless, and the error of the initial state only comes from the noise when preparing $\ket{0}^{\otimes n}$. We assume that on each qubit, the state $\ket{0}$ would turn into $\ket{1}$ with probability 0.02, resulting from the thermal excitation. We take the measurement to be noiseless as there is always a measurement error correction in experiments.

\subsection{Benchmarking protocol and additional simulation results}
Below, we present the benchmarking protocol for CCZ and CS with the twirling group $\text{CZD} = \langle CZ, X, S\rangle$. The benchmarking protocol for $C^nZ$ is almost the same except for substituting twirling group $\langle CZ, X, S\rangle$ with $\langle C^{n-1}, \cdots, CZ, X, S\rangle$.

To help readers more deeply understand the choice of the twirling groups and the benchmarking protocols, we take CS gate for an example and present the Liouville representation of its twirled noise channel under several different twirling groups, including Pauli group $\mathsf{P} = \langle X, Z \rangle$, CZ Pauli group $\text{CZP} = \langle CZ, X, Z \rangle$, CZ dihedral group $\text{CZD} = \langle CZ, X, S \rangle$, and CNOT dihedral group $\text{CXD} = \langle CX, X, T \rangle$. The results are shown in Figs.~\ref{fig:channelp},~\ref{fig:channelczp},~\ref{fig:channelcz}, and~\ref{fig:channelcx}.

\begin{figure}[htbp!]
\centering
\includegraphics[width=.85\textwidth]{./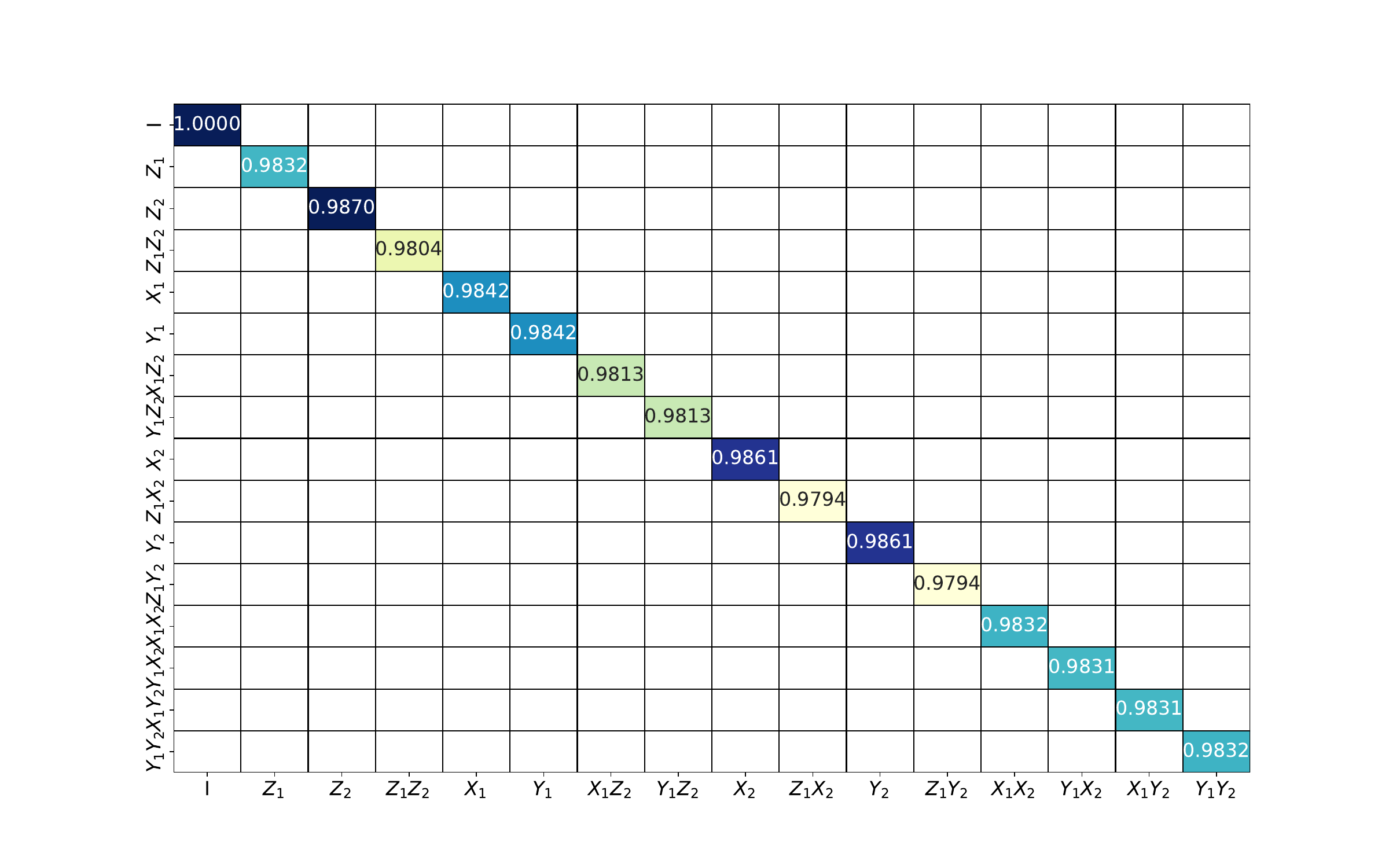}
\caption{Pauli Liouville representation of $\Lambda_{\mathsf{P}}$ where $\Lambda$ is the noise channel of CS gate and $\mathsf{P}$ denotes the Pauli group. The values in vacant squares are all 0 and we omit them. The two axes record the bases of the Pauli-Liouville representation.}
\label{fig:channelp}
\end{figure}

\begin{figure}[htbp!]
\centering
\includegraphics[width=.85\textwidth]{./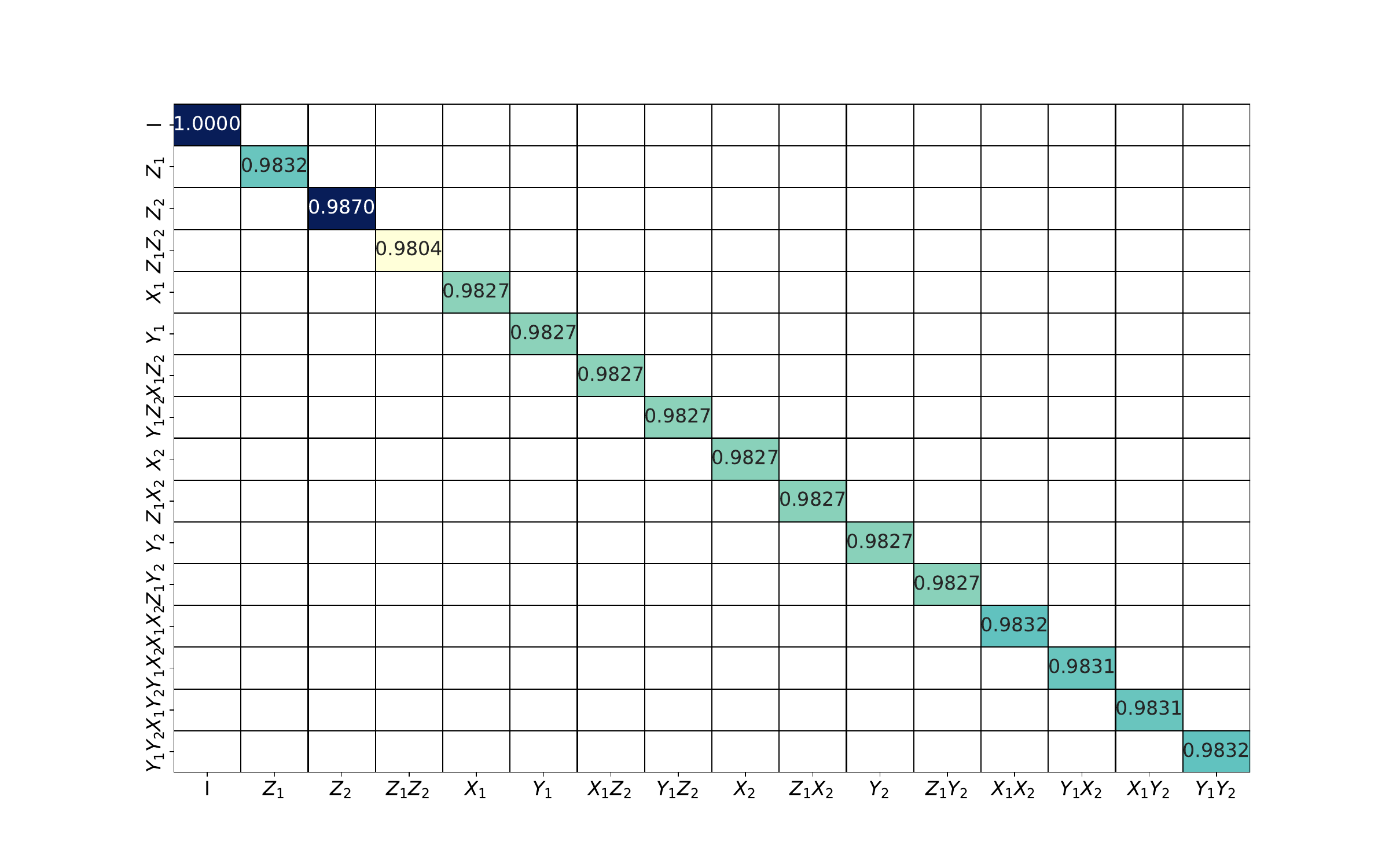}
\caption{Pauli Liouville representation of $\Lambda_{\textsf{CZP}}$ where $\Lambda$ is the noise channel of CS gate and \textsf{CZP} denotes the twirling group $\langle CZ, Z, X\rangle$. The values in vacant squares are all 0, and we omit them. Note that the elements corresponding to $X_1X_2$ and $Y_1Y_2$ are the same, and the elements corresponding to $X_1Y_2$ and $X_1Y_2$ are the same. But the two values are different. Generally, the elements corresponding to $X_1$ and $Y_1$ are different, and the elements corresponding to $X_2$ and $Y_2$ are also different. However, in this case, the noise channel is very special, and the difference can hardly be seen by accident.}
\label{fig:channelczp}
\end{figure}

\begin{figure}[htbp!]
\centering
\includegraphics[width=.85\textwidth]{./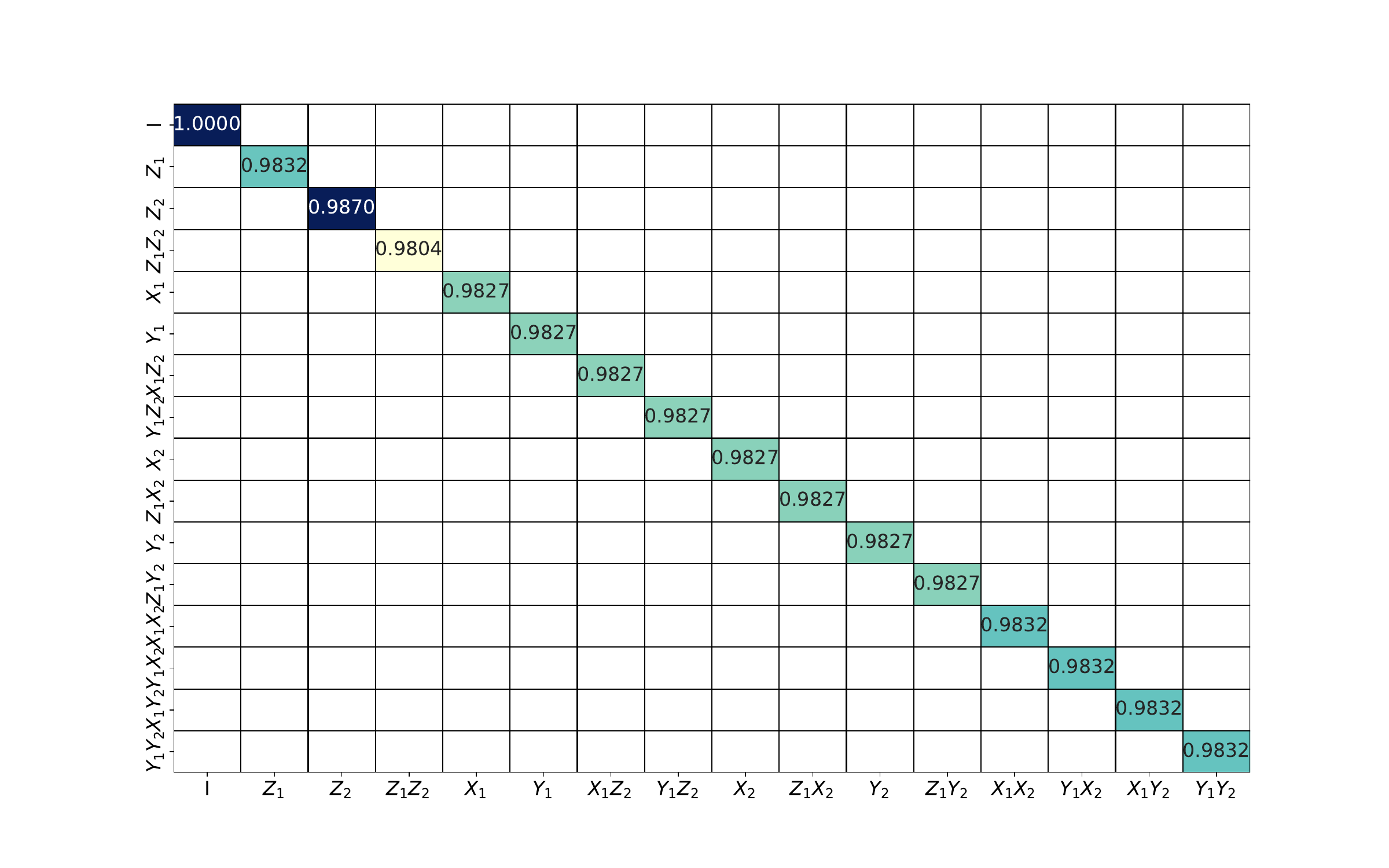}
\caption{Pauli Liouville representation of $\Lambda_{\textsf{CZD}}$ where $\Lambda$ is the noise channel of CS gate and \textsf{CZD} denotes the twirling group $\langle CZ, Z, S\rangle$. The values in vacant squares are all 0, and we omit them. The matrix elements corresponding to $X_a\mathsf{Z}$ are all averaged respectively for $X_a\in \{X_1, X_2, X_1X_2\}$.}
\label{fig:channelcz}
\end{figure}

\begin{figure}[htbp!]
\centering
\includegraphics[width=.85\textwidth]{./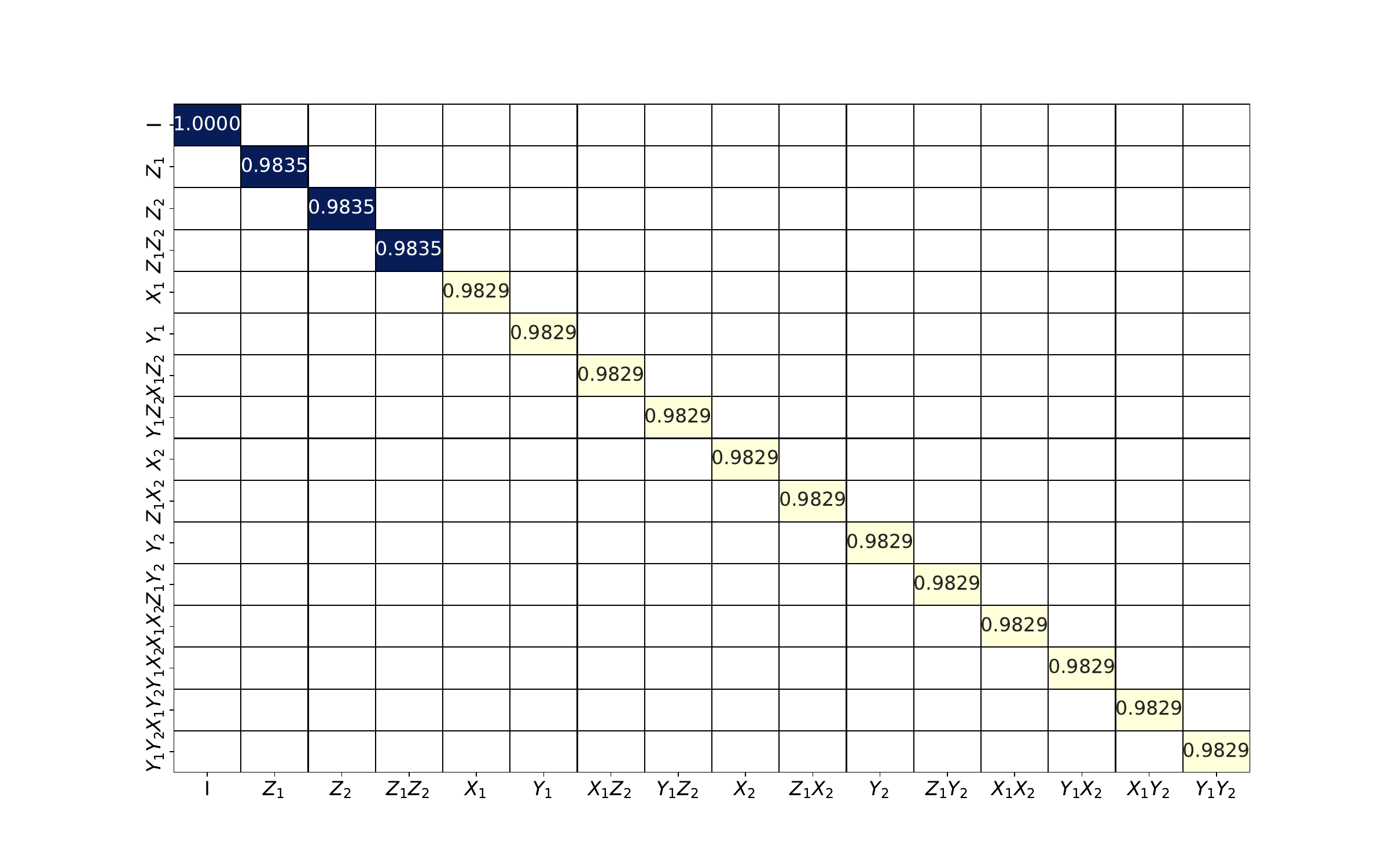}
\caption{Pauli Liouville representation of $\Lambda_{\textsf{CXD}}$ where $\Lambda$ is the noise channel of CS gate and \textsf{CXD} denotes the twirling group $\langle CX, X, T\rangle$. The values in vacant squares are all 0, and we omit them. The second to the fourth diagonal elements are averaged, and the last 12 diagonal elements are also averaged.}
\label{fig:channelcx}
\end{figure}

The matrix elements of the original noise channel $\Lambda$ of $CS$ gate in the Pauli-Liouville representation are all non-zero and contain complex numbers. After twirled by the Pauli group, there are only diagonal terms left to be nonzero, as shown in Fig.~\ref{fig:channelp}. If we use a larger twirling group, the CZ Pauli group, for any $X_a\in \{X_1, X_2, X_1X_2\}$, the matrix elements corresponding to $X_a$ and $X_aZ_1Z_2$ would be averaged, and the elements of $X_aZ_1$ and $X_aZ_2$ would be also averaged. The matrix elements of $\Lambda_{\textsf{CZP}}$ are shown in Fig.~\ref{fig:channelczp}. Its diagonal blocks are associated with the bases set in $\{ \{\mathbb{I}\}, \{Z_1\}, \{Z_2\}, \{Z_1Z_2\}, \{X_aZ_1, X_aZ_2\}, \{X_a, X_aZ_1Z_2\} | X_a\in \{X_1, X_2, X_1X_2\}\}$. Unfortunately, this group does not satisfy $\Lambda_{\textsf{CZP}}\mathcal{U} = \mathcal{U}\Lambda_{\textsf{CZP}}$ where $U$ is the target gate CS. The Pauli Liouville representation of the CS gate is shown in Fig.~\ref{fig:mapcs}. The diagonal blocks of the CS gate are associated with the bases set in $\{ \{\mathbb{I}\}, \{Z_1\}, \{Z_2\}, \{Z_1Z_2\}, \{X_a, X_aZ_1, X_aZ_2, X_aZ_1Z_2\} | X_a\in \{X_1, X_2, X_1X_2\}\}$ and are inconsistent with that of $\Lambda_{\textsf{CZP}}$. If we further use a bit larger twirling group, CZ dihedral group, the matrix elements corresponding to $X_a\mathsf{Z}$ would be all averaged. Thus, the diagonal blocks of $\Lambda_{\textsf{CZD}}$ are consistent with that of $\mathcal{U}$ and $\Lambda_{\textsf{CZD}}\mathcal{U} = \mathcal{U}\Lambda_{\textsf{CZD}}$ holds. Using the CZ dihedral group for twirling, we can obtain $\Lambda_{\textsf{CZD}}^m$. Combining $Z$-basis and $X$-basis measurements, we can extract the diagonal terms of $\Lambda_{\textsf{CZD}}^m$ and hence obtain the diagonal terms of  $\Lambda_{\textsf{CZD}}$ via fitting. The full protocol is shown in detail in Box~\ref{box:CnZmRB}. The situation for the CCZ gate is similar. Although CZ Pauli group $\langle CZ, Z, X\rangle$ can make $\Lambda_{\textsf{CZP}}\mathcal{U} = \mathcal{U}\Lambda_{\textsf{CZP}}$ for $U=\text{CCZ}$, the number of different parameters of $\Lambda_{\textsf{CZP}}$ is $3*(2^N-1)$ and is more than $2*(2^N-1)$. The parameters cannot be extracted only with $Z$-basis and $X$-basis measurements. On the other hand, the number of parameters of $\Lambda_{\textsf{CZD}}$ is equal to $2*(2^N-1)$. The parameters can be directly extracted only with $Z$-basis and $X$-basis measurements. Hence, we utilize the CZ dihedral group for tailoring CCZ to simplify the SPAM settings.\\

\begin{mybox}{box:CnZmRB}{Procedures for benchmarking CCZ or CS with CZ dihedral group}
\begin{enumerate}
\item
First initialize the state, $\ket{\psi_0} = \ket{0}^{\otimes N}$ or $\ket{\psi_1} = \ket{+}^{\otimes N}$ where $N$ is the number of qubits. $N=2$ for CS gate and $N=3$ for CCZ gate.

\item
Choose a positive integer, $M$, and choose two sets of positive integers $\{m_1, m_2, ..., m_M\}$ and $\{K_1, K_2, ..., K_M\}$. Here, $\{m_1, m_2, ..., m_M\}$ is the set of circuit depths and for $1\leq i\leq m$, $K_i$ is the number of sampled sequences when circuit depth equals $m_i$.

\item
For each integer $1\leq i\leq M$, uniformly and randomly sample $2m_i$ gates from $\langle CZ, S, X\rangle$ for $K_i$ times, which we denote $\{G_{j,1}, G_{j,2}, \cdots, G_{j,2m_i}\}$, $1\leq j\leq K_i$.

\item
For each integer $1\leq i\leq M$ and $1\leq j\leq K_i$, implement gate sequence
\begin{equation}
\widetilde{S}(j, m_i) = \widetilde{U}_{inv}\prod_{t=1}^{m_i} U^{\dagger}G_{j, 2t}UG_{j, 2t-1},
\end{equation}
where $\widetilde{\cdot}$ represents the noisy version of the quantum gate and $U_{inv} = (\prod_{t=1}^{m_i} U^{\dagger}G_{j, 2t}UG_{j, 2t-1})^{-1}$. $U$ is the target gate CCZ or CS.

\item
For initial state $\ket{\psi_0} = \ket{0}^{\otimes N}$, measuring all observables from $\{\mathbb{I}, Z\}^{\otimes N}$ of the final state via $Z$-basis measurement. For initial state $\ket{\psi_1} = \ket{+}^{\otimes N}$, measuring all observables from $\{\mathbb{I}, X\}^{\otimes N}$ of the final state via $X$-basis measurement. That is, for each $Q_k \in \{\mathbb{I}, Z\}^{\otimes N}$, estimate
\begin{equation}
f_{Z}(j, m_i, k) = \tr(\widetilde{Q}_k \widetilde{S}(j, m_i)(\widetilde{\rho}_0)).
\end{equation}
For each $P_k \in \{\mathbb{I}, X\}^{\otimes N}$, estimate
\begin{equation}
f_{X}(j, m_i, k) = \tr(\widetilde{P}_k \widetilde{S}(j, m_i)(\widetilde{\rho}_1)).
\end{equation}
Here, $\widetilde{\rho}_0$ and $\widetilde{\rho}_1$ are noisy versions of $\ket{0}^{\otimes N}$ and $\ket{+}^{\otimes N}$, respectively.

\item
Average the results of different gate sequences and obtain
\begin{align}
f_Z(m_i, k) &= \frac{1}{K_i}\sum_{j=1}^{K_i}f_{Z}(j, m_i, k)\\
f_X(m_i, k) &= \frac{1}{K_i}\sum_{j=1}^{K_i}f_{X}(j, m_i, k).
\end{align}

\item
For each $Q_k$, fit $f_Z(m_i, k)$ to function
\begin{equation}
f(m_i) = A\lambda_{Z,k}^{m_i}
\end{equation}
and obtain $\lambda_{Z,k}$. For each $P_k$, fit $f_X(m_i, k)$ to function
\begin{equation}
f(m_i) = A\lambda_{X,k}^{m_i}
\end{equation}
and obtain $\lambda_{X,k}$.

\item
Estimate the fidelity of the target gate via
\begin{equation}
F = \frac{\sum_k\lambda_{Z,k} + 2^N(\sum_k\lambda_{X,k} - 1)}{4^N}.
\end{equation}

\item
If one wants to separate further the noise of the twirling gates and the target gate, execute the following step. Replace the target gate with identity and repeat the above processes to estimate the fidelity of the twirling groups, $F_G$. Then, estimate the fidelity of $U$ with $F$ and $F_G$ by the technique of interleaved RB~\cite{Magesan2012interleavedRB}. In our simulation, we omit this step by assuming the twirling gates are noiseless. This assumption is also taken for protocols using the CNOT dihedral group.
\end{enumerate}
\end{mybox}

Note that in the above benchmarking procedure, we use the circuit structure in~\cite{Zhang2023Scalable} instead of~\cite{Erhard2019CB}. That means we implement random twirling gates interleaved with $U$ and $U^{\dagger}$ instead of being just interleaved with $U$. For $C^nZ$ gate, $U=U_{inv}$ and this modification does not influence anything. For more general case that $U$ is $C^nZ_m$ gate like CS gate, under a mild assumption that $U$ and $U^{\dagger}$ has the same noise $\Lambda$, the above procedure can estimate the fidelity of $\sqrt{\mathcal{U}^{\dagger}\Lambda_G\mathcal{U}\Lambda_G}$, which is equal to the fidelity of $\Lambda$. In this modified procedure, the circuit depth only needs to be chosen as the multiples of 2. While within the circuit structure in Fig.~\ref{fig:RBappend}, just like~\cite{Erhard2019CB}, the circuit depth needs to be chosen as the multiples of the order of the target gate to ensure the inverse gate belonging to the twirling group. For instance, the order of $C^nZ_m$ is $m$ and can be large for $C^nZ_m$ with large $m$. In this case, the modified benchmarking procedure in Box~\ref{box:CnZmRB} can offer an advantage of shorter circuit depth in circuit implementation. One can also use the circuit structure in Fig.~\ref{fig:RBappend} to benchmark the channel.

If using CNOT dihedral group for twirling, the twirled noise channel $\Lambda_{\textsf{CXD}}$ would be more symmetric than $\Lambda_{\textsf{CZD}}$ and $\Lambda_{\textsf{CXD}}\mathcal{U} = \mathcal{U}\Lambda_{\textsf{CXD}}$ also holds. The elements corresponding to $Z_1$, $Z_2$, and $Z_1Z_2$ will be averaged and the elements corresponding to $\{X_a\mathsf{Z}| X_a\in \{X_1, X_2, X_1X_2\}\}$ will also be averaged. Thus, $\Lambda_{\textsf{CXD}}$ only has two parameters as shown in Fig.~\ref{fig:channelcx}. With the protocol in~\cite{Cross2016dihedral}, one can obtain these two parameters. Note that in~\cite{Cross2016dihedral}, there are two ways to extract the two parameters. One way is extracting the two parameters with double exponential fitting, which performs badly, and we do not use it. We simulate the other way that using two SPAM settings extracts the parameters:
\begin{enumerate}[a.]
\item \label{item:sma}
preparing and measuring $\ket{0}^{\otimes N}$;
\item \label{item:smb}
preparing and measuring $\ket{+}^{\otimes N}$,
\end{enumerate}
respectively. Within the first SPAM setting, one can nearly get $\bra{0}^{\otimes N}\Lambda^m_{\textsf{CXD}}\ket{0}^{\otimes N} \approx A\lambda_Z^m+B$ and obtain $\lambda_Z = \frac{\sum_k \lambda_{Z,k}-1}{2^N-1}$. Here, $\lambda_{Z,k}$ is the diagonal term of $\Lambda_{\textsf{CZD}}$ mentioned in Box~\ref{box:CnZmRB}. Similarly, with the second SPAM setting, one can nearly get $\bra{+}^{\otimes N}\Lambda^m_{\textsf{CXD}}\ket{+}^{\otimes N} \approx A\lambda_X^m+B$ and obtain $\lambda_X = \frac{\sum_k \lambda_{X,k}-1}{2^N-1}$. One can also use the state preparation and measurement settings and the post-processing methods in Box~\ref{box:CnZmRB} to obtain these two parameters.

Below, we show the simulation result of CS gate as a supplement. The result is similar to that of CCZ and $C^nZ$ gates in the main text, as shown in Fig.~\ref{fig:CS}. The result indicates that the benchmarking protocol with the CZ dihedral group can perform as well as that with the CNOT dihedral group if using the procedure in Box~\ref{box:CnZmRB}. Both two are better than the protocol using the CNOT dihedral group using SPAM settings~\ref{item:sma} and~\ref{item:smb}. It is reasonable as the protocol in Box~\ref{box:CnZmRB} utilizes all the bases of the measurements while settings~\ref{item:sma} and~\ref{item:smb} are only involved with one basis in one measurement.

\begin{figure}[htbp!]
\centering
\includegraphics[width=.5\textwidth]{./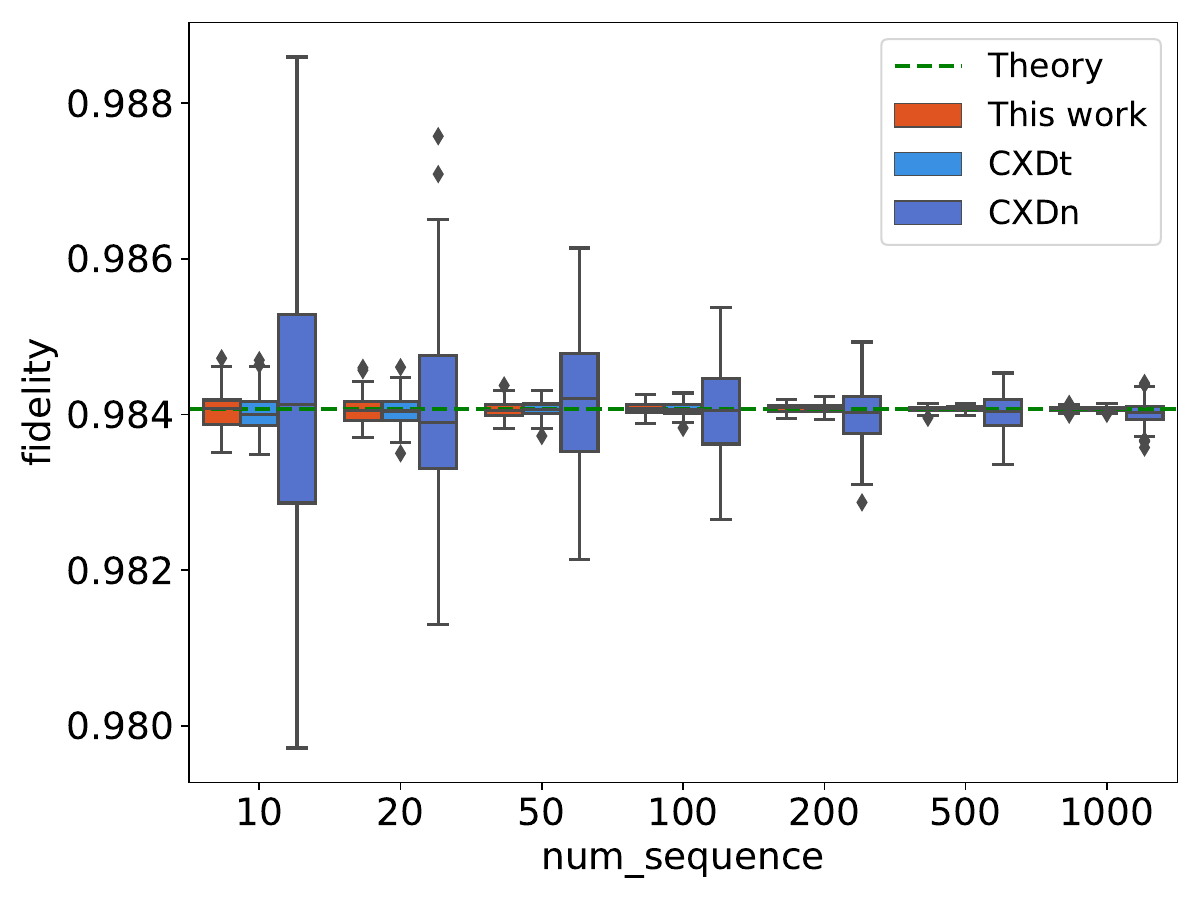}
\caption{Benchmarking results for CS with CZ dihedral group $\langle CZ, X, S\rangle$ and CNOT dihedral group $\langle CX, X, T\rangle$. There are two processing procedures for the CNOT dihedral group. The original procedure in~\cite{Cross2016dihedral}, or the procedure with SPAM settings~\ref{item:sma} and~\ref{item:smb}, is denoted as CXDn. The procedure using settings in Box~\ref{box:CnZmRB} is denoted as CXDt. The green dashed line is the theoretical value of the noise channel fidelity, and the other three box plots are the estimated fidelities of 100 groups. Each group employs circuits with depth list $\{m_1, m_2, ..., m_M\} = \{3, 6,\cdots, 30\}$. For each depth, we take the same number of sampled sequences, which means $K_1 = K_2 = ... = K_M$. The horizontal axis gives the value of $K_i, 1\leq i\leq M$.}
\label{fig:CS}
\end{figure}

\bibliographystyle{apsrev4-1}

\bibliography{bibDihedral}


\end{document}